\title{Roots in the semiring of finite deterministic dynamical systems}
\author{
François Doré\inst{1}
\and
Kévin Perrot\inst{2}
\and
Antonio E. Porreca\inst{2}
\and\\
Sara Riva\inst{3}
\and
Marius Rolland\inst{2}
}
\institute{
Université Côte d'Azur, CNRS, UMR 7271 I3S, France
\and
Aix-Marseille Université, CNRS, UMR 7020 LIS, Marseille, France
\and
Univ. Lille, CNRS, Centrale Lille, UMR 9189 CRIStAL, Lille, France
}
\authorrunning{F. Doré \and K. Perrot \and A.E. Porreca \and S. Riva \and M. Rolland}
\date{March 2024}
\begin{document}

\maketitle

    \begin{abstract}
Finite discrete-time dynamical systems (FDDS) model phenomena that evolve deterministically in discrete time. It is possible to define sum and product operations on these systems (disjoint union and direct product, respectively) giving a commutative semiring. This algebraic structure led to several works employing polynomial equations to model hypotheses on phenomena modelled using FDDS.
To solve these equations, algorithms for performing the division and computing $k$-th roots are needed.
In this paper, we propose two polynomial algorithms for these tasks, under the condition that the result is a connected FDDS. This ultimately leads to an efficient solution to equations of the type $AX^k=B$ for connected $X$.
These results are some of the important final steps for solving more general polynomial equations on FDDS.
    \end{abstract}
    
    \keywords{discrete dynamical systems, root of graph direct product}
    
	\section{Introduction}
	Finite discrete-time dynamical systems (FDDS) are pairs $(X,f)$ where $X$ is a finite set of states and $f: X \to X$ is a transition function (where no ambiguity arises, we will usually denote $(X,f)$ simply as $X$).
	These systems emerge from the analysis of concrete models such as Boolean networks \cite{gershenson2004random_bn,automata_book}
	and are applied to biology \cite{thomas1990biological_feedback,thomas1973genetic_control_circuits,bernot2013computational_biology} to represent, for example, genetic regulatory networks or epidemic models. 
    We can find them also in chemistry \cite{reaction_systems}, to represent the evolution over discrete time of chemical reactions, or information theory \cite{gadouleau2011graph_entropy}.
	
    We can identify dynamical systems with their transition graph, which have uniform outgoing degree one (these are also known as functional digraphs). Their general shape is a collection of cycles with a finite number of directed trees (with arcs pointing towards the root, \ie in-trees) anchored to them by the root. The nodes inside the cycles are periodic states, while the others are transient states.
	
	The set $(\mathbb{D}, +, \times)$ of FDDS taken up to isomorphism with the disjoint union as a \emph{sum} operation (corresponding to the alternative execution of two systems) and the \emph{direct product} \cite{hammack2011graph_product_book} (corresponding to synchronous execution) is a commutative semiring \cite{dorigatti2018polynomial}. 
	However, this semiring is not factorial, \ie a FFDS admits, in general, multiple factorizations into irreducibles.
	For this reason, the structure of product is more complex compared to other semirings such as the natural numbers, and its understanding remains limited.
    We are still unable to characterize or efficiently detect the FDDS obtained by parallel execution of smaller FDDS.  
    
    Some literature analyzes this problem limited to periodic behaviours, \ie to FDDS with permutations as their transition function \cite{dennunzio2019solving,formenti2021mdds,dennunzio2023dds_journal}. 
    Studying these restricted FDDS is justified by the fact that they correspond to the stable, asymptotic behaviour of the system.
    However, transient behaviour is more vast and various when modelling phenomena such as those from, for example, biology or physics.
    FDDS with a single fixed point have also been investigated \cite{gadouleau2022factorisation_dds} focusing more on the transient behaviours. 
    Nevertheless, we cannot investigate general FDDS through a simple combination of these two techniques. 
    
    A direction for reducing the complexity of the decomposition problem is finding an efficient algorithm for equations of the form $AX = B$, \ie for dividing FDDS.
	The problem is trivially in $\NP$, but we do not know its exact complexity (\emph{e.g.}, $\NP$-hard, $\GI$, or $\P$).
	However, \cite{dennunzio2024resolution_permutation} proved that we can solve these equations in polynomial time if $A$ and $B$ are certain classes of \emph{permutations}, \ie FDDS without transient states. Nevertheless, the complexity of more general cases is unknown even for permutations. 
    
	Another direction is to propose an efficient algorithm for the computation of roots over FDDS. Since \cite{gadouleau2022factorisation_dds}, we are aware of the uniqueness of the solution of $k$-th roots, but once again we do not know the exact complexity of the problem beyond a trivial $\NP$ upper bound.
	
	In this paper, we will exploit the notion of unroll introduced in \cite{gadouleau2022factorisation_dds} to address the division and the root problems in the specific case where $X$ is connected (\ie the graph of $X$ contains just one connected component) and we will also approach the unroll division problem. 
    More precisely, we start by showing that we can compute in polynomial time an FDDS $X$ such that $AX = B$, if any exists \ref{section:division_FDDS}.
    We also show that we can compute in polynomial time, given an FDDS $A$ and a strictly positive integer $k$, a connected FDDS $X$ such that $X^k = A$, if any exists \ref{section:root_FDDS}.
    These two last contributions naturally lead to a solution to the more general equation $AX^k=B$.
    Finally, we prove that the can find in polynomial time an unroll $\forest{x}$ such that $\unroll{A} \forest{x} = \unroll{B}$. Thanks to this result, we will deduce we can find $X$ a FDDS (for two specific form of $X$) such that $AX = B$ \ref{section:divison_unroll}.

		\section{Definitions}
    
    In the following, we will refer to the in-trees constituting the transient behaviour of FDDS just as \emph{trees} for simplicity. An FDDS has a set of weakly connected components, each containing a unique cycle.
    In the following, we will refer to FDDS with only one component as \emph{connected}.
    
	In literature, two operations over FDDS have been considered: the \emph{sum} (the disjoint union of the components of two systems) and the \emph{product} (direct product \cite{hammack2011graph_product_book} of their transition graphs).
    Let us recall that, given two digraphs $A = (V, E)$ and $B= (V', E')$, their product $A\times B$ is a digraph where the set of nodes is $V\times V' $ and the set of edges is $\set{((v, v'), (u, u')) \mid (v, u) \in E, (v', u') \in E'}$.
    When applied to the transition graphs of two connected FDDS with cycle lengths respectively $p$ and $p'$, this operation generates $\gcd(p,p')$ components with cycles of length $\lcm(p,p')$ \cite{dorigatti2018polynomial,hammack2011graph_product_book}.
    
	Let us recall the notion of unroll of dynamical systems introduced in~\cite{gadouleau2022factorisation_dds}. We will denote trees and forests using bold letters (in lower and upper case respectively) to distinguish them from FDDS.
	
	\begin{definition}[Unroll]
		Let $A$ be an FDDS $(X,f)$.
		For each state $u\in X$ and $k \in \mathbb{N}$, we denote by $f^{-k}(u) = \{ v \in X \mid f^k(v) = u \}$ the set of $k$-th preimages of $u$. 
		For each $u$ in a cycle of $A$, we call the \emph{unroll tree of $A$ in $u$} the infinite tree $\tree{t}_u = (V,E)$ having vertices $V = \{(s,k) \mid s \in f^{-k}(u), k \in \mathbb{N}\}$ and edges
		$E = \big\{ \big((v,k),(f(v),k-1) \big) \big\} \subseteq V^2$.
		We call \emph{unroll of $A$}, denoted $\unroll{A}$, the set of its unroll trees.
	\end{definition}

    Unroll trees have exactly one infinite branch on which the trees representing transient behaviour hook and repeat periodically.
    Remark that the forest given by the unroll of a connected FDDS may contain isomorphic trees and this results from symmetries in the original graph. 
    
	This transformation from an FDDS to its unroll has already proved successful in studying operations (particularly the product operation) at the level of transient behaviours. 
    Indeed, the sum (disjoint union) of two unrolls corresponds to the unroll of the sum of the FDDS; formally, $\unroll{A}+\unroll{A'}= \unroll{A+A'}$.
    For the product, it has been shown that it is possible to define an equivalent product over unrolls for which $\unroll{A}\times \unroll{A'}= \unroll{A\times A'}$.
    Here and in the following, the equality sign will denote graph isomorphism.
    
    Let us formally define the product of trees to be applied over the unroll of two FDDS. 
    Since it is known that the product distributes over the different trees of the two unrolls \cite{dore2023decomposition}, it suffices to define the product between two trees. Intuitively, this product is the direct product applied layer by layer.
    To define it, we let $\depth{v}$ be the distance of the node from the root of the tree. 
    
    \begin{definition}[Product of trees]\label{prodintrees} Consider two trees $\tree{t}_1=(V_1,E_1)$ and\linebreak $\tree{t}_2=(V_2,E_2)$ with roots $r_1$ and $r_2$, respectively. Their \emph{product} is the tree $\tree{t}_1 \times \tree{t}_2=(V,E)$ such that 
    $V=\set{(v,u)\in V_1\times V_2 \mid \depth{v}=\depth{u}}$ and
    $E=\set{((v,u),(v',u')) \mid (v,u)\in V, (v,v')\in E_1, (u,u')\in E_2}$.
    \end{definition}

    In the following, we use a total order $\le$ on finite trees introduced in \cite{gadouleau2022factorisation_dds}, which is compatible with the product, that is, if $\tree{t}_1 \le \tree{t}_2$ then $\tree{t}_1 \tree{t} \le \tree{t}_2 \tree{t}$ for all tree $\tree{t}$.
    Let us briefly recall that this ordering is based on a vector obtained from concatenating the incoming degrees of nodes visited through a BFS. During graph traversal, child nodes (preimages in our case), are sorted recursively according to this very order, resulting in a deterministic computation of the vector. 

    We will also need the notion of \emph{depth} for finite trees and forests. The depth of a finite tree is the length of its longest branch. For a forest, it is the maximum depth of its trees.
    In the case of unrolls, which have infinite paths, we can adopt the notion of depth of a dynamical system (that is, the largest depth among the trees rooted in one of its periodic states).
    For an unroll tree $\tree{t}$, its depth is the depth of a connected FDDS $A$ such that $\tree{t} \in \unroll{A}$.
    See Figure~\ref{fig:unroll}.
	\begin{figure}[tb]
    \centering
    \begin{tikzpicture}[xscale=.84,yscale=.9]
    \begin{scope}
        \begin{oodgraph}
        \addcycle[rotation angle=90,xshift=0cm,yshift=1cm,nodes prefix = a]{1};
        \addbeard[attach node = a->1]{1};
        \addbeard[attach node = a->1->1,opening angle=30]{2};
        \addcycle[rotation angle=90,xshift=1cm,yshift=1.5cm,nodes prefix = b]{2};
        \addbeard[attach node = b->1,opening angle=30]{2};
        \addbeard[attach node = b->1->2]{1};
        \addbeard[attach node = b->2]{3};
        \end{oodgraph}
        \node[above right=-.1 of a->1]{$v_0$};
        \node[right=0 of b->1]{$v_1$};
        \node[right=0 of b->2]{$v_2$};
        \node[left=0 of b->1->2->1]{$v_3$};
        \node[scale=2]() at (-1.25cm,2cm){$\mathcal{U}$};
        \draw (-.5,0) to [leftp] (-.5,4);
        \draw (1.75,0) to [rightp] (1.75,4);
        \node[scale=2]at(2.75cm,2cm){$=$};
    \end{scope}
        
    \tikzset{
    cnode/.style={circle,outer sep=0,inner sep=2}
    }
    \def\j{0}
    \def\dcl{-.1}
    
    \begin{scope}[xshift=3.25cm,xscale=.1,yscale=.8]
        \foreach \x/\y [count=\v from 0] in {10.5/0,1.5/1,11.5/1,0.5/2,2.5/2,5.5/2,13.5/2,4.5/3,6.5/3,9.5/3,15.5/3,8.5/4,10.5/4,13.5/4,17.5/4,12.5/5,14.5/5,16.5/5,18.5/5}
        {\node[cnode](\v)at(\x,\y){$\bullet$};}
        \foreach \u/\v in {1/0,2/0,3/1,4/1,5/2,6/2,7/5,8/5,9/6,10/6,11/9,12/9,13/10,14/10,15/13,16/13,17/14,18/14}
        {\draw[ddsedge](\u)--(\v);}
        \foreach \v [count=\i from 0] in {0,2,6,10,14,18}{\node[txtnode,right=\dcl of \v]{$(v_0,\i)$};}
        
        \draw[ddsedge, dashed]($(18)+(-1,1)$)--(18);
        \draw[ddsedge, dashed]($(18)+(1,1)$)--(18);
    \end{scope}

    \begin{scope}[xshift=5.825cm,xscale=.1,yscale=.8]
        \foreach \x/\y [count=\v from 0] in {13.5/0,9.5/1,7.5/1,14.5/1,5.5/2,9.5/2,11.5/2,13.5/2,17.5/2,14.5/3,12.5/3,19.5/3,10.5/4,14.5/4,16.5/4,18.5/4,22.5/4,20.5/5,22.5/5,24.5/5}
        {\node[cnode](\v)at(\x,\y){$\bullet$};}
        \foreach \u/\v in {1/0,2/0,3/0,4/2,5/3,6/3,7/3,8/3,9/8,10/8,11/8,12/10,13/11,14/11,15/11,16/11,17/16,18/16,19/16}
        {\draw[ddsedge](\u)--(\v);}
        \foreach \v [count=\i from 0] in {0,3,8,11,16,19}{\pgfmathsetmacro{\j}{int(Mod(\i,2)+1)}\node[txtnode,right=\dcl of \v]{$(v_\j,\i)$};}
        \node[txtnode,above left=-.05 and -.5 of 4]{$(v_3,2)$};
        \node[txtnode,above left=-.05 and -.5 of 12]{$(v_3,4)$};
        
        \draw[ddsedge, dashed]($(19)+(-3,1)$)--(19);
        \draw[ddsedge, dashed]($(19)+(-1,1)$)--(19);
        \draw[ddsedge, dashed]($(19)+(1,1)$)--(19);
        \draw[ddsedge, dashed]($(19)+(3,1)$)--(19);
    \end{scope}
    
    \begin{scope}[xshift=8.75cm,xscale=.1,yscale=.8]
        \foreach \x/\y [count=\v from 0] in {14.5/0,8.5/1,10.5/1,12.5/1,16.5/1,13.5/2,11.5/2,18.5/2,9.5/3,13.5/3,15.5/3,17.5/3,21.5/3,20.5/4,18.5/4,23.5/4,16.5/5,20.5/5,22.5/5,24.5/5,26.5/5}
        {\node[cnode](\v)at(\x,\y){$\bullet$};}
        \foreach \u/\v in {1/0,2/0,3/0,4/0,5/4,6/4,7/4,8/6,9/7,10/7,11/7,12/7,13/12,14/12,15/12,16/14,17/15,18/15,19/15,20/15}
        {\draw[ddsedge](\u)--(\v);}
        \foreach \v [count=\i from 0] in {0,4,7,12,15,20}{\pgfmathsetmacro{\j}{int(Mod(\i+1,2)+1)}\node[txtnode,right=\dcl of \v]{$(v_\j,\i)$};}
        \node[txtnode,above left=-.05 and -.5 of 8]{$(v_3,3)$};
        \node[txtnode,above left=-.05 and -.5 of 16]{$(v_3,5)$};
        
        \draw[ddsedge, dashed]($(20)+(-2,1)$)--(20);
        \draw[ddsedge, dashed]($(20)+(0,1)$)--(20);
        \draw[ddsedge, dashed]($(20)+(2,1)$)--(20);
    \end{scope}
    \end{tikzpicture}
    \vspace{-1.5em}
    \caption{The unroll $\unroll{A}$ of a disconnected FDDS $A$. Only the first $6$ levels of $\unroll{A}$ are shown. Both the FDDS and its unroll have depth $2$.}
    \label{fig:unroll}
\end{figure}

	We now recall three operations defined in \cite{gadouleau2022factorisation_dds} that will be useful later.
	Given a forest $\forest{f}$,
	we denote by $\dt{\forest{F}}$ the multi-set of trees rooted in the predecessors of the roots of $\forest{F}$.
	Then, we denote by $\rf{\forest{f}}$ the tree such that $\dt{\rf{\forest{f}}} = \forest{F}$. 
	Intuitively, this second operation connects the trees to a new common root.
	Finally, given a positive integer $k$, we denote $\cut{\tree{t}}{k}$ the induced sub-tree of $\tree{t}$ composed by the vertices with a depth less or equal to $k$.
	Let us generalize the same operation applied to a forest $\forest{f} = \tree{t}_{1} + ... + \tree{t}_{n}$ as  $\cut{\forest{F}}{k} = \cut{\tree{t}_{1}}{k} + ... + \cut{\tree{t}_{n}}{k}$.

		\section{Complexity of FDDS division with connected quotient}\label{section:division_FDDS}
	
	In this section we establish an upper bound to the complexity of division over FDDS. 
    More formally,
    our problem is to decide if, given two FDDS $A$ and $B$, there exists a connected FDDS $X$ such that $AX=B$.
    To achieve this, we will initially prove that cancellation holds over unrolls, \ie that $\forest{E}\forest{X} = \forest{E}\forest{Y}$ implies $\forest{X} = \forest{Y}$ for unrolls $\forest{E},\forest{X},\forest{Y}$.
    Later, we will extend the algorithm proposed in \cite[Figure 6] {gadouleau2022factorisation_dds} to handle more general unrolls (rather than just those consisting of a single tree), ultimately leading to our result.
    
    We begin by considering the case of forests containing a finite number of finite trees; we will refer to them as \emph{finite tree forests}. We will later generalise the reasoning to forests such as unrolls.
	
	\begin{lemma}\label{lemme:forest2tree}
		Let $\forest{A}$, $\forest{X}$, and $\forest{B}$ be finite tree forests. Then, $\forest{A}\forest{X} = \forest{B}$ if and only if $\rf{\forest{A}}\rf{\forest{x}} = \rf{\forest{B}}$.
	\end{lemma}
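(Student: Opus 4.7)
\medskip

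\noindent\textbf{Proof plan.}
My plan is to reduce the lemma to the single algebraic identity
\[
\rf{\forest{A}} \times \rf{\forest{X}} \;=\; \rf{\forest{A}\forest{X}},
\]
and then invoke the fact that $\rf{\cdot}$ is a bijection between finite tree forests and finite rooted trees whose inverse is $\dt{\cdot}$ (this is immediate from the definitions, since $\dt{\rf{\forest{F}}}=\forest{F}$). Once the identity above is in hand, both directions of the lemma follow at once: the forward direction by applying $\rf{\cdot}$ to both sides of $\forest{A}\forest{X}=\forest{B}$, and the backward direction by applying $\dt{\cdot}$ to both sides of $\rf{\forest{A}}\rf{\forest{X}}=\rf{\forest{B}}$ and using that $\dt{\rf{\forest{A}\forest{X}}}=\forest{A}\forest{X}$ and $\dt{\rf{\forest{B}}}=\forest{B}$.

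To establish the identity, I would write $\forest{A}=\tree{a}_1+\dots+\tree{a}_m$ and $\forest{X}=\tree{x}_1+\dots+\tree{x}_n$ with roots $r_i^A$ and $r_j^X$, and let $r_A,r_X$ be the fresh roots introduced by $\rf{\cdot}$. By Definition~\ref{prodintrees}, the product $\rf{\forest{A}}\times\rf{\forest{X}}$ pairs nodes layer by layer, so its unique depth-$0$ node is $(r_A,r_X)$ and its depth-$1$ nodes are exactly the pairs $(r_i^A,r_j^X)$ for $1\le i\le m$ and $1\le j\le n$. Since in $\rf{\forest{A}}$ the subtree hanging below $r_i^A$ is $\tree{a}_i$ and similarly in $\rf{\forest{X}}$, and because the product of trees restricts to the product of any pair of subtrees rooted at equal depth, the subtree hanging below $(r_i^A,r_j^X)$ in $\rf{\forest{A}}\times\rf{\forest{X}}$ is precisely $\tree{a}_i\times\tree{x}_j$. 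Therefore
\[
\dt{\rf{\forest{A}}\times\rf{\forest{X}}} \;=\; \sum_{i,j}\tree{a}_i\tree{x}_j \;=\; \forest{A}\forest{X},
\]
and applying $\rf{\cdot}$ to both sides yields the identity.

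The only mildly delicate point is making sure the layer-by-layer pairing really does identify the grandchildren of $(r_A,r_X)$ with the children of the $r_i^A$ in $\tree{a}_i$ paired with the children of $r_j^X$ in $\tree{x}_j$ (and so on recursively), which is exactly what the definition of the product of trees guarantees; everything else is bookkeeping. I do not expect a genuine obstacle here, merely the need to be careful that the distributivity of product over disjoint union of forests (used implicitly to rewrite $\forest{A}\forest{X}$ as $\sum_{i,j}\tree{a}_i\tree{x}_j$) is the one already recorded in the paper via $\unroll{A}\times\unroll{A'}=\unroll{A\times A'}$ and its consequences, which is why restricting to finite tree forests causes no issue.
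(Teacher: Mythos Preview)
Your proposal is correct and follows essentially the same route as the paper: both arguments hinge on the identity $\dt{\rf{\forest{A}}\,\rf{\forest{X}}}=\forest{A}\forest{X}$ (equivalently, your $\rf{\forest{A}}\rf{\forest{X}}=\rf{\forest{A}\forest{X}}$) combined with $\dt{\rf{\forest{F}}}=\forest{F}$. The only difference is that the paper obtains this identity by citing \cite[Lemma~7]{gadouleau2022factorisation_dds}, which states $\dt{\tree{a}\tree{b}}=\dt{\tree{a}}\dt{\tree{b}}$ for finite trees, whereas you re-derive it directly from Definition~\ref{prodintrees}; your version is thus slightly more self-contained but otherwise identical in substance.
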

	
	\begin{proof}
		$(\Leftarrow)$ Assume $\rf{\forest{A}}\rf{\forest{X}} = \rf{\forest{b}}$.
		Then,$\dt{\rf{\forest{A}}\rf{\forest{x}}} = \dt{\rf{\forest{b}}} = \forest{B}$.
		Moreover, since $\rf{\forest{A}}$ and $\rf{\forest{X}}$ are finite trees, by \cite[Lemma 7]{gadouleau2022factorisation_dds} we have:
		\[\dt{\rf{\forest{A}}\rf{\forest{x}}} = \dt{\rf{\forest{A}}}\dt{\rf{\forest{x}}} = \forest{A} \forest{X}.\]
		$(\Rightarrow)$ We can show the other direction by a similar reasoning.
	\end{proof}
	
	Thanks to this lemma, we can generalise Lemma 21 of \cite{gadouleau2022factorisation_dds} as follows.
	
	\begin{lemma}\label{lemme:cancelFiniForest}
		Let $\forest{A}$, $\forest{X}$, and $\forest{Y}$ be finite tree forests. Then $\forest{A}\forest{X} = \forest{A} \forest{Y}$ if and only if $\cut{\forest{X}}{\depth{\forest{A}}} = \cut{\forest{Y}}{\depth{\forest{A}}}$.
	\end{lemma}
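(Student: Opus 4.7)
The strategy is to reduce the forest statement to the single-tree version already known, namely Lemma~21 of~\cite{gadouleau2022factorisation_dds}, via the bridge provided by Lemma~\ref{lemme:forest2tree}.

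First I would apply Lemma~\ref{lemme:forest2tree} twice, once with $\forest{B}=\forest{A}\forest{Y}$ and once with $\forest{B}=\forest{A}\forest{X}$, to get the equivalence
\[
\forest{A}\forest{X} = \forest{A}\forest{Y} \iff \rf{\forest{A}}\,\rf{\forest{X}} = \rf{\forest{A}}\,\rf{\forest{Y}}.
\]
Since $\rf{\forest{A}}$, $\rf{\forest{X}}$, $\rf{\forest{Y}}$ are all \emph{finite trees}, the single-tree cancellation lemma from~\cite{gadouleau2022factorisation_dds} applies and yields
\[
\rf{\forest{A}}\,\rf{\forest{X}} = \rf{\forest{A}}\,\rf{\forest{Y}} \iff \cut{\rf{\forest{X}}}{\depth{\rf{\forest{A}}}} = \cut{\rf{\forest{Y}}}{\depth{\rf{\forest{A}}}}.
\]

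The remaining step is purely structural: I would show that cutting a ``just-rooted'' tree is the same as rooting the cut of the underlying forest, i.e.\ for any finite tree forest $\forest{F}$ and any $k\ge 0$,
\[
\cut{\rf{\forest{F}}}{k+1} = \rf{\cut{\forest{F}}{k}}.
\]
This is immediate from the definitions of $\rf{\cdot}$ and $\cut{\cdot}{\cdot}$: the new root added by $\rf{\cdot}$ sits at depth $0$, so truncating at depth $k+1$ exactly truncates each component of $\forest{F}$ at depth $k$. Applying this with $k=\depth{\forest{A}}$ and using $\depth{\rf{\forest{A}}} = \depth{\forest{A}} + 1$, the equality $\cut{\rf{\forest{X}}}{\depth{\rf{\forest{A}}}} = \cut{\rf{\forest{Y}}}{\depth{\rf{\forest{A}}}}$ becomes $\rf{\cut{\forest{X}}{\depth{\forest{A}}}} = \rf{\cut{\forest{Y}}{\depth{\forest{A}}}}$.

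Finally, I would conclude by noting that $\rf{\cdot}$ is injective on forests (two forests with different multisets of trees produce rooted trees with different multisets of subtrees hanging from the root, hence non-isomorphic trees), so this last equality is equivalent to $\cut{\forest{X}}{\depth{\forest{A}}} = \cut{\forest{Y}}{\depth{\forest{A}}}$, closing the chain of equivalences. The only delicate point is the bookkeeping between $\depth{\forest{A}}$ and $\depth{\rf{\forest{A}}}$ together with the interplay between $\rf{\cdot}$ and $\cut{\cdot}{\cdot}$; everything else is a direct rewriting.
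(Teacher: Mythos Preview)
Your proposal is correct and follows essentially the same route as the paper: reduce to single trees via Lemma~\ref{lemme:forest2tree}, invoke \cite[Lemma~21]{gadouleau2022factorisation_dds}, then undo the rooting using the commutation of $\rf{\cdot}$/$\dt{\cdot}$ with $\cut{\cdot}{\cdot}$ and the identity $\depth{\rf{\forest{A}}}=\depth{\forest{A}}+1$. The only cosmetic differences are that the paper treats the $(\Leftarrow)$ direction directly (nodes of $\forest{X},\forest{Y}$ beyond $\depth{\forest{A}}$ do not affect the product), and it phrases the structural step via $\dt{\cdot}$ rather than via $\rf{\cdot}$ and its injectivity; these are equivalent formulations of the same argument.
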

	
	\begin{proof}
		$(\Leftarrow)$ 
	    By the definition of tree product,
	    all nodes of $\forest{x}$ (resp., $\forest{y}$) of depth larger than $\depth{\forest{a}}$ do not impact the product $\forest{A} \forest{X}$ (resp., $\forest{A} \forest{Y}$). 
		Thus, we have $\forest{A} \forest{X} = \forest{A}\cut{\forest{X}}{\depth{\forest{A}}}$ and $\forest{A} \forest{Y} = \forest{A}\cut{\forest{Y}}{\depth{\forest{A}}}$.
		Since $\cut{\forest{X}}{\depth{\forest{A}}} = \cut{\forest{Y}}{\depth{\forest{A}}}$, we conclude that $\forest{A} \forest{X} = \forest{A} \forest{Y}.$
		
		$(\Rightarrow)$ Suppose $\forest{A}\forest{X} = \forest{A}\forest{Y}$.
		By Lemma \ref{lemme:forest2tree}, we have $\rf{\forest{A}}\rf{\forest{x}} = \rf{\forest{A}} \rf{\forest{y}}$.
		Since  $\rf{\forest{A}}$, $\rf{\forest{X}}$, and $\rf{\forest{y}}$ are finite trees, we deduce \cite[Lemma 21]{gadouleau2022factorisation_dds} 
		\begin{equation}\label{eq:cutx_equals_cuty}
	    \cut{\rf{\forest{x}}}{depth(\rf{\forest{a}})} = \cut{\rf{\forest{y}}}{depth(\rf{\forest{a}})}
		\end{equation}
		For all forest $\forest{F}$ and $d>0$, we have that $\mathcal{D}(\cut{\forest{F}}{d})$ is the multiset containing the subtrees rooted on the predecessors of the roots of  $\cut{\forest{F}}{d}$.
		It is therefore the same multiset as that which is composed of the subtrees rooted on the predecessors of the roots of $\forest{F}$ cut at depth $d-1$. 
		It follows that 
		$\cut{\dt{\forest{F}}}{d-1} = \dt{\cut{\forest{F}}{d}}$.
		In particular, for $\forest{F} = \rf{\forest{X}}$ and $d = \depth{\forest{A}} + 1 = \depth{\rf{\forest{A}}}$, we have \[\dt{\cut{\rf{\forest{X}}}{\depth{\forest{\rf{A}}}}} = \cut{\forest{X}}{\depth{\forest{A}}}.\]
		Likewise, $\dt{\cut{\rf{\forest{y}}} {\depth{\rf{\forest{a}}}}} = \cut{\forest{Y}}{\depth{\forest{A}}}$.
		By applying $\dt{\cdot}$ to both sides of \eqref{eq:cutx_equals_cuty}, we conclude $\cut{\forest{X}}{\depth{\forest{A}}} = \cut{\forest{Y}}{\depth{\forest{A}}}$.
	\end{proof}
	
	Lemma \ref{lemme:cancelFiniForest} is a sort of cancellation property subject to a depth condition.
	The first step to prove cancellation over unrolls is proving the equivalence between the notion of divisibility of unrolls and divisibility over deep enough finite cuts, as explained in the following proposition. 
	
	\begin{prop}\label{prop:divideForestFini}
		Let $A, X$ and $B$ be two FDDS with $\alpha$ equal to the number of unroll trees of $\unroll{B}$.
	    Let $n \ge 2 \, \alpha + \depth{\unroll{B}}$. Then
		\begin{center}
		    $\unroll{A} \unroll{X} = \unroll{B}$ if and only if $\cutUn{A} \cutUn{X} = \cutUn{B}$.
		\end{center}
	\end{prop}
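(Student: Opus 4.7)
The plan is to handle the two directions separately, the forward one being almost immediate and the backward one requiring a periodicity argument tailored to the bound $n \ge 2\alpha + \depth{\unroll{B}}$. As a preliminary, I would establish that the cut operation commutes with the tree product up to depth $n$: for any two forests $\forest{F}, \forest{G}$,
\[ \cut{(\forest{F}\forest{G})}{n} = \cut{\forest{F}}{n}\,\cut{\forest{G}}{n}. \]
This follows directly from the layer-by-layer nature of Definition~\ref{prodintrees}: every pair $(v,u)$ appearing in the product has $\depth{v} = \depth{u}$, so vertices of depth greater than $n$ in either factor can only contribute to depths greater than $n$ in the product, while depths $\le n$ are produced exactly by pairing vertices of depth $\le n$. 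Granting this identity, the forward direction is immediate: cut both sides of $\unroll{A}\unroll{X} = \unroll{B}$ at depth $n$.

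For the backward direction, I would use the same identity to rewrite the hypothesis as $\cut{\unroll{A}\unroll{X}}{n} = \cut{\unroll{B}}{n}$, and since $\unroll{A}\unroll{X} = \unroll{A \times X}$, the problem reduces to showing that two unroll forests of FDDS whose cuts at depth $n$ coincide are actually equal. I would argue this tree by tree, exploiting the fact that every unroll tree has an infinite spine along which the hanging transient subtrees repeat with period equal to the cycle length at its root. In $\unroll{B}$, every cycle length is at most $\alpha$ (the total number of periodic states, which is precisely the number of unroll trees), and every hanging subtree has depth at most $\depth{\unroll{B}}$. Hence the bound $n \ge 2\alpha + \depth{\unroll{B}}$ is large enough for the cut to expose at least two consecutive full periods of hanging structure past the deepest transient contribution, which pins down the period and the periodic pattern uniquely. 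The same bound applies to $\unroll{A \times X}$: the equality of the cuts at depth $n$ forces the two forests to have the same number $\alpha$ of trees and the same total number of periodic states, so cycle lengths in $A \times X$ are bounded by $\alpha$ as well.

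The main obstacle will be making the tree matching rigorous. Concretely, one must identify corresponding trees of $\cut{\unroll{A\times X}}{n}$ and $\cut{\unroll{B}}{n}$ by their spines (for instance using the total order $\le$ on finite trees mentioned earlier to canonically list the cut trees on each side), and then check that for each matched pair the finite cut admits only one periodic infinite extension, namely the one obtained by repeating indefinitely the last observed period of hanging subtrees. Once such a uniqueness-of-extension lemma is in place, the tree-by-tree equality lifts to $\unroll{A\times X} = \unroll{B}$, and combined with $\unroll{A}\unroll{X} = \unroll{A \times X}$ this yields the desired $\unroll{A}\unroll{X} = \unroll{B}$.
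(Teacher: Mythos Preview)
Your approach is essentially the paper's: establish that cut commutes with the tree product, reduce the backward direction to the statement that two FDDS unrolls with equal cuts at depth $n$ must themselves be equal, and prove that tree-by-tree via the periodic structure of each spine. The paper packages these steps as separate lemmas (Lemma~\ref{lemme:RecoverPeridicsPart} to locate the spine inside a cut, Lemma~\ref{lemma:depth_for_dif} for the single-tree uniqueness-of-extension you describe, and Lemma~\ref{lemma:equality} for the forest-level statement), but the logic is the same as what you outline.

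One point you should tighten: when you assert that ``the same bound applies to $\unroll{A\times X}$'', you justify only that cycle lengths in $A\times X$ are bounded by $\alpha$ (via the number of trees in the cut). The uniqueness-of-extension argument also requires $\depth{\unroll{A\times X}} \le \depth{\unroll{B}}$, which you never address; without it, a transient branch of $A\times X$ deeper than $n$ could in principle masquerade as part of the spine in the cut, and ``two full periods past the deepest transient'' would not be guaranteed. The paper isolates this as Lemma~\ref{lemma:equlaity_of_depth}: if $\depth{A\times X} > \depth{B}$, then some tree of $\cut{\unroll{A\times X}}{n}$ has, among the subtrees hanging off its root, at least two of depth exceeding $\depth{B}-1$ (the spine continuation plus a deep transient), whereas every tree of $\cut{\unroll{B}}{n}$ has exactly one such subtree, contradicting equality of the cuts. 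Once you insert this observation, your sketch goes through.
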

	In the proof of Proposition~\ref{prop:divideForestFini}, we will see that such a depth is big enough
        to distinguish different unroll trees. 
	Before that, we introduce the notion of \emph{periodic pattern} of an unroll tree.
	Recall that an unroll tree $\tree{t}$ has exactly one infinite branch on which the trees $(\tree{t}_0,\tree{t}_1,\ldots)$ representing transient behaviour hook and repeat periodically.
	Let $p$ be a positive integer.
	A periodic pattern with period $p$ of $\tree{t}$ is a sequence of $p$ finite trees $(\tree{t}_0,\ldots,\tree{t}_{p-1})$ rooted on the infinite branch such that, for all $i\in \mathbb{N}\xspace$ we have $\tree{t}_i=\tree{t}_{i \bmod p}$.
	Let us point out that the idea here is to obtain a set of trees such that we represent all different behaviours repeating in all unroll trees, obtaining a finite representation.

    For connected FDDS, since its period $p$ is the number of trees in its unroll, we can reconstruct the FDDS itself from a periodic pattern $(\tree{t}_0,\ldots,\tree{t}_{p-1})$ of one of its unroll trees $\tree{t}_u$ by adding edges between $\tree{t}_i$ and $\tree{t}_{(i+1) \bmod p}$ for all $i$. We call this operation the \emph{roll of $\tree{t}_u$ of period $p$}. The following lemma shows that we can recover the periodic pattern of an unroll tree from a deep enough cut.
 	
	\begin{lemma}\label{lemme:RecoverPeridicsPart}
	    Let $A$ be a connected FDDS of period $p$, $\tree{t}$ be an unroll tree of $\unroll{A}$, and $n \ge p  + \depth{\unroll{A}}$.
		Let $(v_n,\ldots,v_0)$ be a directed path in $\cut{\tree{t}}{n}$ such that $\depth{v_n}=n$ and $v_0$ is the root of the tree.
		Then, nodes $v_p,\ldots,v_0$ necessarily come from the infinite branch of $\tree{t}$.
	\end{lemma}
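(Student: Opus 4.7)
The plan is to read the directed path directly off the state dynamics of $A$. Writing $\tree{t}=\tree{t}_u$ for the cycle state $u$, the vertices of $\tree{t}$ are the pairs $(s,k)$ with $s\in f^{-k}(u)$, and the parent of $(s,k)$ is $(f(s),k-1)$. I would let $v_n=(s,n)$ for some $s$ with $f^n(s)=u$ and then argue, by a one-line induction along the parent relation, that $v_i=(f^{n-i}(s),i)$ for every $i\in\{0,\ldots,n\}$. Since, by the structure of $A$ as a single cycle with in-trees attached, the infinite branch of $\tree{t}$ consists exactly of the nodes $(c,i)$ with $c$ periodic, the lemma reduces to showing that $f^{n-i}(s)$ is periodic for every $i\le p$.

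The core estimate will bound how many iterations it takes for $s$ to reach the cycle. I would introduce $d$ as the least nonnegative integer such that $f^d(s)$ is periodic; then $s$ sits at height $d$ in the in-tree of $A$ rooted at $f^d(s)$, so by the definition of the depth of an FDDS we have $d\le\depth{\unroll{A}}$. Hence $f^{n-i}(s)$ is periodic as soon as $n-i\ge d$, i.e., $i\le n-d$. The hypothesis $n\ge p+\depth{\unroll{A}}\ge p+d$ gives $n-d\ge p$, so the inequality holds for every $i\in\{0,\ldots,p\}$, which is exactly the claim.

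I do not anticipate any real obstacle: the statement is essentially an unfolding of the unroll definition together with the bound $d\le\depth{\unroll{A}}$. The only subtlety worth flagging is that the threshold $p+\depth{\unroll{A}}$ is tight in the worst case (when $s$ is a leaf of a deepest transient in-tree of $A$), and the extra $p$ beyond $\depth{\unroll{A}}$ is precisely what guarantees that a full period of consecutive nodes — not merely the root — lies on the infinite branch, which is what the surrounding Proposition~\ref{prop:divideForestFini} will exploit in order to reconstruct a periodic pattern from a finite cut.
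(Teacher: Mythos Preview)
Your argument is correct and rests on the same key inequality as the paper's proof, namely that any state reaches the cycle within $\depth{\unroll{A}}$ steps. The only stylistic difference is that the paper argues by contradiction at the tree level (letting $v_a$ be the deepest node among $v_{p-1},\ldots,v_0$ on the infinite branch and bounding $\depth{v_n}\le\depth{v_a}+\depth{\tree{t}}<p+\depth{\unroll{A}}$), whereas you unfold the unroll definition and compute $v_i=(f^{n-i}(s),i)$ directly; the two are equivalent.
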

	
	\begin{proof}
		We assume, by contradiction, that at least one of the nodes $v_p,\ldots,v_0$ does not come from the infinite branch of $\tree{t}$.
		Let $v_a$ be the node of $(v_{p-1},\ldots,v_0)$ with maximal depth coming from the infinite branch of $\tree{t}$; there always is at least one of them, namely the root $v_0$.		
		We have $\depth{v_n}\leq \depth{v_a}+\depth{\tree{t}}$. However, we assumed $\depth{v_a}<p$, thus $\depth{v_n}<p+\depth{\tree{t}}$. Since, $\depth{v_n}=n$, we have $n<p+\depth{\tree{t}}=p+\depth{\unroll{A}}$ which is a contradiction.
	\end{proof}
	
	Now, we can prove the following properties.
        Observe that an unroll tree $\tree{t}$ has multiple periods.
        In particular, if $p$ is a period of $\tree{t}$, then $k\,p$ is also a period of $\tree{t}$
        for all positive integer $k$.
	
	\begin{lemma}\label{lemma:depth_for_dif}
	    Let $\tree{a}$ and $\tree{b}$ two unroll trees with period $a$ and $b$ respectivly.
	    Let $n \ge 2\,\max(a,b) + \max(\depth{\tree{a}}, \depth{\tree{b}})$.
	    Then
		\begin{center}
		    $\tree{a} \neq \tree{b}$ if and only if $\cut{\tree{a}}{n} \neq \cut{\tree{b}}{n}$.
		\end{center}
	\end{lemma}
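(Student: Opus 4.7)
My plan is to argue both directions by contrapositive. The implication $\tree{a}=\tree{b} \Rightarrow \cut{\tree{a}}{n}=\cut{\tree{b}}{n}$ is immediate from the definition of the cut operation, so essentially all the work lies in proving $\cut{\tree{a}}{n}=\cut{\tree{b}}{n} \Rightarrow \tree{a}=\tree{b}$.

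For the nontrivial direction, I would first set $P=\max(a,b)$ and $D=\max(\depth{\tree{a}},\depth{\tree{b}})$, so that by hypothesis $n\ge 2P+D$. Since any positive multiple of a period is again a period, $2P$ is a common period of both $\tree{a}$ and $\tree{b}$. Fix a rooted isomorphism $\phi\colon \cut{\tree{a}}{n}\to \cut{\tree{b}}{n}$ and pick any directed path $(v_n,\ldots,v_0)$ in $\cut{\tree{a}}{n}$ from a depth-$n$ node to the root. Since $n\ge 2P+\depth{\tree{a}}$, Lemma~\ref{lemme:RecoverPeridicsPart} applied with period $2P$ guarantees that $v_{2P},\ldots,v_0$ lie on the infinite branch of $\tree{a}$. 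Applying the same lemma to the image path $(\phi(v_n),\ldots,\phi(v_0))$ in $\cut{\tree{b}}{n}$ shows that $\phi(v_{2P}),\ldots,\phi(v_0)$ lie on the infinite branch of $\tree{b}$.

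Next I would recover the periodic patterns from the cuts. At each depth $i\in\{0,\ldots,2P\}$ along the infinite branch, the transient material hooked at $v_i$ has depth at most $D$, so it fits entirely inside $\cut{\tree{a}}{n}$ because $i+D\le 2P+D\le n$; the corresponding subgraph at $\phi(v_i)$ in $\cut{\tree{b}}{n}$ fits entirely as well. The isomorphism $\phi$ therefore identifies the hooked trees of $\tree{a}$ at depths $0,\ldots,2P$ with those of $\tree{b}$ at the same depths. Since both $\tree{a}$ and $\tree{b}$ admit $2P$ as a period, the equality of two consecutive full periods of hooked trees forces equality of the entire periodic sequences on both sides, and together with the single infinite branch of each unroll tree this gives $\tree{a}=\tree{b}$.

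The hardest step, I expect, will be to make the identification of the infinite branch canonical under $\phi$ and to cleanly describe ``the hooked tree at depth $i$'' as a subgraph of the cut so that $\phi$ transports it correctly. The threshold $n\ge 2P+D$ is precisely tuned for this: it simultaneously makes Lemma~\ref{lemme:RecoverPeridicsPart} applicable with period $2P$ on both sides, so that the first $2P+1$ nodes of any maximal path are pinned to the infinite branch, and it leaves enough room for every hooked tree attached at depths $0,\ldots,2P$ to sit completely inside the cut. Once these two facts are carefully stated, the reconstruction of $\tree{a}$ and $\tree{b}$ from their matching periodic patterns is a routine rolling argument.
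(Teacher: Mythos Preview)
Your overall plan is sound and close in spirit to the paper's argument, but there is a concrete error in the final step. You write ``Since any positive multiple of a period is again a period, $2P$ is a common period of both $\tree{a}$ and $\tree{b}$.'' This is false: with $P=\max(a,b)$, the number $2P$ is certainly a multiple of $\max(a,b)$, but it need not be a multiple of $\min(a,b)$. For instance if $a=3$ and $b=5$ then $2P=10$ is not a period of $\tree{a}$. Consequently, your concluding sentence (``the equality of two consecutive full periods of hooked trees forces equality of the entire periodic sequences'') does not go through as stated: you have matched the hooked trees $\tree{h}_0,\ldots,\tree{h}_{2P-1}$ on both sides, but the extrapolation of $\tree{a}$ uses $\tree{h}_{i\bmod a}$ while that of $\tree{b}$ uses $\tree{h}_{i\bmod b}$, and it remains to argue these agree for all $i\ge 2P$.

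The fix requires exactly the kind of arithmetic the paper carries out. The paper assumes without loss of generality that $a$ and $b$ are minimal periods and splits into cases $a=b$ and $a\neq b$; in the second case it shows, from the equalities at depths $0,\ldots,2b-1$, that the sequence of hooked trees must also have period $\gcd(a,b)$, contradicting minimality. An equivalent route in your framework is a Fine--Wilf style argument: since the finite sequence $\tree{h}_0,\ldots,\tree{h}_{2P-1}$ has both $a$ and $b$ as periods and length $2P\ge a+b$, one checks directly that $\tree{h}_i=\tree{h}_{i+b}$ extends from $i<a$ (where it is known) to all $i$ via period $a$, whence both infinite sequences coincide. Either way, this missing step is where the factor $2$ in the bound $n\ge 2\max(a,b)+D$ is genuinely used; your current write-up invokes it only to apply Lemma~\ref{lemme:RecoverPeridicsPart} with parameter $2P$, which by itself does not require $2P$ to be a period.
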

	
	\begin{proof}
          For the sake of this proof (especially $\Rightarrow$),
          we will assume that $a$ and $b$ are the smallest period,
          but the result immediatly follows without this restriction.
	    $(\Leftarrow)$
            This direction is very intuitive:
            if $\tree{a}$ and $\tree{b}$ are different when they are cut at depth $n$,
            then it is not possible that the additional nodes present in $\tree{a}$ and $\tree{b}$
            (but absent in $\cut{\tree{a}}{n}$ and $\cut{\tree{b}}{n}$) make them equal.
            We present a formal proof based on the total order on trees.
            We denote by $\codeT{\tree{x}}$ the code of $\tree{x}$ used to define the order of \cite{gadouleau2022factorisation_dds},
            and by $\codeTi{\tree{x}}$ the $i$-th integer in the code. 
           
	    We assume that $\cut{\tree{a}}{n} \neq \cut{\tree{b}}{n}$. 
	    Without loss of generality, we consider $\cut{\tree{a}}{n} < \cut{\tree{b}}{n}$. 
	    So, there exists an index $i$ such that $\codeTi{\cut{\tree{a}}{n}} < \codeTi{\cut{\tree{b}}{n}}$. 
	    Let $min$ be the index of the first difference. 
	    We set $u_a$ (resp. $u_b$) the node corresponding to $min$ in $\cut{\tree{a}}{n}$ (resp. $\cut{\tree{b}}{n}$). 
	    First, $u_a$ and $u_b$ have the same depth. 
	    Indeed, if it is not the case, since $\codeTS{\cut{\tree{a}}{n}}{j} = \codeTS{\cut{\tree{b}}{n}}{j}$ for all $j < min$, the sum of the codes of node with depth $\min(\depth{u_a},\depth{u_b}) - 1$ is the same. 
	    Therefore, the number of nodes with depth $\min(\depth{u_a},\depth{u_b})$ is the same in $\cut{\tree{a}}{n}$ and $\cut{\tree{b}}{n}$. 
	    However, since $u_b$ and $u_a$ do not have the same depth, by the definition of the code, the number of nodes with depth $\min(\depth{u_a},\depth{u_b})$ is different between $\cut{\tree{a}}{n}$ and $\cut{\tree{b}}{n}$. 
	    In addition, $u_b$ cannot be a leaf. 
	    Indeed, $\codeTS{\cut{\tree{b}}{n}}{min} > \codeTS{\cut{\tree{a}}{n}}{min} \ge 0$. 
	    Thus, $\depth{u_b} = \depth{u_a} < n$. 
	    We conclude that the difference between $\cut{\tree{a}}{n}$ and $\cut{\tree{b}}{n}$ is also a difference between $\tree{a}$ and $\tree{b}$.
	    
	    ($\Rightarrow$) We assume that $\tree{a} \neq \tree{b}$. 
	    We set $\tree{a}_i$ (resp. $\tree{b}_i$) the $i$-th finite tree rooted in the infinite branche of $\tree{a}$ (resp. $\tree{b}$).
	    By contradiction, we assume that $\cut{\tree{a}}{n} = \cut{\tree{b}}{n}$. 
	    Two cases are possible. 
	    First, $a = b$. 
	    So, since $n > a + \max(\depth{\tree{a}},\depth{\tree{b}})$, by Lemma~\ref{lemme:RecoverPeridicsPart}, we recover $\tree{a}_i$ and $\tree{b}_i$ for all $i$. 
	    Furthermore, we can also identify until $a$-th node come from the infinite branch in $\cut{\tree{a}}{n}$ and $\cut{\tree{b}}{n}$. 
	    We set $\tree{c}_i$ (resp. $\tree{d}_i$) the tree rooted in the $i$-th node of the cut of the infinite branch in $\cut{\tree{a}}{n}$ (resp. $\cut{\tree{b}}{n}$).
	    Since $\cut{\tree{a}}{n} = \cut{\tree{b}}{n}$, we have that $\dt{\tree{c}_0} = \dt{\tree{d}_0}$. 
	    Additionnaly, $\dt{\tree{c}_0}$ and $\dt{\tree{d}_0}$ contain one and only one tree with depth $n-1$, namely $\tree{c}_1$ and $\tree{d}_1$ repectively. 
        It follows that $\tree{a}_0 = \rf{\dt{\tree{c}_0} - \tree{c}_1} = \rf{\dt{\tree{d}_0} - \tree{d}_1} = \tree{b}_0$ (see Figure~\ref{fig:recover_PP}).
        \begin{figure}[tb]
        	\centering
        	\begin{tikzpicture}
	
	\coordinate (c0) at (0, 0);
	\coordinate (c1) at (1, 1);
	\coordinate (c2) at (2, 2);
	\coordinate (c3) at (3, 3);
	\coordinate (c4) at (4, 4);
	\coordinate (c5) at (5, 5);
	\coordinate (c6) at (5.2, 5.2);
	
	\draw[dashed, thick] (c0) -- ++(0.5, 1.5) -- ++(-1, 0) -- cycle;
	\draw[thick] (c1) -- ++(0.5, 1.5) -- ++(-1, 0) -- cycle;
	\draw[thick] (c2) -- ++(0.5, 1.5) -- ++(-1, 0) -- cycle;
	\draw[thick] (c3) -- ++(0.5, 1.5) -- ++(-1, 0) -- cycle;
	\draw[thick] (c4) -- ++(0.5, 1.2) -- ++(-1, 0) -- cycle;
	\draw[thick] (c5) -- ++(0, 0.2)   -- ++(-0.2, 0) -- cycle;
	
	\draw[thick] (c0) -- (c1) -- (c2) -- (c3) -- (c4) -- (c5) -- (c6);
	
	\node[below left] at (c0) {$c_0$};
	\node[below right] at (c1) {$c_1$};
	\node[below right] at (c2) {$c_2$};
	\node[below right] at (c3) {$c_3$};
	\node[below right] at (c4) {$c_4$};
	\node[below right] at (c5) {$c_5$};
	\node[below right] at (c6) {$c_6$};
	
	\node[below left] at (-0.5, 1) {$\mathcal{R}(\mathcal{D}(c_0) - c_1)$};
	
	\draw[dashed] (-2,5.2) -- (6,5.2);
	
\end{tikzpicture}
        	\caption{Example tree cut at depth 6.
                The dotted triangle illustrates $\mathcal{R}(\mathcal{D}(\tree{c}_0) - \tree{c}_1)$,
                in particular the usage of $-$ to remove the other part of the graph ($\tree{c}_1$).}
                \label{fig:recover_PP}
        \end{figure}
        
	    We can apply the same reasonnig for all $\tree{a}_i$ and $\tree{b}_i$ for $i$ between $1$ and $a - 1$. 
	    So, $\tree{a}$ and $\tree{b}$ have the same periodic pattern. 
	    This implies that $\tree{a} = \tree{b}$.
	    
	    Second, $a \neq b$. 
	    Without loss of generality, we assume that $a < b$. 
	    Since $n \ge 2 \, b + \max(\depth{\tree{a}}, \depth{\tree{b}})$,
            thanks to a similiar reasoning as the previous point, we deduce that $\tree{a}_{i \mod a} = \tree{b}_{i \mod b}$ for all $i$ between $0$ and $2 \, b - 1$
            (the factor $2$ is important here, see Figure~\ref{fig:dif_mult_2}).
            \begin{figure}[tb]
            	\begin{align*}
            		&\tree{a}_0 = \tree{b}_0 = \tree{b}_3&  &\tree{a}_0 = \tree{b}_0 = \tree{b}_3 =\tree{b}_2 = \tree{a}_2 \\
            		&\tree{a}_1 = \phantom{\tree{b}_0 ={}} \tree{b}_1& &\tree{a}_1 = \tree{b}_1 = \tree{b}_0 \phantom{\tree{b}_0 ={}}= \tree{a}_0\\
            		&\tree{a}_2 = \phantom{\tree{b}_0 ={}} \tree{b}_2&  &\tree{a}_2 = \tree{b}_2 = \tree{b}_1 \phantom{\tree{b}_0 ={}}= \tree{a}_1 \\[-1.2cm]
            	\end{align*} 
            	\caption{On the left, the case without the multiplication by a factor two in the lower bound for $n$,
                and on the right, the case with the multiplication by a factor two.
                One can see that without the factor two, it is not possible to continue
                the proof by identifying $\tree{a}_i$ with some $\tree{a}_j$.}
                \label{fig:dif_mult_2}
            \end{figure}
            
	    Thus, for each $i$ between $0$ and $a-1$, we have that $\tree{a}_i = \tree{b}_i = \tree{b}_{(b + i) \mod b} = \tree{a}_{(b + i) \mod a}$. 
	    So, either $i = (b + i) \mod a$ and there exists $k$ such that $b + i = a k + i$. 
	    This implies that $a$ is a divisor of $b$. 
	    In this case, $\tree{a}_i = \tree{b}_{i + aj}$ for all integer $i < a$ and $j < k$. 
	    It follow that $\tree{b}_i = \tree{b}_{i + a} = \ldots = \tree{b}_{i + a (k -1)}$ for all $i < a$. 
	    We conclude that $\tree{b}$ also has period $a$,
	    contradicting the minimality of $b$. 
	    
	    Or, $i \neq (b + i) \mod a$. 
	    Therefore, $\tree{a}_i = \tree{a}_j$ with $j = (b + i) \mod a$. 
	    Thereby, $\tree{a}_i = \tree{a}_{(b + i) \mod a} = \tree{a}_{(((b+i) \mod a) + b )\mod a} = \ldots$. 
	    However, $(((b + i) \mod a) + b) \mod a = (((b+i) \mod a) + (b \mod a)) \mod a = (2b + i) \mod a$. 
	    We can extend this process and we obtaint that $\tree{a}_i = \tree{a}_{(k b + i) \mod a}$ for all postive integer $k$ and all $i$ between $0$ and $a-1$. 
	    Thus, since:
            \[
              (kb + i) \mod a = ((k - \cfrac{\lcm(a,b)}{b})b + i) \mod a
            \]
            for all positive integer $k \ge \cfrac{\lcm(a,b)}{b}$, we deduce:
            \[
              |C_i = \{(k b + i) \mod a \mid k \in \N\}| = \cfrac{\lcm(a,b)}{b}
            \]
            for each integer $i$ between $0$ and $a-1$.
	    Also, if we suppose that $C_i$ is sorted by ascending order then the difference between two consecutive elements is $\gcd(a,b)$.
	    Consequently, we have that $\tree{a}_j = \tree{a}_k$ for all integer $i$ between $0$ and $a-1$ and $j,k \in C_i$. 
	    Finally, we remark that $C_0 \cup \ldots \cup C_{\gcd(a,b) - 1} = \{0, \ldots, a - 1\}$. 
	    This allows us to conclude that $\tree{a}$ also has period $\gcd(a,b)$. 
	    And since $a$ is not a divisor of $b$, it follows that $\gcd(a,b) < a$,
	    contradicting the minimality of $a$. 
   \end{proof}
   
   \begin{lemma}\label{lemma:equlaity_of_depth}
   		Let $A$ and $B$ be two FDDS with $\alpha$ equal to the number of unroll trees of $\unroll{B}$.
   		Let $n \ge 2 \, \alpha + \depth{\unroll{B}}$. 
   		If $\cutUn{A} = \cutUn{B}$ then $\depth{A} = \depth{B}$.
   \end{lemma}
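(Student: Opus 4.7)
The plan is to prove $\depth{A}=\depth{B}$ by a counting argument on the depths of subtrees hanging off each node of the cut forests. First I would observe that $\cutUn{A}=\cutUn{B}$ as forests forces $\unroll{A}$ to also comprise $\alpha$ unroll trees, so every period of a component of $A$ or $B$ is bounded by $\alpha$. For any node $v$ of a finite tree or forest, let $h(v)$ denote the depth of the subtree rooted at $v$; since an isomorphism of forests preserves both the depth of every node and the value $h(v)$, the two cuts contain, for every depth $d$ and every threshold $\theta$, the same number of nodes at depth $d$ with $h(v)>\theta$.

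Next I would show $\depth{A}\le\depth{B}$, from which the reverse inequality follows immediately by exchanging the roles of $A$ and $B$, since $n\ge 2\alpha+\depth{A}$ once the first direction is established. Arguing by contradiction, assume $\depth{A}>\depth{B}$ and fix an unroll tree $\tree{t}_A\in\unroll{A}$ realising this maximum. By periodicity there is a position $j_A$, with $0\le j_A<p_A\le\alpha$, on the infinite branch of $\tree{t}_A$ at which a transient subtree of depth $\depth{A}$ is attached, and one can pick a transient child $w$ of the infinite-branch node at depth $j_A$ whose subtree in $\tree{t}_A$ has depth $\depth{A}-1$. I would then count, at depth $d=j_A+1$, the nodes of each cut with $h>\depth{B}-1$. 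Since $d\le\alpha\le n-\depth{B}$, each cut tree of $\cutUn{B}$ contributes exactly one such node, namely its infinite-branch node at depth $d$, which has $h=n-d\ge\depth{B}$, while every off-branch node at depth $d$ has $h\le\depth{B}-1$; this yields a total of $\alpha$ in $\cutUn{B}$. In $\cutUn{A}$ the same $\alpha$ infinite-branch contributions appear, and $w$ supplies an extra one: its $h$-value after truncation is $\min(\depth{A}-1,\,n-j_A-1)\ge\depth{B}$ because $\depth{A}-1\ge\depth{B}$ and $n-j_A-1\ge\alpha+\depth{B}$. The count in $\cutUn{A}$ at depth $d$ is therefore at least $\alpha+1$, contradicting $\cutUn{A}=\cutUn{B}$.

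The main obstacle I expect is that in a cut forest the infinite-branch nodes are not intrinsically distinguishable from deep off-branch descendants, so one cannot simply match ``branches of $A$ with branches of $B$'' through the isomorphism. The hypothesis $n\ge 2\alpha+\depth{\unroll{B}}$ circumvents this by leaving enough room to place the witness depth $d=j_A+1$ well below $n-\depth{B}$, at which point the infinite-branch node of each cut tree of $\cutUn{B}$ is uniquely characterised by $h\ge\depth{B}$ while every off-branch sibling satisfies $h\le\depth{B}-1$. Past this observation the argument reduces to clean bookkeeping of nodes whose subtree depth exceeds $\depth{B}-1$.
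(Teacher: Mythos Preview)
Your proof is correct and rests on the same idea as the paper's: at a shallow level of the cut forests, every tree of $\cutUn{B}$ carries exactly one node whose subtree exceeds depth $\depth{B}-1$ (the one on the infinite branch), whereas some tree of $\cutUn{A}$ carries at least two. The paper shortens the argument by choosing $\tree{t}_A$ to be the unroll tree rooted \emph{at} the periodic state bearing the maximal transient tree (recall that every periodic state $u$ is the root of its own unroll tree $\tree{t}_u$), so that $j_A=0$; the comparison then happens directly at the root via $\dt{\cdot}$, and one simply exhibits a tree $\tree{t}_a\in\cutUn{A}$ whose multiset $\dt{\tree{t}_a}$ matches no $\dt{\tree{t}_b}$, avoiding both the global node count and the bookkeeping on $j_A$.
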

   
   \begin{proof}
   		We prove the contraposition, assuming $\depth{A} \neq \depth{B}$. 
   		Without loss of generality, let $\depth{B} < \depth{A}$. 
   		Since $n$ is large enough, we deduce that for each $\tree{t}_b \in \cutUn{B}$,
                $\dt{\tree{t}_b}$ contain exactly one tree of depth $n-1$ and some tree of depth at most $\depth{B}-1$. 
   		However, there exists a tree $\tree{t}_a \in \cutUn{A}$ such that $\dt{\tree{t}_a}$ contains one tree of depth $n-1$ and one tree of depth $\depth{A} - 1$.
   		Therefore, since  $\depth{B}-1 < \depth{A}-1$, we deduce that no tree in $\cutUn{B}$ is equal to $\tree{t}_a$,
   		and conclude that $\cutUn{A} \neq \cutUn{B}$
   \end{proof}
   
   \begin{corollary}\label{cor:equlaity_of_depth}
   	 	Let $A,X$ and $B$ be FDDS with $\alpha$ equal to the number of unroll trees of $\unroll{B}$.
   	 	Let $n \ge 2 \, \alpha + \depth{\unroll{B}}$. 
   	 	If $\cutUn{A} \cutUn{X} = \cutUn{B}$ then $\depth{A} \le \depth{B}$ and $\depth{X} \le \depth{B}$.
   \end{corollary}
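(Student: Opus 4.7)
The plan is to reduce the corollary to the preceding Lemma~\ref{lemma:equlaity_of_depth} by recognising the left-hand side $\cutUn{A}\,\cutUn{X}$ as the cut unroll of a single FDDS, namely the direct product $AX$. The entire proof then boils down to three short observations, none of which is genuinely hard.

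First, I would check that cutting at depth $n$ commutes with the tree product. Since the product of Definition~\ref{prodintrees} is defined layer by layer (a pair $(v,u)$ belongs to the product at depth $k$ iff $v$ and $u$ occur at depth $k$ in the two factors), discarding all nodes of depth strictly greater than $n$ before or after multiplication yields the same forest. Combined with the identity $\unroll{A}\,\unroll{X}=\unroll{AX}$ recalled in the preliminaries, this gives
\[
    \cutUn{A}\,\cutUn{X} \;=\; \cut{\unroll{AX}}{n} \;=\; \cutUn{AX},
\]
so the hypothesis $\cutUn{A}\,\cutUn{X} = \cutUn{B}$ rewrites as $\cutUn{AX} = \cutUn{B}$.

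Second, since $\alpha$ is the number of unroll trees of $\unroll{B}$ and the bound $n \ge 2\alpha + \depth{\unroll{B}}$ is exactly the hypothesis of Lemma~\ref{lemma:equlaity_of_depth} applied to the pair $(AX, B)$, I would invoke that lemma to obtain $\depth{AX} = \depth{B}$. Then I would appeal to the elementary fact that $\depth{AX} = \max(\depth{A}, \depth{X})$: in the direct product a state $(a,x)$ is periodic iff both coordinates are, because $(f_A\times f_X)^k(a,x) = (f_A^k(a), f_X^k(x))$, so the transient length of $(a,x)$ is $\max(\depth{a}, \depth{x})$; taking the maximum over all states yields the claimed identity. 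Combining this with $\depth{AX} = \depth{B}$ immediately gives $\depth{A} \le \depth{B}$ and $\depth{X} \le \depth{B}$.

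As for the main obstacle, there is really none: every ingredient is either a previously established result in the paper or an immediate consequence of the definitions of tree product, cut, unroll, and direct product of FDDS. The only point requiring some care is the verification that cut commutes with product at depth $n$, but this is a one-line check on vertex sets; the rest is plumbing through Lemma~\ref{lemma:equlaity_of_depth}.
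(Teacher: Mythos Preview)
Your proposal is correct and follows exactly the paper's own proof: rewrite $\cutUn{A}\,\cutUn{X}$ as $\cutUn{AX}$, apply Lemma~\ref{lemma:equlaity_of_depth} to obtain $\depth{AX}=\depth{B}$, and conclude via $\depth{AX}=\max(\depth{A},\depth{X})$. Your write-up is in fact more detailed than the paper's, which simply asserts each of these three steps without further justification.
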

   
   \begin{proof}
   		We assume that $\cutUn{A} \cutUn{X} = \cutUn{B}$. 
   		Thus, $\cutUn{AX} = \cutUn{B}$.
   		Therefore, by Lemma~\ref{lemma:equlaity_of_depth}, we have that $\depth{AX} = \depth{B}$. 
   		And since $\depth{AX} = \max(\depth{A}, \depth{X})$, the result follows.
   \end{proof}

   \begin{lemma}\label{lemma:equality}
	   	Let $A$ and $B$ be two FDDS with $\alpha$ equal to the number of unroll trees of $\unroll{B}$.
	   	Let $n \ge 2 \, \alpha + \depth{\unroll{B}}$. Then
	   	\begin{center}
	   		$\unroll{A} = \unroll{B}$ if and only if $\cut{\unroll{A}}{n} = \cut{\unroll{B}}{n}$
	   	\end{center}
   \end{lemma}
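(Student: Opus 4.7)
The plan is to handle the two directions separately. The $(\Rightarrow)$ direction is immediate, since cutting is a well-defined operation on unrolls and equal unrolls automatically have equal cuts at every depth. The substantive direction is $(\Leftarrow)$, which I would prove by pairing up the individual unroll trees of $\unroll{A}$ and $\unroll{B}$ via the equality of their cuts, and then invoking Lemma~\ref{lemma:depth_for_dif} on each pair to lift each equality of cuts into an equality of the underlying unroll trees.

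In more detail, starting from $\cut{\unroll{A}}{n} = \cut{\unroll{B}}{n}$, I would first note that cutting neither merges nor splits trees, so $\unroll{A}$ also contains exactly $\alpha$ unroll trees, and there is a bijection $\sigma$ such that $\cut{\tree{a}_i}{n} = \cut{\tree{b}_{\sigma(i)}}{n}$ for every $i$. The contrapositive of Lemma~\ref{lemma:depth_for_dif} applied to each pair then yields $\tree{a}_i = \tree{b}_{\sigma(i)}$, and summing these equalities gives $\unroll{A} = \unroll{B}$.

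The main obstacle is to verify that the hypothesis $n \ge 2\alpha + \depth{\unroll{B}}$ is strong enough to license the use of Lemma~\ref{lemma:depth_for_dif}, which demands $n \ge 2\max(a,b) + \max(\depth{\tree{a}}, \depth{\tree{b}})$ for the trees at hand. I would obtain this through two bounds. First, the number of unroll trees of any FDDS equals the sum of the cycle lengths of its connected components, so on both sides every individual period is at most $\alpha$. Second, Lemma~\ref{lemma:equlaity_of_depth} already gives $\depth{A} = \depth{B}$, hence no unroll tree in $\unroll{A}$ or $\unroll{B}$ has depth exceeding $\depth{\unroll{B}}$. Combining these two observations yields $n \ge 2\max(a,b) + \max(\depth{\tree{a}}, \depth{\tree{b}})$ uniformly across all pairs $(\tree{a}_i,\tree{b}_{\sigma(i)})$, which is exactly what Lemma~\ref{lemma:depth_for_dif} requires, closing the argument.
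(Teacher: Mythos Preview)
Your proposal is correct and follows essentially the same approach as the paper's proof: both handle $(\Rightarrow)$ trivially and, for $(\Leftarrow)$, invoke Lemma~\ref{lemma:equlaity_of_depth} to align the depths, bound every individual period by~$\alpha$ via the number of unroll trees, and then apply Lemma~\ref{lemma:depth_for_dif} (in contrapositive form) to each matched pair of trees. The paper additionally spells out an explicit reconstruction of an unroll tree from its cut before invoking Lemma~\ref{lemma:depth_for_dif}, whereas your direct bijection-and-contrapositive argument reaches the same conclusion more concisely.
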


   \begin{proof}
       $(\Rightarrow)$ If $\unroll{A} = \unroll{B}$ then $\cutU{A}{m} = \cutU{B}{m}$ for all positive integer $m$. 
       
        $(\Leftarrow)$ We assume that $\cutUn{A} = \cutUn{B}$.
		First, we explain how we recover $\unroll{A}$ and $\unroll{B}$ from $\cutUn{A}$ and $\cutUn{B}$ respectively. 
		We consider that $\unroll{A} = \sum_{i=1}^{\alpha_a} \tree{a}_i$
                and $\unroll{B} = \sum_{i=1}^{\alpha} \tree{b}_i$ and we set that $a_i$ (resp. $b_i$)
                be the length of a periodic pattern of $\tree{a}_i$ (resp. $\tree{b}_i$) with minimal length. 
        From the definition of unroll, we remark that $\max_i(a_i) \le \alpha_a$ and $\max_i(b_i) \le \alpha$. 
        Besides, by the hypothesis, it follows that $\alpha_a \le \alpha$.
        Thus, for each tree $\tree{t}$ of $\unroll{A}$ and $\unroll{B}$, there exists a periodic pattern of size $p$ of $\tree{t}$ such that $p \le \alpha$.
    
        Additionnaly, by Lemma~\ref{lemma:equlaity_of_depth}, we have that $\depth{A} = \depth{B}$.
        Therefore, $\depth{\unroll{A}} \le \depth{\unroll{B}}$,
        and we deduce that $n \ge 2 \max_{i,j}(a_i,b_j) + \max(\depth{\unroll{A}}, \depth{\unroll{B}})$. 
    
        Thereby, for $\tree{t}$ a tree of $\cutUn{A}$ and $p\le \alpha$ the size of one of its periodic pattern,
        we can extend $\tree{t}$ as an unroll tree $\tree{t'}$ such that $\cutTn{t'} = \tree{t}$.
        Moreover, each node of $\tree{t'}$ is of the form $(u,h)$ with $h \ge n$,
        and has the node $(v,h+1)$ as predecessor if and only if $(v, h + 1 - p)$ is a predecessor of $(u, h - p)$.
        Indeed, this operation is well defined by the two previous points. 
		We apply the same method for the trees in $\cutUn{B}$. 
		Also, $n \ge 2 \, p + \depth{\tree{t}}$.
		
		Let $i < \alpha_a$.
		From the hypothesis, there exists a tree $\tree{b}$ of $\cutUn{B}$ such that $\cut{\tree{a}_i}{n} = \tree{b}$.
		We set $\tree{t}_b$ the unroll tree obtain from $\tree{b}$ by the previous method. 
		Therefore, we have that $\cut{\tree{a}_i}{n} = \tree{b} = \cut{\tree{t}_b}{n}$. 
		Now, since $a_i \le \alpha_a \le \alpha$ and $\depth{\tree{a}_i} \le \depth{\unroll{A}} \le \depth{\unroll{B}}$,
                we deduce that $n \ge 2 \, a_i + \depth{\unroll{A}}$. 
		Hence, by Lemma~\ref{lemma:depth_for_dif}, we conclude that $\tree{t}_b = \tree{a_i}$,
		and that $\unroll{A} = \unroll{B}$.
   \end{proof}
	
	\begin{proof}[Proposition~\ref{prop:divideForestFini}]
		From Lemma~\ref{lemma:equality}, we have that $\unroll{A} \unroll{X} = \unroll{AX} = \unroll{B}$ if and only if $\cutUn{AX} = \cutUn{B}$. 
		However, since $\cutUn{AX} = \cutUn{A} \cutUn{X}$,
                we conclude that $\cutUn{A} \cutUn{X} = \cutUn{B}$ and the proposition follows. 
	\end{proof}
	
    Remark that the cut operation over $\unroll{B}$ at a depth $n$ generates a forest where the size of each tree is in $\bigo{m^2}$ and the total size is in $\bigo{m^3}$
    with $m$ the size of $B$ (\ie its number of nodes), since the chosen $n$ is at most $m$. Now, we can prove the main result of this section.
	
	\begin{theorem}
		For unrolls $\forest{A}$, $\forest{X}$, $\forest{Y}$ we have $\forest{A} \forest{X} = \forest{A} \forest{Y}$ if and only if $\forest{X} = \forest{Y}$.
	\end{theorem}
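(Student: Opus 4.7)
The direction ($\Leftarrow$) is immediate from the definition of product. For the interesting direction, my plan is to reduce to the already-proved finite-depth cancellation (Lemma~\ref{lemme:cancelFiniForest}) by cutting everything at a sufficiently large depth, and then to lift the resulting finite equality back to the unrolls by means of Lemma~\ref{lemma:equality}.

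Concretely, assume $\forest{A}\forest{X} = \forest{A}\forest{Y}$. Let $\alpha_X$ (resp.\ $\alpha_Y$) denote the number of trees of $\forest{X}$ (resp.\ $\forest{Y}$) and $d_X$, $d_Y$ their depths; since each forest is the unroll of some FDDS, these are finite. Choose
\[ n \;\geq\; 2\max(\alpha_X,\alpha_Y) + \max(d_X,d_Y), \]
so that $n$ is admissible in Lemma~\ref{lemma:equality} simultaneously for $\forest{X}$ and $\forest{Y}$. Cut both sides at depth $n$:
\[ \cut{\forest{A}\forest{X}}{n} \;=\; \cut{\forest{A}\forest{Y}}{n}. \]
The product of trees is computed layer by layer (Definition~\ref{prodintrees}), so cutting commutes with the product, giving
\[ \cut{\forest{A}}{n}\,\cut{\forest{X}}{n} \;=\; \cut{\forest{A}}{n}\,\cut{\forest{Y}}{n}. \]
These three forests are now finite tree forests, and since $\forest{A}$ is an unroll (every tree has an infinite branch) we have $\depth{\cut{\forest{A}}{n}} = n$.

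Applying Lemma~\ref{lemme:cancelFiniForest} with cancelling factor $\cut{\forest{A}}{n}$ of depth $n$ yields
\[ \cut{\bigl(\cut{\forest{X}}{n}\bigr)}{n} \;=\; \cut{\bigl(\cut{\forest{Y}}{n}\bigr)}{n}, \]
which simplifies to $\cut{\forest{X}}{n} = \cut{\forest{Y}}{n}$. By the choice of $n$, Lemma~\ref{lemma:equality} (applied to any FDDS whose unroll is $\forest{X}$, and any whose unroll is $\forest{Y}$) then upgrades this cut equality to $\forest{X} = \forest{Y}$, as desired.

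The main technical point is simply to pick a single depth that is large enough for \emph{both} the cancellation lemma (where the bound depends on the depth of the factor being cancelled) and the recovery lemma (where the bound depends on the periodicity and depth of the forests we want to recover). The former is automatic because $\forest{A}$ has infinite depth, so its cut at depth $n$ already has depth $n$; the latter is handled by choosing $n$ in terms of the finite parameters of $\forest{X}$ and $\forest{Y}$ up front. Everything else is a clean chain of the previously established results.
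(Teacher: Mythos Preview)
Your proof is correct and follows essentially the same route as the paper: cut at a large enough depth, apply the finite cancellation Lemma~\ref{lemme:cancelFiniForest}, and then lift back via Lemma~\ref{lemma:equality} (the paper phrases the first and last steps through Proposition~\ref{prop:divideForestFini} and picks $n$ in terms of $\forest{AX}$ rather than $\forest{X},\forest{Y}$, but the logic is identical). Your explicit remark that $\depth{\cut{\forest{A}}{n}}=n$ because unroll trees have an infinite branch is the key point making Lemma~\ref{lemme:cancelFiniForest} yield equality at depth $n$, and the paper leaves this implicit.
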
 
	
	\begin{proof}
		Let $\alpha$ be the number of trees in $\forest{AX}$ and $n \geq \alpha + \depth{\forest{ax}}$ be an integer.
		By Proposition \ref{prop:divideForestFini},
		$\forest{AX} = \forest{AY}$ if and only if $\cut{\forest{A}}{n} \cut{\forest{X}}{n} = \cut{\forest{A}}{n} \cut{\forest{y}}{n}$.
		In addition, by Lemma \ref{lemme:cancelFiniForest}, 
		$\cut{\forest{A}}{n} \cut{\forest{x}}{n} = \cut{\forest{a}}{n} \cut{\forest{y}}{n}$ if and only if $\cut{\forest{x}}{n} = \cut{\forest{y}}{n}$.
		By Proposition	\ref{prop:divideForestFini}, the theorem follows.
	\end{proof}

    We can finally describe a division algorithm for FDDS working under the hypothesis that the quotient is connected.	

	\begin{algo}\label{algo:unrollDiv}
		Given two FDDS $A$ and $B$, where $\unroll{B}$ has $\alpha$ trees, we can compute $X$ such that $X$ is a connected FDDS and $A X = B$ (if any exists) by:
		\begin{enumerate}
			\item cutting $\unroll{A}$ and $\unroll{B}$ at depth $n = 2 \, \alpha + \depth{\unroll{B}}$;
			\item  computing $\tree{x}$ with the division algorithm \cite{gadouleau2022factorisation_dds}
                          to divide the tree $\rf{\cut{\unroll{B}}{n}}$ by $\rf{\cut{\unroll{A}}{n}}$\label{algo:unrollDiv:division}\label{algo:unrollDiv:computeDiv};
			\item computing the connected FDDS $X$ as the roll of period $p$ of any tree of $\dt{\tree{x}}$, where $p$ is equal to the number of trees in $\dt{\tree{x}}$;
			\item and verifying if $X$ multiplied by $A$ is isomorphic to $B$. \label{algo:unrollDiv:verification}
		\end{enumerate}
	\end{algo}
	
    Since the depth where we cut is large enough, Proposition~\ref{prop:divideForestFini}, Lemma~\ref{lemme:forest2tree} and the correctness of the division algorithm of \cite{gadouleau2022factorisation_dds} imply that the tree $\tree{x}$ computed in Step~\ref{algo:unrollDiv:computeDiv} of Algorithm~\ref{algo:unrollDiv} satisfies $\cut{\unroll{A}}{n}\dt{\tree{x}} = \cut{\unroll{B}}{n}$. 
    By the definition of unroll, since we only search for connected FDDS, if $\dt{\tree{x}}$ is the cut of an unroll then the rolls of each tree of $\dt{\tree{x}}$ at period $p$ are isomorphic. Furthermore, Lemma~\ref{lemme:RecoverPeridicsPart} ensures that we can roll each tree in $\dt{\tree{x}}$.
    However, $\dt{\tree{x}}$ is not necessarily the cut of an unroll and it is possible that there exists an FDDS $X$ such that $\dt{\tree{x}} = \cut{\unroll{X}}{n}$ but $AX \neq B$ (an example can be seen in Figure~\ref{fig:unroll-counter-example}). 
    As a consequence, Step~\ref{algo:unrollDiv:verification} of Algorithm~\ref{algo:unrollDiv} is mandatory to ensure its correctness.

	\begin{figure}[tb]
\centering
\begin{tikzpicture}[xscale=.8,yscale=.6]
\tikzset{ddsnode/.style={inner sep=0,outer sep=2}}
\draw[line width=.1mm,gray](-2,4.5)--(11,4.5);
\node[scale=1.9,fill=white]at(0,4.5){$A$};
\node[scale=1.9,fill=white]at(2,4.5){$B$};
\node[scale=1.9,fill=white]at(7,4.5){$C$};
\begin{scope}[yshift=5.5cm]
\begin{oodgraph}
\addcycle[yshift=1cm,rotation angle=130,nodes prefix = a]{2};
\addbeard[attach node = a->1,rotation angle=-80]{1};
\addbeard[attach node = a->2,rotation angle=-80]{1};

\addcycle[xshift=2.25cm,yshift=.25cm,rotation angle=90,nodes prefix = b]{1};
\addbeard[attach node = b->1,rotation angle=45]{1};
\addbeard[attach node = b->1->1,rotation angle=-90]{1};

\addcycle[xshift=5.5cm,yshift=0cm,nodes prefix = c,rotation angle=90]{1};
\addbeard[attach node = c->1,rotation angle=3]{3};
\addbeard[attach node = c->1->2]{2};
\addcycle[xshift=8.5cm,yshift=0cm,nodes prefix = d,rotation angle=90]{1};
\addbeard[attach node = d->1,rotation angle=3]{3};
\addbeard[attach node = d->1->2]{2};
\end{oodgraph}

\node[scale=2] at (1,1){$\times$};
\node[scale=2] at (3.25,1){$\neq$};
\end{scope}

\begin{scope}[xshift=-1cm,xscale=.2,yscale=1]
\foreach \x/\y [count=\v from 0] in {0/0,-1/1,1/1,0/2,2/2,1/3,3/3}
{\node[ddsnode](\v)at(\x,\y){$\bullet$};}
\foreach \u/\v in {1/0,2/0,3/2,4/2,5/4,6/4}
{\draw[ddsedge](\u)--(\v);}
\draw[ddsedge,dashed](2,4)--(6);
\draw[ddsedge,dashed](4,4)--(6);
\end{scope}

\begin{scope}[xshift=0cm,xscale=.2,yscale=1]
\foreach \x/\y [count=\v from 0] in {0/0,-1/1,1/1,0/2,2/2,1/3,3/3}
{\node[ddsnode](\v)at(\x,\y){$\bullet$};}
\foreach \u/\v in {1/0,2/0,3/2,4/2,5/4,6/4}
{\draw[ddsedge](\u)--(\v);}
\draw[ddsedge,dashed](2,4)--(6);
\draw[ddsedge,dashed](4,4)--(6);
\end{scope}

\begin{scope}[xshift=2cm,xscale=.1,yscale=1]
\foreach \x/\y [count=\v from 0] in {0/0,-2/1,2/1,-2/2,0/2,4/2,0/3,2/3,6/3}
{\node[ddsnode](\v)at(\x,\y){$\bullet$};}
\foreach \u/\v in {1/0,2/0,3/1,4/2,5/2,6/4,7/5,8/5}
{\draw[ddsedge](\u)--(\v);}
\draw[ddsedge,dashed](4,4)--(8);
\draw[ddsedge,dashed](8,4)--(8);
\end{scope}

\begin{scope}[xshift=5cm,xscale=.2,yscale=1]
\foreach \x/\y [count=\v from 0] in {0/0,-3/1,-1/1,1/1,3/1,-4/2,-2/2,0/2,2/2,4/2,6/2,-1/3,1/3,3/3,5/3,7/3,9/3}
{\node[ddsnode](\v)at(\x,\y){$\bullet$};}
\foreach \u/\v in {1/0,2/0,3/0,4/0,5/1,6/1,7/4,8/4,9/4,10/4,11/7,12/7,13/10,14/10,15/10,16/10}
{\draw[ddsedge](\u)--(\v);}
\draw[ddsedge,dashed](6,4)--(16);
\draw[ddsedge,dashed](8,4)--(16);
\draw[ddsedge,dashed](10,4)--(16);
\draw[ddsedge,dashed](12,4)--(16);
\end{scope}

\begin{scope}[xshift=8cm,xscale=.2,yscale=1]
\foreach \x/\y [count=\v from 0] in {0/0,-3/1,-1/1,1/1,3/1,-4/2,-2/2,0/2,2/2,4/2,6/2,-1/3,1/3,3/3,5/3,7/3,9/3}
{\node[ddsnode](\v)at(\x,\y){$\bullet$};}
\foreach \u/\v in {1/0,2/0,3/0,4/0,5/1,6/1,7/4,8/4,9/4,10/4,11/7,12/7,13/10,14/10,15/10,16/10}
{\draw[ddsedge](\u)--(\v);}
\draw[ddsedge,dashed](6,4)--(16);
\draw[ddsedge,dashed](8,4)--(16);
\draw[ddsedge,dashed](10,4)--(16);
\draw[ddsedge,dashed](12,4)--(16);
\end{scope}

\node[scale=2] at (1,1.5){$\times$};
\node[scale=2] at (3.25,1.5){$=$};


\end{tikzpicture}
\caption{Three FDDSs $A$, $B$ and $C$ such that $AB\neq C$ but $\unroll{A}\unroll{B}=\unroll{C}$. Here the symbol $\times$ denotes the product of FDDSs on the top, and of forests on the bottom.}
\label{fig:unroll-counter-example}
\end{figure}

	\begin{theorem} \label{th:compUnrollDiv}
		Algorithm \ref{algo:unrollDiv} runs in $\bigo{m^9}$ time, where $m$ is the size of its inputs.
	\end{theorem}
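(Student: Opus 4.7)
The plan is to bound the cost of each of the four steps of Algorithm~\ref{algo:unrollDiv} separately and take the maximum. The chosen cut depth is $n = 2\alpha + \depth{\unroll{B}}$, and since both $\alpha$ (the number of trees in $\unroll{B}$) and $\depth{\unroll{B}}$ are at most the number $m$ of nodes of $B$, we have $n = \bigo{m}$. As already observed in the remark preceding the theorem, each tree in $\cut{\unroll{B}}{n}$ has size $\bigo{m^2}$ and the total forest size is $\bigo{m^3}$, and the same bounds apply to $\cut{\unroll{A}}{n}$. Thus Step~1, performed by a BFS from every periodic state up to depth $n$, runs in $\bigo{m^3}$ time.

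For Step~2, applying $\rf{\cdot}$ to each forest only grafts a single new root, so the two trees fed into the division routine of \cite{gadouleau2022factorisation_dds} still have size $N = \bigo{m^3}$. I would then invoke that algorithm as a black box and plug in its polynomial complexity. The expected dominant contribution comes from here: if the referenced algorithm runs in time $\bigo{N^3}$ on trees of size $N$, substituting $N = \bigo{m^3}$ gives $\bigo{m^9}$, accounting for the overall bound stated in the theorem.

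Steps~3 and~4 are straightforward to handle. Extracting $\dt{\tree{x}}$, choosing an arbitrary tree from it, and forming its roll of period $p = |\dt{\tree{x}}|$ is linear in the size of $\tree{x}$, hence $\bigo{m^3}$. The final verification computes the direct product $A\times X$, which has at most $\bigo{m^2}$ nodes, and checks isomorphism of the resulting functional digraph against $B$; both operations are polynomial in $m$ (isomorphism of functional digraphs being easy, e.g.\ by canonical BFS labelling), so this step is absorbed into the previous bounds.

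The main obstacle, and the only part requiring careful bookkeeping, is the second step: I need to state precisely the complexity of the division algorithm of \cite{gadouleau2022factorisation_dds} in terms of the input tree size, and confirm that composing it with the cubic-in-$m$ blow-up caused by cutting the unroll up to depth $n$ yields exactly $\bigo{m^9}$. Once that polynomial exponent is pinned down, the theorem follows by summing the per-step bounds.
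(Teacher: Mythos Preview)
Your proposal is correct and follows essentially the same step-by-step cost analysis as the paper. The paper resolves the one point you flag as an obstacle by asserting (after inspection of \cite[Figure~6]{gadouleau2022factorisation_dds}) that the tree-division routine runs in cubic time in its input size, which combined with the $\bigo{m^3}$ size of the cut trees gives exactly the $\bigo{m^9}$ you anticipate; it also sharpens your Step~3 and Step~4 bounds slightly (roll in $\bigo{m^2}$, product in $\bigo{m}$ since $|AX|\le|B|$, and isomorphism in $\bigo{m}$ via planarity and \cite{hopcroft1974linear}), but these refinements are immaterial to the overall $\bigo{m^9}$ bound.
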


	\begin{proof}
	    The cuts of depth $n$ of the unrolls of $A$ and $B$ can be computed in $\bigo{m^3}$ time and the size of the result is $\bigo{m^3}$. 
	    In fact, we can construct $\cut{\unroll{A}}{n}$ and $\cut{\unroll{B}}{n}$ backwards from their roots up to depth $n$;  the size of $\cut{\unroll{A}}{n}$ is bounded by the size of $\cut{\unroll{B}}{n}$, which is $\bigo{m^3}$. 
		By analysing the division tree algorithm in Figure 6 of \cite{gadouleau2022factorisation_dds}, we can check that it can be executed in cubic time. 
		Moreover, since its inputs have size $\bigo{m^3}$, Step \ref{algo:unrollDiv:division} of Algorithm \ref{algo:unrollDiv} requires $\bigo{m^9}$ time.
	    The roll procedure of a tree can be computed by a traversal, requiring $\bigo{m^2}$ time.
	    Finally, the product of two FDDS is quadratic-time on its input but linear-time on its output.
	    However, in our case, the size of the output of $AX$ is bounded by the size of $B$; hence, the product can be computed in $\bigo{m}$ time. Finally, the isomorphism test requires $\bigo{m}$
            because our uniform outgoing degree one graphs are planar~\cite{hopcroft1974linear}.
	\end{proof}

		\section{Complexity of computing $k$-th roots of unrolls}\label{section:root_FDDS}

	The purpose of this section is to study the problem of computing connected roots on FDDS, particularly on transients.
	Let $\forest{A} = \tree{t}_1 + \ldots + \tree{t}_n$ be a forest and $k$ a positive integer.
	Then $\forest{A}^k = \sum_{k_1+\ldots+k_n = k} \binom{k}{k_1,\ldots,k_n} \prod_{i=1}^{n} \tree{t}_i^{k_i}$; furthermore, since the sum of forests is their disjoint union, each forest (in particular $\forest{A}^k$) can be written as a sum of trees in a unique way (up to reordering of the terms).
	The injectivity of $k$-th roots, in the semiring of unrolls, has been proved in \cite{gadouleau2022factorisation_dds}. 
	Here, we study this problem from an algorithmic and complexity point of view, and find a polynomial-time upper bound for the computation of $k$-th roots.
			
	We begin by studying the structure of a forest of finite trees raised to the $k$-th power.
	Indeed, if we suppose $\forest{X} = \tree{t}_1 + \ldots +\tree{t}_n$ with $\tree{t}_i \le \tree{t}_{i+1}$, we want to be able to identify the smallest tree of $\forest{X}^k$ from the product $\tree{t}_i \times \prod_{j=1}^{n} \tree{t}_j$.
	Moreover, we want to be able to identify it for all $\tree{t}_i$.
	
	Hereafter, we consider $\tree{a}^0$ to be equivalent to the simple oriented path with length equivalent to the depth of $\tree{a}$ (the same is true for forests).
	
	\begin{lemma}\label{lemme:ordreProdFin}
    Let $\forest{X}$ be a forest of the form $\forest{X}=\tree{t}_1+\ldots+\tree{t}_n$ (with $\tree{t}_i\leq\tree{t}_{i+1}$) and $k$ a positive integer. For any tree $\tree{t}_{i}$ of depth $d_i$ in $\forest{X}$, the smallest tree $\tree{t}_s$ of depth $d_i$ with factor $\tree{t}_{i}$ in $\forest{X}^k$ is isomorphic to $\tree{t}_m^{k-1}\tree{t}_i$, where $\tree{t}_m$ is the smallest tree of $\forest{X}$ with depth at least $d_i$.
	\end{lemma}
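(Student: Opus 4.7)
The strategy is to expand $\forest{X}^k$ via the multinomial theorem and, among the terms whose associated tree has depth exactly $d_i$ and contains $\tree{t}_i$ as a factor, locate the smallest using the compatibility of the product with the total order $\le$.

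First, I would recall that by Definition~\ref{prodintrees} the product of two trees has depth equal to the minimum of the depths of its factors, so any term $\prod_{j=1}^n \tree{t}_j^{k_j}$ appearing in $\forest{X}^k = \sum_{k_1+\ldots+k_n=k} \binom{k}{k_1,\ldots,k_n}\prod_{j=1}^n \tree{t}_j^{k_j}$ is a tree of depth $\min_{j : k_j > 0} d_j$. For such a term to be admissible, that is, to contain $\tree{t}_i$ as a factor and to have depth exactly $d_i$, two conditions must hold: $k_i \ge 1$, and every index $j$ with $k_j > 0$ satisfies $d_j \ge d_i$. By definition of $\tree{t}_m$, the latter amounts to requiring $\tree{t}_j \ge \tree{t}_m$ for every index $j$ used in the product; in particular $\tree{t}_i \ge \tree{t}_m$, because $\tree{t}_i$ itself lies in the set $\{\tree{t}_j : d_j \ge d_i\}$ over which $\tree{t}_m$ is minimal.

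Next, I would invoke the monotonicity of the product recalled in Section~2 ($\tree{a}\le\tree{b}$ implies $\tree{a}\tree{c}\le\tree{b}\tree{c}$). Writing any admissible term as $T = \tree{t}_i \cdot T'$ with $T' = \tree{t}_i^{k_i - 1} \prod_{j \neq i}\tree{t}_j^{k_j}$ a product of exactly $k-1$ factors each $\ge \tree{t}_m$, repeatedly replacing each such factor by $\tree{t}_m$ can only decrease the result, so $\tree{t}_m^{k-1}\tree{t}_i \le T$. Since $\tree{t}_m^{k-1}\tree{t}_i$ is itself admissible, having depth $\min(d_m,d_i) = d_i$ and corresponding to the multinomial index with $(k_m,k_i)=(k-1,1)$ (or $k_m=k$ when $m=i$) and hence a positive coefficient, it actually appears as a tree of $\forest{X}^k$ and is therefore the smallest such tree.

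The degenerate case $k = 1$ is precisely where the convention $\tree{a}^0 =$ simple oriented path of length $\depth{\tree{a}}$ intervenes: $\tree{t}_m^{0}\tree{t}_i$ becomes such a path of length $d_m \ge d_i$ multiplied by $\tree{t}_i$, which by Definition~\ref{prodintrees} coincides with $\tree{t}_i$, matching $\forest{X}^1 = \forest{X}$. The only delicate point is that the BFS-based order does not respect depth monotonically, so $\tree{t}_m$ must be genuinely selected by a minimality condition within the subset $\{\tree{t}_j : d_j \ge d_i\}$ rather than as a prefix of the sorted sequence $\tree{t}_1 \le \ldots \le \tree{t}_n$. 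Once this observation is isolated, the lemma reduces to a clean application of multiplicative monotonicity.
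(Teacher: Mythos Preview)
Your proof is correct and follows essentially the same approach as the paper: both arguments rest on the observation that every factor of an admissible tree must have depth $\ge d_i$ (hence be $\ge \tree{t}_m$), and then use the compatibility of the order with the product to compare with $\tree{t}_m^{k-1}\tree{t}_i$. The only difference is packaging: the paper argues by contradiction with a two-case analysis (a third factor $\tree{t}_a \neq \tree{t}_m,\tree{t}_i$ appears, or $\tree{t}_i$ appears with exponent $>1$), while you give the direct argument, which is slightly cleaner.
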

	
	\begin{proof}
	    Let us assume that the smallest tree $\tree{t}_s$ of depth $d_i$ with factor $\tree{t}_{i}$ in $\forest{X}^k$ is not isomorphic to $\tree{t}_m^{k-1}\tree{t}_i$.
		Two cases are possible. Either $\tree{t}_s$ contains a third factor other than $\tree{t}_m$ and $\tree{t}_i$, or it is of the form $\tree{t}_m^{k-k_i}\tree{t}_i^{k_i}$, with $k_i>1$.
		
		In the former case, let us suppose that there exists $a \in \{1,\ldots,i-1\}\setminus\{m\}$ and $k_a > 0$ such that $\tree{t}_a \neq \tree{t}_m$ and $\tree{t}_s$ is isomorphic to $\tree{t}_i^{k_i} \tree{t}_a^{k_a} \tree{t}_m^{k_m}$. 
		Remark that, according to \cite[Lemma 10]{gadouleau2022factorisation_dds}, the smallest tree of depth $d_i$ with factor $\tree{t}_{i}$ in $\forest{X}^k$ necessarily has all its factors of depth at least $d_i$.
		For this reason, we can assume $\depth{\tree{t}_a} \ge d_i$ without loss of generality.
		However, since $\tree{t}_m < \tree{t}_a$, we have $\tree{t}_m^{k_m + 1} \tree{t}_a^{k_a - 1} < \tree{t}_m^{k_m} \tree{t}_a^{k_a}$.
		Thus, we have that $\tree{t}_i^{k_i} \tree{t}_m^{k_m + 1} \tree{t}_a^{k_a - 1}  < \tree{t}_i^{k_i} \tree{t}_m^{k_m} \tree{t}_a^{k_a}$.
		This brings us into contradiction with the minimality of $\tree{t}_s$.
		
		In the second case, we assume that $\tree{t}_s$ is isomorphic to  $\tree{t}_m^{k-k_i} \tree{t}_i^{k_i}$ with $k_i > 1$.
		By hypothesis, we have $\tree{t}_i \ge \tree{t}_m$.
		If we consider the case of $\tree{t}_m < \tree{t}_i$, we have 
		$\tree{t}_m^{k-k_i + 1} \tree{t}_i^{k_i -1} < \tree{t}_m^{k-k_i} \tree{t}_i^{k_i}$.
		Once again, this is in contradiction with the minimality of $\tree{t}_s$.
		In the case of $\tree{t}_m = \tree{t}_i$, we have $\tree{t}_m^{k - k_i +1}\tree{t}_i^{k_i -1} = \tree{t}_m^{k - k_i}\tree{t}_i^{k_i}$. But we supposed $\tree{t}_s$ not isomorphic to $\tree{t}_m^{k-1} \tree{t}_i$.
		This concludes the proof.
		\end{proof}
	
    Before describing an algorithmic technique for computing roots over unrolls (\ie forests), we need a last technical lemma.
   
	\begin{lemma}\label{lemme:compoRacineK}
		Let $\tree{x}$ and $\tree{a}$ be two finite trees such that $\tree{x}^k = \tree{a}$, and $k$ a positive integer. Then, $\dt{\tree{a}} = \dt{\tree{x}}^k$.
	\end{lemma}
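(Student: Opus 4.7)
The plan is to reduce the statement to the two-factor multiplicativity of $\dt{\cdot}$ already established in \cite[Lemma 7]{gadouleau2022factorisation_dds}, which was invoked in the proof of Lemma~\ref{lemme:forest2tree}. That lemma says that for any two finite trees $\tree{t}_1,\tree{t}_2$ one has $\dt{\tree{t}_1\tree{t}_2}=\dt{\tree{t}_1}\dt{\tree{t}_2}$. Since by hypothesis $\tree{a}=\tree{x}^k$, it suffices to prove the algebraic identity $\dt{\tree{x}^k}=\dt{\tree{x}}^k$ for every finite tree $\tree{x}$ and every positive integer $k$, and then substitute.

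I would proceed by induction on $k$. The base case $k=1$ is immediate, being just $\dt{\tree{x}}=\dt{\tree{x}}$. For the inductive step, assuming $\dt{\tree{x}^{k-1}}=\dt{\tree{x}}^{k-1}$, I would write $\tree{x}^k=\tree{x}\cdot\tree{x}^{k-1}$ and apply \cite[Lemma 7]{gadouleau2022factorisation_dds} (with $\tree{t}_1=\tree{x}$ and $\tree{t}_2=\tree{x}^{k-1}$, both finite) to obtain $\dt{\tree{x}^k}=\dt{\tree{x}}\cdot\dt{\tree{x}^{k-1}}$. The induction hypothesis then yields $\dt{\tree{x}^k}=\dt{\tree{x}}\cdot\dt{\tree{x}}^{k-1}=\dt{\tree{x}}^k$, closing the induction. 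Substituting $\tree{a}=\tree{x}^k$ gives the desired equality $\dt{\tree{a}}=\dt{\tree{x}}^k$.

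There is essentially no obstacle here: the combinatorial content, namely that the predecessors of the root of $\tree{t}_1\tree{t}_2$ are precisely the pairs of predecessors of the two roots and that the subtree at such a pair is the product of the corresponding subtrees, is already encapsulated in the cited lemma. The only point to verify is that the intermediate object $\tree{x}^{k-1}$ is itself a finite tree (so that \cite[Lemma 7]{gadouleau2022factorisation_dds} applies), which follows because the product of two finite trees is finite. The lemma is thus a short induction packaged on top of the existing machinery.
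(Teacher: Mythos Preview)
Your proof is correct and follows essentially the same approach as the paper: both reduce to \cite[Lemma 7]{gadouleau2022factorisation_dds} after observing that each intermediate power $\tree{x}^i$ is itself a finite tree. The paper simply compresses the induction you spell out into a single line.
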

	
	\begin{proof}
		Since $\tree{x}$ is a tree, for all $i \le k$, $\tree{x}^i$ is also a tree.
		According to \cite[Lemma 7]{gadouleau2022factorisation_dds}, we have $\dt{\tree{a}} = \dt{\tree{x}^k} = \dt{\tree{x}}^k$.
	\end{proof}
	
	We now introduce an algorithmic procedure to compute the roots over forests based on an induction over decreasing depths in which, each time, we reconstruct part of the solution considering the smallest tree with at least a specific depth (according to Lemmas \ref{lemme:ordreProdFin} and~\ref{lemme:compoRacineK}).
	
	\begin{algorithm}[t]
\caption{\texttt{root}}\label{alg:root}
\begin{algorithmic}[1]
\Require $\forest{A}$ a forest, $k$ an integer
\If {$\forest{A}$ is a path} \label{line:if-path}
    \State \Return $\forest{A}$ \label{line:path}
\EndIf
\State $\forest{R} \gets \varnothing$
\State $\tree{t}_m \gets \varnothing$
\State $\forest{F} \gets \forest{A}$ \label{line:copy-forest}
\While{$\forest{F} \neq \varnothing$} \label{line:while}
    \State $\forest{F} \gets \forest{A}\setminus \forest{R}^k$. \label{line:setminus}
    \State $\tree{t}_s \gets \min\{\tree{t} \mid \tree{t}\in\forest{F},\depth{\tree{t}}=\depth{\forest{F}}\}$ \label{line:min}
    \If {$\tree{t}_m=\varnothing\;\textbf{or}\;\tree{t}_m^k>\tree{t}_s$} \label{line:root-condition}
        \State $\tree{t}_i \gets \rf{\texttt{root}(\dt{\tree{t}_s},k)}$ \label{line:recursion}
        \State $\tree{t}_m \gets \tree{t}_i$
    \Else
        \State $\tree{t}_i \gets \texttt{divide}(\tree{t}_s,\tree{t}_m^{k-1})$ \label{line:divide}
    \EndIf
    \If{$\tree{t}_i = \bot\;\textbf{or}\;(\forest{R}+\tree{t}_i)^k\nsubseteq\forest{A}$} \label{line:check}
        \State \Return $\bot$
    \EndIf 
    \State $\forest{R} \gets \forest{R} + \tree{t}_i$
\EndWhile \label{line:end-while}
\State \Return $\forest{R}$
\end{algorithmic}
\label{algo:kRootForest}
\end{algorithm}

	\begin{theorem}
	    Given a forest $\forest{a}$ and $k$ a strictly positive integer, we can compute $\forest{x}$ such that $\forest{x}^k = \forest{a}$ with Algorithm \ref{alg:root}.
	\end{theorem}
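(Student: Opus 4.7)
The plan is to prove correctness (which also implies termination) of Algorithm~\ref{alg:root} by induction on the depth of the input forest, combined with an inner loop invariant. The base case, when $\forest{a}$ is a path, is handled at line~\ref{line:path}: by the convention adopted for $\tree{p}^0$, a path of depth $d$ equals its own $k$-th power, so returning $\forest{a}$ is correct.

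For the inductive step, I would write a hypothetical root as $\forest{x} = \tree{x}_1 + \ldots + \tree{x}_r$, with the factors processed in decreasing order of depth and (within a depth stratum) from smallest to largest in the order of~\cite{gadouleau2022factorisation_dds}. The invariant to establish is: at the top of each iteration, $\forest{R}$ equals the prefix $\tree{x}_1 + \ldots + \tree{x}_\ell$ of factors already recovered, $\tree{t}_m$ is the smallest among them, and $\forest{F} = \forest{a}\setminus\forest{R}^k$ consists exactly of the trees of $\forest{x}^k$ not yet accounted for by $\forest{R}^k$. Under this invariant, Lemma~\ref{lemme:ordreProdFin} identifies the minimum $\tree{t}_s$ of maximum depth in $\forest{F}$ as $\tree{u}^{k-1}\tree{x}_{\ell+1}$, where $\tree{u}$ is the smallest factor of $\forest{x}$ whose depth is at least $\depth{\tree{t}_s}$.

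The test at line~\ref{line:root-condition} discriminates between $\tree{u}=\tree{t}_m$ and $\tree{u}=\tree{x}_{\ell+1}$: compatibility of the order with products gives $\tree{x}_{\ell+1}\ge\tree{t}_m\Rightarrow\tree{t}_s\ge\tree{t}_m^k$, while $\tree{x}_{\ell+1}<\tree{t}_m$ (or $\tree{t}_m=\varnothing$) gives $\tree{t}_s=\tree{x}_{\ell+1}^k<\tree{t}_m^k$. In the first case, dividing $\tree{t}_s$ by $\tree{t}_m^{k-1}$ via the tree-division algorithm of~\cite{gadouleau2022factorisation_dds} recovers $\tree{x}_{\ell+1}$. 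In the second, Lemma~\ref{lemme:compoRacineK} yields $\dt{\tree{t}_s}=\dt{\tree{x}_{\ell+1}}^k$, so the recursive call at line~\ref{line:recursion}, operating on a forest of strictly smaller depth, returns $\dt{\tree{x}_{\ell+1}}$ by the outer induction hypothesis, and applying $\rf{\cdot}$ reconstructs $\tree{x}_{\ell+1}$.

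The main obstacle will be the case where no root of $\forest{a}$ exists, since the invariant as stated presupposes one. For this I would argue that the candidate $\tree{t}_i$ produced at each iteration is the unique tree that could possibly extend $\forest{R}$ into a valid root (again by Lemma~\ref{lemme:ordreProdFin}), and rely on the explicit check $(\forest{R}+\tree{t}_i)^k\subseteq\forest{a}$ at line~\ref{line:check} to reject any attempt that does not multiply out correctly. Thus if the algorithm terminates without returning $\bot$, the exit condition $\forest{F}=\varnothing$ together with the verification forces $\forest{R}^k=\forest{a}$; conversely, if a root exists, the invariant guarantees the check never triggers and the correct root is returned.
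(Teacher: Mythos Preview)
Your proposal is correct and follows essentially the same approach as the paper. The paper itself does not give a formal proof environment for this theorem; it only provides an informal explanation of why the algorithm works, invoking Lemma~\ref{lemme:ordreProdFin} for the divide branch and implicitly Lemma~\ref{lemme:compoRacineK} for the recursive branch, together with a termination argument based on the finiteness of trees at each depth. Your induction on depth with an explicit loop invariant, your handling of the two branches via the comparison $\tree{t}_m^k$ versus $\tree{t}_s$, and your treatment of the no-root case through the check at line~\ref{line:check} all match the paper's intended reasoning but make it considerably more precise.
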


	In Algorithm \ref{alg:root}, the main idea is to extract iteratively the minimal tree among the tallest ones in $\forest{A}$ (\ie $\tree{t}_s$).
	This tree will be used to reconstruct one of the trees of $\forest{X}$ (\ie $\tree{t}_i$). This can be done in two ways according to two possible scenarios.
	In the first case, $\tree{t}_s$ is smaller than the smallest one already reconstructed (\ie $\tree{t}_m$) raised to the power $k$. If this is the case, we compute a new tree in $\forest{X}$ through a recursive call to our $\texttt{root}$ function.
	In the second case, the extracted tree is greater than $\tree{t}_m^k$. This means, by Lemma~\ref{lemme:ordreProdFin}, that it is a product of the smallest reconstructed one (\ie $\tree{t}_m$) and a new one (\ie $\tree{t}_i$). In this case, the latter can be computed by the $\texttt{divide}$ algorithm of~\cite{gadouleau2022factorisation_dds}.
	After the reconstruction of a tree $\tree{t}_i$ of $\forest{X}$, we remove from $\forest{A}$ all the trees obtainable from products of already computed trees in $\forest{X}$. This allows us to extract progressively shorter trees $\tree{t}_s$ from $\forest{A}$ and to compute consequently shorter trees of $\forest{X}$.
	When we remove all trees in $\forest{A}$ obtainable from trees $\tree{t}_i$ with depth at least $d_i$ in $\forest{X}$, this leaves us only trees with depth at most $d_i$. Since for each depth, the number of trees of this depth is finite, the algorithm necessarily halts.

	Let us consider an example.	In Figure~\ref{fig:root_intuition}, in order to compute the left side from the right one, the first tree considered is $\tree{t}_1^2$, the single tallest one. The latter can be used to compute $\tree{t}_1$ recursively. Next, the smallest one among the remaining ones is $\tree{t}_0^2$, which is smaller than $\tree{t}_1^2$. Thus, we can compute $\tree{t}_0$ again through recursion. Finally, the last tree extracted, after removing the trees with exclusively $\tree{t}_0$ and $\tree{t}_1$ as factors, is $\tree{t}_0\tree{t}_2$. Since this time $\tree{t}_0^2$ is smaller, we can get the third and final tree $\tree{t}_2$ by dividing it by the smallest computed tree yet.
	\vspace{-0.2cm}
	
\newcommand{\ft}[1]{\tree{t}_{#1}}
\begin{figure}[b]
\centering
\begin{tikzpicture}[xscale=1.15,yscale=.65]
\tikzset{txtnode/.style={scale=1},
ddsnode/.style={scale=1.1,outer sep=0,inner sep=2}}
\begin{scope}
\foreach \t [count=\i from 0] in {$\ft{0}$,$\ft{1}$,$\ft{2}$}
{
\node[txtnode]at(\i,-.7){\t};
\node[ddsnode](\i_0_0)at(\i,0){$\bullet$};
}
\foreach \t [count=\i from 0] in {$<$,$<$}
{
\node[txtnode]at(\i+.5,-.7){\t};
\node[txtnode]at(\i+.5,0){$+$};
}

\node[ddsnode](0_1_0)at(0,1){$\bullet$};
\draw[ddsedge](0_1_0)--(0_0_0);

\node[ddsnode](1_1_0)at(1,1){$\bullet$};
\node[ddsnode](1_2_0)at(1,2){$\bullet$};
\draw[ddsedge](1_2_0)--(1_1_0);
\draw[ddsedge](1_1_0)--(1_0_0);

\node[ddsnode](2_1_0)at(1.8,1){$\bullet$};
\node[ddsnode](2_1_1)at(2.2,1){$\bullet$};
\draw[ddsedge](2_1_0)--(2_0_0);
\draw[ddsedge](2_1_1)--(2_0_0);

\draw (-.3,0) to [leftp] (-.3,2);
\draw (2.3,0) to [rightp] (2.3,2);
\node[txtnode] at (2.5,2){$2$};
\end{scope}
\node[txtnode]at(3,0){$=$};
\begin{scope}[xshift=4cm]
\foreach \t [count=\i from 0] in {$\ft{0}^2$,$\ft{0}\ft{1}$,$\ft{1}^2$,$\ft{0}\ft{2}$,$\ft{1}\ft{2}$,$\ft{2}^2$}
{
\node[txtnode]at(\i,-.7){\t};
\node[ddsnode](\i_0_0)at(\i,0){$\bullet$};
}
\foreach \t [count=\i from 0] in {$\leq$,$<$,$<$,$\leq$,$<$}
{\node[txtnode]at(\i+.5,-.7){\t};}
\foreach \t [count=\i from 0] in {$+\;2\;\times$,$+$,$+\;2\;\times$,$+\;2\;\times$,$+$}
{\node[txtnode,anchor=south]at(\i+.5,-.35){\t};}

\node[ddsnode](0_1_0)at(0,1){$\bullet$};
\draw[ddsedge](0_1_0)--(0_0_0);

\node[ddsnode](1_1_0)at(1,1){$\bullet$};
\draw[ddsedge](1_1_0)--(1_0_0);

\node[ddsnode](2_1_0)at(2,1){$\bullet$};
\node[ddsnode](2_2_0)at(2,2){$\bullet$};
\draw[ddsedge](2_2_0)--(2_1_0);
\draw[ddsedge](2_1_0)--(2_0_0);

\node[ddsnode](3_1_0)at(2.8,1){$\bullet$};
\node[ddsnode](3_1_1)at(3.2,1){$\bullet$};
\draw[ddsedge](3_1_0)--(3_0_0);
\draw[ddsedge](3_1_1)--(3_0_0);

\node[ddsnode](4_1_0)at(3.8,1){$\bullet$};
\node[ddsnode](4_1_1)at(4.2,1){$\bullet$};
\draw[ddsedge](4_1_0)--(4_0_0);
\draw[ddsedge](4_1_1)--(4_0_0);

\node[ddsnode](5_1_0)at(4.7,1){$\bullet$};
\node[ddsnode](5_1_1)at(4.9,1){$\bullet$};
\node[ddsnode](5_1_2)at(5.1,1){$\bullet$};
\node[ddsnode](5_1_3)at(5.3,1){$\bullet$};
\draw[ddsedge](5_1_0)--(5_0_0);
\draw[ddsedge](5_1_1)--(5_0_0);
\draw[ddsedge](5_1_2)--(5_0_0);
\draw[ddsedge](5_1_3)--(5_0_0);
\end{scope}
\draw[line width=.1mm,gray](-.5,-.3)--(9.5,-.3);
\end{tikzpicture}
\caption{Order of the trees in the square of a forest.}
\label{fig:root_intuition}
\end{figure}

	\begin{theorem}\label{th:compKRoot}
		Algorithm \ref{alg:root} runs in $\bigo{m^3}$ time if $k$ is at most $\lfloor \log_2 m \rfloor$, where $m$ is the size of $\forest{a}$.
	\end{theorem}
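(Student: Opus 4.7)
The plan is to decompose the runtime analysis into three parts: the number of iterations of the outer \textbf{while} loop, the per-iteration cost, and the aggregate cost of the recursive calls to \texttt{root}. The base case on Line~\ref{line:if-path} is trivially $\bigo{m}$, so I focus on the non-trivial branch.

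First I would bound the outer loop. Each successful iteration appends exactly one tree to $\forest{R}$, and the check on Line~\ref{line:check} maintains the invariant $\forest{R}^k \subseteq \forest{A}$; consequently $|\forest{R}| \le |\forest{A}| = m$, giving at most $m$ iterations, after which either $\forest{F}$ becomes empty (because $\forest{R}^k = \forest{A}$) or $\bot$ has been returned earlier.

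Next I would estimate the cost of a single iteration. The operations are the computation of $\forest{F} = \forest{A} \setminus \forest{R}^k$ on Line~\ref{line:setminus}, the minimum-tallest search on Line~\ref{line:min}, the invocation of the recursion on Line~\ref{line:recursion} or of \texttt{divide} on Line~\ref{line:divide}, and the inclusion test on Line~\ref{line:check}. The delicate point is the evaluation of $\forest{R}^k$: with $k \le \lfloor \log_2 m \rfloor$, binary exponentiation needs $\bigo{\log \log m}$ forest products. The sizing claim I need is that every intermediate power $\forest{R}^j$ with $1 \le j \le k$ has size $\bigo{m}$; this follows from the monotonicity of the forest product, so that $\forest{R}^j$ embeds into $\forest{R}^k \subseteq \forest{A}$. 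Each such product then costs $\bigo{m^2}$; the minimum search and multiset subtraction cost $\bigo{m}$; and the \texttt{divide} call is cubic in the size of its operands~\cite{gadouleau2022factorisation_dds}.

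The principal obstacle is controlling the total cost of the recursive calls on $\dt{\tree{t}_s}$ (Line~\ref{line:recursion}). Each such call reduces the tree depth strictly, so the recursion tree has depth at most $\depth{\forest{A}}$. My plan is to amortize: across all subproblems at a fixed level of the recursion tree, the arguments correspond to disjoint submultisets of subtrees appearing in $\forest{A}$, so their sizes sum to at most $m$; applying an inequality of the form $\sum_i s_i^3 \le \bigl(\sum_i s_i\bigr)^3$ then keeps the total recursion cost within $\bigo{m^3}$. Combining the at most $m$ outer iterations (each contributing $\bigo{m^2}$ in forest arithmetic) with the amortized cubic recursion and division costs yields the claimed $\bigo{m^3}$ bound.
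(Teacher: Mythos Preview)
Your decomposition into outer-loop work, per-iteration cost, and recursion cost matches the paper's, and your per-iteration analysis is sound (though the paper is slightly tighter, noting that tree products run in time linear in their output, so computing $\forest{R}^k$ costs $\bigo{m\log m}$ rather than your $\bigo{m^2}$). The genuine gap is in your handling of the recursion. You observe correctly that at any fixed level of the recursion tree the inputs are disjoint sub-forests of $\forest{A}$ and therefore have total size at most $m$; applying $\sum_i s_i^3\le\bigl(\sum_i s_i\bigr)^3$ then bounds the work \emph{at that level} by $\bigo{m^3}$. But you also record that the recursion can have depth up to $\depth{\forest{A}}$, and you never combine these two facts: multiplied together they yield only $\bigo{m^3\cdot\depth{\forest{A}}}=\bigo{m^4}$, not the claimed $\bigo{m^3}$. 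Nothing in your sketch explains why the per-level bounds should telescope rather than accumulate.

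The paper sidesteps the level count by arguing inductively rather than level-by-level. At the top-level call, the trees $\tree{t}_s$ selected across all iterations of the while loop are distinct trees of $\forest{A}$, so their sizes $m_s$ satisfy $\sum m_s\le m$ and hence $\sum m_s^3\le m^3$. This single inequality is then used twice: once to bound the total cost of all \texttt{divide} calls on Line~\ref{line:divide} (each costing $\bigo{m_s^3}$), and once---via the inductive hypothesis that the algorithm runs in $\bigo{m_s^3}$ on an input of size $m_s$---to bound the total cost of all recursive calls on Line~\ref{line:recursion}. The recursion tree is thus collapsed in one step, and its depth never enters the accounting. Replacing your per-level amortization with this inductive use of $\sum m_s^3\le m^3$ would close the gap.
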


	\begin{proof}
	    If $\forest{A}$ is a path, then the algorithm halts in linear time $\bigo{m}$ on line \ref{line:path}.
        
        Otherwise, there exists a level $i$ of $\forest{A}$ containing $\beta \ge 2$ nodes. In order to justify the upper bound on $k$, suppose $\forest{R}^k = \forest{A}$. Then, level $i$ of $\forest{R}$ contains $\sqrt[k]{\beta}$ nodes. The smallest integer greater than $1$ having a $k$-th root is $2^k$, thus $\beta \ge 2^k$. Since $\beta \le m$, we have $k \le \lfloor \log_2 m \rfloor$.
        
        Lines \ref{line:if-path}--\ref{line:copy-forest} take linear time $\bigo{m}$. The while loop of lines \ref{line:while}--\ref{line:end-while} is executed, in the worst case, once per tree of the $k$-th root $\forest{R}$, \ie a number of times equal to the $k$-th root of the number of trees in $\forest{A}$. Line \ref{line:while} takes $\bigo{1}$ time.
        The product of trees requires linear time in its output size. Consequently, $\forest{R}^k$ can be computed in $\bigo{m \log m}$ time.
        Moreover, since we can remove $\forest{R}^k$ from $\forest{A}$ in quadratic time, we deduce that line \ref{line:setminus} takes $\bigo{m^2}$ time. 
        Since the search of $\tree{t}_s$ consists of a simple traversal, we deduce that line \ref{line:min} takes $\bigo{m}$ time. Line \ref{line:root-condition} takes $\bigo{m \log m}$ time for computing $\tree{t}_m^k$. If no recursive call is made, line \ref{line:divide} is executed in time $\bigo{m_s^3}$, where $m_s$ is the size of $\tree{t}_s$. The runtime of lines \ref{line:check}--\ref{line:end-while} is dominated by line \ref{line:check}, which takes $\bigo{m^2}$ as line \ref{line:setminus}.
        
        Since each tree $\tree{t}_s$ in $\forest{A}$ is used at most once, we have $\sum m_s^3 \le m^3$; as a consequence, the most expensive lines of the algorithm (namely, \ref{line:setminus}, \ref{line:divide}, and \ref{line:check}) have a total runtime of $\bigo{m^3}$ across all iterations of the while loop.
        
        We still need to take into account the recursive calls of line \ref{line:recursion}. By taking once again into account the bound $\sum m_s^3 \le m^3$, the total runtime of these recursive calls is also $\bigo{m^3}$.
        We conclude that Algorithm \ref{alg:root} runs in time $\bigo{m^3}$.
    \end{proof}
	
	\begin{corollary}\label{cor:findk}
	Let $\forest{A}$ be a forest. Then, it is possible to decide in polynomial time if there exists a forest $\forest{X}$ and an integer $k>1$ such that $\forest{X}^k=\forest{A}$. 
	\end{corollary}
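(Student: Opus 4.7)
The plan is to reduce the question to a bounded search over the exponent $k$ and invoke Algorithm~\ref{alg:root} for each candidate. The key observation, which is already essentially contained in the proof of Theorem~\ref{th:compKRoot}, is that the exponent $k$ of any nontrivial $k$-th root of a non-path forest $\forest{A}$ must satisfy $k \le \lfloor \log_2 m \rfloor$, where $m$ is the size of $\forest{A}$. Indeed, if $\forest{A}$ is not a path, then there exists some level of $\forest{A}$ containing $\beta \ge 2$ nodes, and if $\forest{X}^k = \forest{A}$ the corresponding level of $\forest{X}$ has exactly $\sqrt[k]{\beta}$ nodes, forcing $\beta \ge 2^k$ and hence $k \le \log_2 m$.

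First I would dispose of the trivial case in which $\forest{A}$ is a (disjoint union of) simple oriented path(s): such a forest is stable under the tree product, so it is its own $k$-th root for every $k>1$, and the answer is immediately \emph{yes}. Otherwise, by the observation above, it suffices to iterate $k$ over the finite range $\{2, 3, \ldots, \lfloor \log_2 m \rfloor\}$ and to run Algorithm~\ref{alg:root} on input $(\forest{A}, k)$ for each such $k$. By the correctness already established, the algorithm either returns a forest $\forest{X}$ with $\forest{X}^k = \forest{A}$ or returns $\bot$ when no such root exists; hence a nontrivial root exists if and only if at least one of these calls succeeds.

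For the complexity, each call to Algorithm~\ref{alg:root} runs in $\bigo{m^3}$ time by Theorem~\ref{th:compKRoot} (the restriction $k \le \lfloor \log_2 m \rfloor$ needed by that theorem is exactly the range we search over). Since we perform $\bigo{\log m}$ such calls, the total running time is $\bigo{m^3 \log m}$, which is polynomial in the size of the input.

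There is no real obstacle here beyond making explicit the bound on $k$, which is already implicit in the previous proof; the main subtlety is simply to single out the path case, where every $k$ works and the \emph{a priori} bound $k \le \log_2 m$ does not apply, so that the brute-force enumeration of $k$ remains sound and polynomial for all other inputs.
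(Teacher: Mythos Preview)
Your approach is essentially the same as the paper's: bound $k$ by $\lfloor \log_2 m \rfloor$, iterate over all candidate exponents, and call Algorithm~\ref{alg:root}; the paper's proof is just a two-line version of this, arriving at the same $\bigo{m^3 \log m}$ bound.

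There is, however, a small slip in your handling of the trivial case. A \emph{single} path is indeed its own $k$-th root for every $k$, but a disjoint union of several paths is \emph{not} in general stable under the product. For instance, if $\forest{A}$ consists of two paths (of any depths), then any $\forest{X}$ with $\forest{X}^k=\forest{A}$ would need $n^k=2$ trees for some integer $n$, which is impossible for $k>1$; your procedure would nevertheless output \emph{yes}. The fix is immediate: only special-case the single-path forest. Any forest that is not a single path already has some level with $\beta\ge 2$ nodes (at worst, the level of the roots), so the bound $k\le\lfloor\log_2 m\rfloor$ applies and the bounded search handles it correctly.
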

	
	\begin{proof}
	Since $k$ is bounded by the logarithm of the size of $\forest{a}$ and, according to Theorem \ref{th:compKRoot}, we can compute the root in $\bigo{m^3}$, we can test all $k$ (up to the bound) and check if there exists a $\forest{X}$ such that $\forest{x}^{k} = \forest{a}$ in $\bigo{m^3 \log m}$, where $m$ denotes once again the size of $\forest{a}$. 
	\end{proof}
	
	According to Corollary \ref{cor:findk}, we easily conclude that the corresponding enumeration problem of finding all solutions $\forest{X}$ (for all powers $k$) is in $\EnumP$, since the verification of a solution can be done in polynomial time and the size of a solution is polynomial in the size of the input. 
	Moreover, the problem is in the class $\DelayP$ since the time elapsed between the computation of one solution (for a certain $k$) and the next is polynomial.
	We refer the reader to \cite{strozecki2021enum_complexity} for more information about enumeration complexity classes.
		
	Now that we have a technique to compute the root of forests, let us think in terms of unrolls of FDDS.
	Consider an FDDS $A$ and its unroll $\forest{A}=\unroll{A}$.
	According to Proposition \ref{prop:divideForestFini}, we can compute the FDDS $X$ such that $A=X^k$ by considering the forest $\forest{F}=\cut{\forest{A}}{n}$ where $\alpha$ is the number of trees in $\forest{f}$ and $n=\alpha+\depth{\forest{A}}$. 
    Again, this depth allows us to ensure that all the transient dynamics of the dynamical system are represented in the different trees.
	Applying the root algorithm on $\forest{F}$, we obtain the result of the root as a forest of finite trees.
	However, this is just a candidate solution for the corresponding problem over the initial FDDS (for the same reasoning as in the case of division).
	In order to test the result of the root algorithm, as before,
        we realise the roll of one tree in the solution to period $p$,
        with $p$ as the number of trees in the result.
	Then to decide if $X$ is truly the $k$-th root of $A$, we verify if $X^k=A$ where $X$ is the result of the roll operation.
	That is possible because the algorithm is designed to study connected solutions.
	Indeed, the following holds.

    \begin{corollary}\label{cor:complDDS}
	Let $A$ be a FDDS, it is possible to decide if there exists a connected FDDS $X$ and an integer $k>1$ such that $X^k=A$ in polynomial time. 
	\end{corollary}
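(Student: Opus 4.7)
The plan is to formalize the informal description preceding the statement, combining the earlier machinery in the same pattern as Algorithm~\ref{algo:unrollDiv}. Given an FDDS $A$ of size $m$, first compute $\unroll{A}$ and cut it at depth $n = \alpha + \depth{\unroll{A}}$, where $\alpha$ is the number of trees in $\unroll{A}$, producing in polynomial time a forest $\forest{F} = \cut{\unroll{A}}{n}$ of size $\bigo{m^3}$. This depth is large enough for Proposition~\ref{prop:divideForestFini} to translate the FDDS equation $X^k = A$ into the forest equation $\cut{\unroll{X}}{n}^k = \forest{F}$.

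Next, iterate over every integer $k$ with $2 \le k \le \lfloor \log_2 m \rfloor$; Theorem~\ref{th:compKRoot} shows that no larger $k$ can admit a nontrivial $k$-th root of a forest of this size. For each such $k$, invoke Algorithm~\ref{alg:root} on $\forest{F}$, which either returns $\bot$ or a candidate forest $\forest{X}$ with $\forest{X}^k = \forest{F}$. Whenever a candidate is produced, build the candidate connected FDDS $X$ as the roll with period $p$ of any tree of $\forest{X}$, where $p$ is the number of trees in $\forest{X}$; the restriction to \emph{connected} solutions is what makes this roll well defined, since then all trees of the genuine $\unroll{X}$ are isomorphic images of one another under shifts of the period, exactly as exploited in Algorithm~\ref{algo:unrollDiv}. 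Finally, verify $X^k = A$ and return $X$ on success.

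Correctness is two-sided. If a connected $X$ with $X^k = A$ exists, then $\unroll{X}^k = \unroll{A}$; by Proposition~\ref{prop:divideForestFini} this equality passes to the cuts, so Algorithm~\ref{alg:root} recovers $\cut{\unroll{X}}{n}$ (using the injectivity of forest $k$-th roots recalled from~\cite{gadouleau2022factorisation_dds}), and the rolled candidate $X$ passes the verification. Conversely, Algorithm~\ref{alg:root} might return a spurious forest that is not the cut of any unroll; the final isomorphism test $X^k = A$ rejects such candidates, exactly as in Step~\ref{algo:unrollDiv:verification} of Algorithm~\ref{algo:unrollDiv} and illustrated by the counterexample in Figure~\ref{fig:unroll-counter-example}.

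For complexity, the cut costs $\bigo{m^3}$; each call to Algorithm~\ref{alg:root} on an input of size $\bigo{m^3}$ runs in $\bigo{m^9}$ by Theorem~\ref{th:compKRoot}; the roll is polynomial; and the verification reduces to a product bounded by the size of $A$ followed by a planar-graph isomorphism test in $\bigo{m}$. Iterating over the $\bigo{\log m}$ admissible values of $k$ yields an overall polynomial time bound. The only genuine subtlety, and hence the main obstacle, is the soundness of the verification step: it addresses the same spurious-candidate phenomenon already settled for division, and is handled here by the explicit check $X^k = A$ performed on the rolled candidate.
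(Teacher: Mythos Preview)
Your proposal is correct and follows essentially the same approach as the paper: the paragraph immediately preceding the corollary sketches precisely this procedure of cutting the unroll at depth $n$, applying Algorithm~\ref{alg:root}, rolling one tree of the candidate forest at period $p$, and then verifying $X^k=A$, while the iteration over $k$ mirrors Corollary~\ref{cor:findk}. The only minor imprecision is that the bound $k \le \lfloor \log_2 m \rfloor$ is more directly justified from $|A|=|X|^k$ at the FDDS level than by invoking Theorem~\ref{th:compKRoot} on the cut forest $\forest{F}$ (whose size is $\bigo{m^3}$, which would yield the looser bound $k \le 3\lfloor \log_2 m \rfloor$), but this does not affect correctness.
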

	
    By combining the division algorithm with the root algorithm, we are now able to study equations of the form $AX^k = B$.
	Given FDDS $A$ and $B$ and $k > 0$, we can first compute the result $\forest{Y}$ of the division of $\cut{\unroll{B}}{n}$ by  $\cut{\unroll{A}}{n}$, where $\alpha$ is the number of trees in $\unroll{B}$ and $n = \alpha + \depth{B}$.
	Then, we compute the $k$-th root $\forest{X}$ of $\forest{Y}$.
	After that, we make the roll of one tree of $\forest{x}$ in period $p$, with $p$ the number of trees in $\forest{X}$.
	Then, using the roll result $X$, we just need to verify if $A X^k=B$. 
	Once again, the solutions found by this method are only the connected ones, and further non-connected solutions are also possible.

	\section{Unroll Quotient}\label{section:divison_unroll}
    
    Despite the fact that we have constructed a polynomial-time algorithm
    for the division of finite trees and forests,
    we do not have one for the division of unrolls (an unroll being encoded by its FDDS).
    Indeed, Proposition~\ref{prop:divideForestFini} is a first step,
    but we do not know how to recover an unroll from the finite forest
    obtained by the division algorithm from~\cite{gadouleau2022factorisation_dds}:
    \[
      \cfrac{\cutUn{B}}{\cutUn{A}}.
    \]
    We will achieve this in the present section, but to begin,
    we are not even sure that this quotient
    is always the cut of an unroll.
    We start by proving that this first step is sound,
    in the sense that there always exists an unroll whose cut is this quotient
    (Lemma~\ref{lemma:quotient_unroll2}).

    For this purpose, we introduce two new notations. 
    Given $\tree{u}$ an unroll tree, we denote by $\shift{\tree{u}}{i}$
    its \emph{shift}, obtained by removing the nodes of the infinite branch of $\tree{u}$
    (and the finite trees hooked at them)
    whose depth is less than $i$.
    We denote by $\Shift{\tree{u}}$ the set of unroll trees generated from
    $p$ the size of the smallest periodic pattern of $\tree{u}$, formally
    $\Shift{\tree{u}} = \bigcup_{i=0}^{p} \shift{\tree{u}}{i}$.
    We call $\Shift{\tree{u}}$ the \emph{shifts} of $\tree{u}$,
    it basically allows to reconstruct all the unroll trees of a connected component
    (cycle in the dynamical system),
    from a single unroll tree.
    The first lemma is technical, and does not yet speaks about $\forest{x}$.

    \begin{lemma}\label{lemma:shift}
        Let $A,B$ be two FDDS, $\tree{a},\tree{b}$ be two unroll trees of $\unroll{A}, \unroll{B}$ respectively. 
        If there exists $\tree{x}$ an infinite tree such that $\tree{a} \tree{x} = \tree{b}$ and $\Shift{\tree{a}} \tree{x} \subseteq \unroll{B}$ then $\tree{x}$ is an unroll tree and $\tree{a} \Shift{\tree{x}} \subseteq \unroll{B}$.
    \end{lemma}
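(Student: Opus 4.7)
The plan is to establish the two conclusions in sequence. First, that $\tree{x}$ has exactly one infinite branch follows from a counting argument: each infinite branch of the direct product $\tree{a}\tree{x}$ is a pair consisting of one infinite branch of $\tree{a}$ and one of $\tree{x}$, so since $\tree{a}$ and $\tree{b} = \tree{a}\tree{x}$ each have exactly one, $\tree{x}$ must as well. To show periodicity along this branch, I will use the identity $\shift{\tree{t}\tree{s}}{j} = \shift{\tree{t}}{j}\,\shift{\tree{s}}{j}$, which is immediate from the layerwise nature of the direct product. Letting $p_a, p_b$ denote the smallest periods of $\tree{a}, \tree{b}$ and setting $\pi = \lcm(p_a, p_b)$, for any $j_1, j_2$ with $j_1 \equiv j_2 \pmod \pi$ we have $\shift{\tree{a}}{j_1} = \shift{\tree{a}}{j_2}$ and $\shift{\tree{b}}{j_1} = \shift{\tree{b}}{j_2}$; applying the identity to $\tree{a}\tree{x} = \tree{b}$ then yields
\[
  \shift{\tree{a}}{j_1}\,\shift{\tree{x}}{j_1} \;=\; \shift{\tree{a}}{j_1}\,\shift{\tree{x}}{j_2}.
\]

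The hard part is cancelling $\shift{\tree{a}}{j_1}$: the unroll cancellation theorem does not apply directly, because $\shift{\tree{x}}{j_1}$ and $\shift{\tree{x}}{j_2}$ are not yet known to be unroll trees. The plan is to bypass this circularity by passing to finite cuts. For each $d \ge 0$, cutting both sides of the displayed equation at depth $d + \depth{\tree{a}}$ preserves the equality, since cuts commute with products, and Lemma~\ref{lemme:cancelFiniForest} applied to the resulting finite-tree equation yields $\cut{\shift{\tree{x}}{j_1}}{d} = \cut{\shift{\tree{x}}{j_2}}{d}$. Since $d$ is arbitrary, $\shift{\tree{x}}{j_1} = \shift{\tree{x}}{j_2}$, so $\tree{x}$ is $\pi$-periodic on its infinite branch and hence an unroll tree.

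For the second conclusion, given $j$ between $0$ and $p_x - 1$ (where $p_x$ is the smallest period of $\tree{x}$), I will pick $i \in \{0,\ldots,p_a-1\}$ with $i + j \equiv 0 \pmod{p_a}$. By hypothesis, $\shift{\tree{a}}{i}\tree{x}$ belongs to $\unroll{B}$, and every one of its shifts therefore belongs to $\unroll{B}$ as well, since shifts of an unroll tree are unroll trees in the same connected component. Applying once more the shift-of-product identity,
\[
  \shift{\shift{\tree{a}}{i}\tree{x}}{j} \;=\; \shift{\tree{a}}{i+j}\,\shift{\tree{x}}{j} \;=\; \tree{a}\,\shift{\tree{x}}{j},
\]
so $\tree{a}\,\shift{\tree{x}}{j} \in \unroll{B}$ for every $j$, giving $\tree{a}\,\Shift{\tree{x}} \subseteq \unroll{B}$ as required.
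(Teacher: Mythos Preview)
Your argument is correct and, for the first conclusion, actually more detailed than the paper's. The paper simply asserts that the hypotheses force $\tree{x}$ to be an unroll tree with a periodic pattern of size $\lcm(a,b)$ and moves on; you supply a genuine proof, and your trick of passing to finite cuts and invoking Lemma~\ref{lemme:cancelFiniForest} to cancel $\shift{\tree{a}}{j_1}$ neatly sidesteps the circularity you identified (cancellation for unrolls is not yet available for $\tree{x}$). For the second conclusion the two proofs differ in shape: the paper argues by induction on the shift amount, showing that $\Shift{\tree{a}}\,\shift{\tree{x}}{1}\subseteq\unroll{B}$ (each element being the $1$-shift of something in $\unroll{B}$), and iterates to get $\Shift{\tree{a}}\,\shift{\tree{x}}{k}\subseteq\unroll{B}$ for every $k$, then restricts to $\tree{a}$. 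Your direct choice of $i$ with $i+j\equiv 0\pmod{p_a}$ collapses this induction into a single step and is arguably cleaner. One small point you omit but the paper makes explicit: for $\tree{a}\,\Shift{\tree{x}}\subseteq\unroll{B}$ to hold as a \emph{multiset} inclusion one should observe that the trees $\tree{a}\,\shift{\tree{x}}{j}$ are pairwise distinct, which follows immediately from cancellation on trees (\cite[Lemma~24]{gadouleau2022factorisation_dds}).
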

    
    \begin{proof}
        We assume that there exists $\tree{x}$ such that $\tree{a} \tree{x} = \tree{b}$ and $\Shift{\tree{a}}\tree{x} \subseteq \unroll{B}$.
        This implies that $\tree{x}$ is an unroll tree
        with a periodic pattern of size $\lcm(a,b)$
        with $a,b$ the size of a periodic pattern of $\tree{a}, \tree{b}$ respectively.   
        For each $\tree{a}_i \in \Shift{\tree{a}}$ we deduce that
        $\shift{\tree{x}}{1} \shift{\tree{a}_i}{1} = \shift{\tree{x}\tree{a}_i}{1}$,
        which belongs to $\unroll{B}$. 
        And, by union, $\shift{\tree{x}}{1} \Shift{\tree{a}} \subseteq \unroll{B}$.
        By induction, it follows that $\shift{\tree{x}}{k} \Shift{\tree{a}} \subseteq \unroll{B}$, for each $k$. 
        Thereby, $\Shift{\tree{x}} \tree{a} \subseteq \unroll{B}$. 
        Indeed, each element of $\Shift{\tree{x}} \tree{a}$ has multiplicity $1$
        because if there exist two elements $\tree{t}$ and $\tree{u}$ in $\Shift{\tree{x}}$
        such that $\tree{t} \tree{a} = \tree{u} \tree{a}$, by \cite[Lemma 24]{gadouleau2022factorisation_dds},
        we have that $\tree{t} = \tree{u}$.
    \end{proof}
    
    To continue our reasoning, given a forest $\forest{x}$ we denote $\min(\forest{x})$
    the minimum tree among $\forest{x}$, according to the total order introduced in~\cite{gadouleau2022factorisation_dds}.
    Recall that this order is compatible with the product,
    therefore $\min(\forest{x})$ will allow us to identify some factor of a product.
    Lemma~\ref{lemma:quotient_unroll1} will be the induction step for Lemma~\ref{lemma:quotient_unroll2},
    which will consist in: handle the $\min$, remove it, handle the next $\min$, remove it...

    \begin{lemma}\label{lemma:quotient_unroll1}
        Let $A,B$ be two FDDS. 
        If there exists $\forest{x}$ such that $\unroll{A} \forest{x} = \unroll{B}$ then $\min(\forest{X})$ is an unroll tree and $\Shift{\min(\forest{X})} \subseteq \forest{x}$.
    \end{lemma}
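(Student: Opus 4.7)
My plan is to extract minima and invoke Lemma~\ref{lemma:shift}. Let $\tree{a}_m = \min(\unroll{A})$ and $\tree{x}_m = \min(\forest{x})$. Since the total order on trees is compatible with the product, we have $\min(\unroll{A}\,\forest{x}) = \tree{a}_m \tree{x}_m$, and therefore $\tree{b}_m := \min(\unroll{B}) = \tree{a}_m \tree{x}_m$ is an unroll tree. To apply Lemma~\ref{lemma:shift} to the triple $(\tree{a}_m, \tree{x}_m, \tree{b}_m)$ the only nontrivial hypothesis to check is $\Shift{\tree{a}_m}\,\tree{x}_m \subseteq \unroll{B}$: this is immediate since $\Shift{\tree{a}_m} \subseteq \unroll{A}$ (the shifts of an unroll tree form its cyclic component inside the unroll) and $\tree{x}_m \in \forest{x}$, so every product on the left also appears in $\unroll{A}\,\forest{x} = \unroll{B}$. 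The lemma then yields that $\tree{x}_m$ is an unroll tree---the first claim of the statement---and furthermore $\tree{a}_m \cdot \Shift{\tree{x}_m} \subseteq \unroll{B}$.

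To handle the remaining inclusion $\Shift{\tree{x}_m} \subseteq \forest{x}$, I would first upgrade the weak compatibility of the order to strict compatibility, namely $\tree{u} < \tree{v}$ implies $\tree{u}\tree{w} < \tree{v}\tree{w}$. Indeed, equality on the right would give $\tree{u} = \tree{v}$ by the cancellation theorem of Section~\ref{section:division_FDDS}, contradicting $\tree{u} < \tree{v}$. A first consequence is that $\tree{x}_m = \min(\Shift{\tree{x}_m})$: any $\shift{\tree{x}_m}{i} < \tree{x}_m$ would yield $\tree{a}_m\,\shift{\tree{x}_m}{i} < \tree{b}_m$ while still lying in $\unroll{B}$, contradicting the minimality of $\tree{b}_m$.

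Now fix $\tree{s} \in \Shift{\tree{x}_m}$. Since $\tree{a}_m\tree{s} \in \unroll{B} = \unroll{A}\,\forest{x}$, there exist $\tree{a}' \in \unroll{A}$ and $\tree{y}' \in \forest{x}$ such that $\tree{a}'\tree{y}' = \tree{a}_m\tree{s}$. Strict compatibility and cancellation produce a clean dichotomy: either $(\tree{a}', \tree{y}') = (\tree{a}_m, \tree{s})$, in which case $\tree{s} = \tree{y}' \in \forest{x}$ as desired, or $\tree{a}' > \tree{a}_m$ together with $\tree{y}' < \tree{s}$.

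The main obstacle is ruling out this second case. My plan is to first apply Lemma~\ref{lemma:shift} to every triple $(\shift{\tree{a}_m}{j}, \tree{x}_m, \shift{\tree{a}_m}{j}\tree{x}_m)$---the hypothesis $\Shift{\shift{\tree{a}_m}{j}}\,\tree{x}_m = \Shift{\tree{a}_m}\,\tree{x}_m \subseteq \unroll{B}$ is already available---thus obtaining the full rectangle $\Shift{\tree{a}_m} \cdot \Shift{\tree{x}_m} \subseteq \unroll{B}$, which exactly describes how the cycle $\Shift{\tree{b}_m}$ of $\unroll{B}$ is produced from $\unroll{A}$ and $\Shift{\tree{x}_m}$. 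Combining this with the cancellation theorem and a careful multiset-multiplicity comparison between $\unroll{A}\cdot\forest{x}$ and $\unroll{B}$ on the products of the form $\tree{a}_m\tree{s}$ should force, for each $\tree{s} \in \Shift{\tree{x}_m}$, the existence of a realizing pair $(\tree{a}_m, \tree{s})$ in the factorization, yielding $\tree{s} \in \forest{x}$ and concluding the proof.
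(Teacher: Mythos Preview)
Your argument for the first claim (that $\tree{x}_m:=\min(\forest{x})$ is an unroll tree) via Lemma~\ref{lemma:shift} is correct and matches the paper. The dichotomy you set up for the second claim is also sound. The genuine gap is your last paragraph: the ``multiset-multiplicity comparison'' is only announced, not carried out, and the scaffolding you propose does not obviously support it. In particular, the product $\tree{a}_m\tree{s}$ with $\tree{s}=\shift{\tree{x}_m}{i}$ need not lie in $\Shift{\tree{b}_m}$ at all (it does only when $i$ is a multiple of the period of $\tree{a}_m$), so the rectangle $\Shift{\tree{a}_m}\cdot\Shift{\tree{x}_m}$ does \emph{not} ``exactly describe the cycle $\Shift{\tree{b}_m}$''; it spreads over $\gcd$-many shift classes of $\unroll{B}$, and you have no control over how the other factorizations of those classes distribute between $\unroll{A}$ and $\forest{x}$. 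I do not see how to rule out your bad case $\tree{a}'>\tree{a}_m,\ \tree{y}'<\tree{s}$ by counting alone.

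The paper closes this with a shift-back trick that sidesteps multiplicities entirely. First apply Lemma~\ref{lemma:shift} to \emph{every} $\tree{y}\in\forest{x}$ (the hypothesis $\Shift{\tree{a}_m}\,\tree{y}\subseteq\unroll{B}$ holds for all of them), so that each $\tree{y}\in\forest{x}$ is an unroll tree with $\tree{a}_m\,\Shift{\tree{y}}\subseteq\unroll{B}$. Now, instead of factoring $\tree{a}_m\tree{s}$, factor the \emph{diagonal} product $\shift{\tree{a}_m}{i}\,\shift{\tree{x}_m}{i}=\shift{\tree{b}_m}{i}\in\unroll{B}$ as $\tree{a}'\tree{y}'$ with $\tree{a}'\in\unroll{A}$ and $\tree{y}'\in\forest{x}$. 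Shifting both factors back by $i$ gives $\shift{\tree{a}'}{-i}\,\shift{\tree{y}'}{-i}=\tree{b}_m$. Minimality of $\tree{a}_m$ in $\unroll{A}$ yields $\shift{\tree{a}'}{-i}\ge\tree{a}_m$, hence $\shift{\tree{y}'}{-i}\le\tree{x}_m$; on the other hand $\tree{a}_m\,\shift{\tree{y}'}{-i}\in\unroll{B}$ together with minimality of $\tree{b}_m$ forces $\shift{\tree{y}'}{-i}\ge\tree{x}_m$. Therefore $\shift{\tree{y}'}{-i}=\tree{x}_m$, i.e.\ $\tree{y}'=\shift{\tree{x}_m}{i}\in\forest{x}$, and letting $i$ vary gives $\Shift{\tree{x}_m}\subseteq\forest{x}$. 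The point is that shifting back lands you at the \emph{global} minimum $\tree{b}_m$, where both minimalities (of $\tree{a}_m$ and of $\tree{b}_m$) pin the factorization down uniquely---something your choice of $\tree{a}_m\tree{s}$ cannot exploit.
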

    
    \begin{proof}
        We assume that there exists $\forest{X}$ such that $\unroll{A} \forest{x} = \unroll{B}$. 
        We set $\tree{x} = \min(\forest{X})$, $\tree{a}=\min(\unroll{A})$ and $\tree{b}=\min(\unroll{B})$.
        Thus, $\unroll{A} \tree{x}_i \subseteq \unroll{B}$ for all $\tree{x}_i \in \forest{x}$. 
        By Lemma~\ref{lemma:shift}, it follows that
        $\tree{a}_j \Shift{\tree{x}_i} \subseteq \unroll{B}$
        for all $\tree{a}_j \in \unroll{A}$ and $\tree{x}_i \in \forest{X}$. 
        In addition, from the minimality of $\tree{b}$, we deduce that $\tree{a} \tree{x} = \tree{b}$. 
        Thus, $\shift{\tree{a}}{i} \shift{\tree{x}}{i} \in \unroll{B}$ for all positive integer $i$. 
        
        Let $i$ be an integer and $\tree{a}', \tree{x}'$ be two trees of $\unroll{A}$ and $\forest{x}$ respectively such that $\tree{a}' \tree{x}' = \shift{\tree{a}}{i} \shift{\tree{x}}{i}$. 
        So, as already explained $\tree{a}_j \Shift{\tree{x}'} \subseteq \unroll{B}$ for all $\tree{a}_j \in \unroll{A}$. 
        It follows that $\shift{\tree{a}'}{-i \mod a'} \shift{\tree{x}'}{-i \mod x'} = \tree{a} \tree{x}$,
        with $a'=|\Shift{\tree{a}'}|$ the period of $\tree{a}'$
        and  $x'=|\Shift{\tree{x}'}|$ the period of $\tree{x}'$. 
        Thereby, by the minimality of $\tree{a}$, we have that $\shift{\tree{a}'}{-i} \ge \tree{a}$. 
        This implies that $\shift{\tree{x}'}{-i} \le \tree{x}$. 
        However, by the minimality of $\tree{x}$, $\shift{\tree{x}'}{-i} \ge \tree{x}$. 
        We deduce that $\shift{\tree{x}'}{-i} = \tree{x}$ and $\shift{\tree{a}'}{-i} = \tree{a}$.
        We conclude that $\Shift{\tree{x}} \subseteq \forest{x}$.
    \end{proof}
    
    We are now able to conclude our first step, proving
    that a forest $\forest{x}$ result of the division is always an unroll.

    \begin{lemma}\label{lemma:quotient_unroll2}
        Let $A,B$ be two FDDS.
        If there exists $\forest{x}$ such that $\unroll{A} \forest{x} = \unroll{B}$ then there exists $Y$ an FDDS such that $\unroll{Y} = \forest{X}$.
    \end{lemma}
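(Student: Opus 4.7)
I will proceed by induction on the number of trees of $\forest{x}$, peeling off one connected component of the FDDS $Y$ at each step using Lemma~\ref{lemma:quotient_unroll1}. The base case $\forest{x} = \varnothing$ is immediate: then $\unroll{B} = \varnothing$, so $B$ is the empty FDDS and $Y = \varnothing$ works.

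For the inductive step, let $\tree{x} = \min(\forest{x})$. By Lemma~\ref{lemma:quotient_unroll1}, $\tree{x}$ is an unroll tree and $\Shift{\tree{x}} \subseteq \forest{x}$. Let $p$ be the size of the smallest periodic pattern of $\tree{x}$, and let $C$ be the connected FDDS obtained as the roll of $\tree{x}$ at period $p$; since the $p$ shifts of $\tree{x}$ are pairwise distinct (by minimality of $p$), we have $\unroll{C} = \Shift{\tree{x}}$. Setting $\forest{x}' = \forest{x} - \Shift{\tree{x}}$ as a multiset difference, we obtain $\unroll{A}\,\forest{x}' = \unroll{B} - \unroll{AC}$.

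The key sub-claim is then that $\unroll{B} - \unroll{AC}$ is itself the unroll of some FDDS $B'$, equivalently $B = AC + B'$ in the FDDS semiring. I would establish this by decomposing both $AC$ and $B$ into connected components: the unroll of each connected FDDS is a single shift orbit of distinct unroll trees, and two orbits coming from non-isomorphic connected FDDS are disjoint, since any connected FDDS can be recovered from any of its unroll trees via the roll operation at the minimal period. Hence each shift orbit contributed by a component of $AC$ must coincide with one contributed by a component of $B$, and the multiset inclusion $\unroll{AC} \subseteq \unroll{B}$ forces the multiplicity of each isomorphism class of connected components in $AC$ to be bounded by its multiplicity in $B$. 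Pairing each component of $AC$ with an isomorphic component of $B$ and letting $B'$ be the remainder yields the sub-claim.

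With the sub-claim in hand, the induction hypothesis applied to $(A, B', \forest{x}')$ yields an FDDS $Y'$ with $\unroll{Y'} = \forest{x}'$, and $Y = C + Y'$ then satisfies $\unroll{Y} = \unroll{C} + \unroll{Y'} = \Shift{\tree{x}} + \forest{x}' = \forest{x}$, as required. The main obstacle is the sub-claim on $\unroll{B} - \unroll{AC}$: it rests on a careful multiplicity argument on shift orbits within $\unroll{B}$, together with the structural fact that shift orbits coming from distinct connected FDDS are disjoint, which is essentially the injectivity of the unroll homomorphism on connected FDDS.
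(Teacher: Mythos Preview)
Your overall strategy matches the paper's exactly: both iterate Lemma~\ref{lemma:quotient_unroll1} to peel off $\Shift{\min(\forest{x})}$, identify it as $\unroll{C}$ for a connected FDDS $C$, observe that $\unroll{B}-\unroll{AC}$ is again an unroll, and recurse on the remainder. The paper phrases this as ``step by step'' rather than a formal induction and asserts the key sub-claim (``$\unroll{B}-\unroll{A}\unroll{C}$ is always an unroll'') just as tersely as you do.

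Your proposed justification of that sub-claim, however, contains a real error. It is \emph{not} true that the unroll of a connected FDDS consists of pairwise distinct trees, nor that non-isomorphic connected FDDS have disjoint unrolls, nor that a connected FDDS is recovered by rolling any of its unroll trees at the minimal period. A connected $D$ of cycle length $p$ satisfies $\unroll{D}=(p/q)\cdot\Shift{\tree{t}}$ where $q=|\Shift{\tree{t}}|$ is the minimal period of its unroll trees; when $q<p$ the roll at period $q$ yields a period-$q$ FDDS, not $D$, and two connected FDDS with the same underlying shift set but different $p$ are non-isomorphic yet share every unroll tree. The argument that actually works stays at the multiset level: both $\unroll{B}$ and $\unroll{AC}$ are $\mathbb{N}$-linear combinations of shift sets $\Shift{\tree{t}}$; distinct shift sets are disjoint; hence the multiset inclusion $\unroll{AC}\subseteq\unroll{B}$ forces each coefficient in $\unroll{AC}$ to be bounded by the corresponding one in $\unroll{B}$, and the difference is again such a combination, i.e.\ an unroll of some FDDS $B'$. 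With this correction your induction goes through and coincides with the paper's proof.
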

    
    \begin{proof}
        We assume that there exists $\forest{x}$ such that $\unroll{A} \forest{x} = \unroll{B}$. 
        Let $\tree{a}, \tree{x}, \tree{b}$ the minimal tree of $\unroll{A}, \forest{X}$ and $\unroll{B}$ respectively. 
        Thus, by Lemma~\ref{lemma:quotient_unroll1}, we have that $\Shift{\tree{x}} \subseteq \forest{X}$. 
        So, $\unroll{A} \Shift{\tree{x}} \subseteq \unroll{B}$ and there exists an FDDS $C$ such that $\unroll{C} = \Shift{\tree{x}}$. 
        Thereby, since $\unroll{A} \unroll{C} = \unroll{AC}$ we deduce that $\unroll{B} - \unroll{A}\unroll{C}$ is always an unroll. 
        Therefore, we can apply the same reasoning on $\unroll{A} (\forest{X} - \Shift{\tree{x}}) = \unroll{B} - \unroll{A} \Shift{\tree{x}}$. 
        Step by step, it follow that $\forest{X}$ is a sum of shifts.
        And since for each shift $\forest{S}$, there exists a connected FDDS $F$ such that $\unroll{F} = \forest{S}$, we conclude that there exists an FDDS $Y$ such that $\unroll{Y} = \forest{X}$.
    \end{proof}

    Theorem~\ref{theorem:cara_unroll_div} explains,
    given the result $\forest{y}$ of a division,
    how to reconstruct $\forest{x}$ such that the cut at depth $n$ of $\forest{x}$ is the quotient.
    From the cuts we construct a forest, and then apply Lemma~\ref{lemma:quotient_unroll2}
    to obtain an unroll.
    There is a square on $\alpha$, to exceed the least common multiple of two periods
    themselves no greater that $\alpha$.
    
    \begin{theorem}\label{theorem:cara_unroll_div}
        Let $A,B$ be two FDDS and $n = 2 \alpha^2 + \depth{\unroll{B}}$ with $\alpha$ the number of trees in $\unroll{B}$. 
        If there exists a forest $\forest{y}$ such that $\cutUn{A} \forest{y} = \cutUn{B}$ then there  exists an FDDS $X$ such that $\cutUn{X} = \forest{y}$ and
        $\unroll{A} \unroll{X} = \unroll{B}$.
    \end{theorem}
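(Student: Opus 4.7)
My plan is to lift the finite-tree identity $\cutUn{A}\forest{y}=\cutUn{B}$ to an unroll-level identity $\unroll{A}\forest{X}=\unroll{B}$ by rolling each tree of $\forest{y}$ up into an unroll tree, and then invoke Lemma~\ref{lemma:quotient_unroll2} to exhibit the FDDS $X$. Concretely, for each finite tree $\tree{t}$ appearing in $\forest{y}$ (with multiplicity), I intend to detect a periodic pattern of $\tree{t}$ using the Lemma~\ref{lemme:RecoverPeridicsPart}-style argument, roll it up to produce an unroll tree $\tree{T}$ satisfying $\cut{\tree{T}}{n}=\tree{t}$, and define $\forest{X}$ as the sum of all these rolled trees with the same multiplicities as in $\forest{y}$.

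The crux is to justify that each $\tree{t}\in\forest{y}$ admits a periodic pattern whose length is at most $\alpha^2$, which is exactly what the bound $n\ge 2\alpha^2+\depth{\unroll{B}}$ is calibrated to detect. The argument I would run is: any unroll divisor $\forest{X}$ of $\unroll{B}$ by $\unroll{A}$ has, at each tree, a period dividing the $\lcm$ of periods appearing in $\unroll{A}$ and $\unroll{B}$; since the number of trees in $\unroll{A}$ is at most the number of trees in $\unroll{B}$, both periods are bounded by $\alpha$, so the period of any tree in $\forest{X}$ is bounded by $\lcm(p_A,p_B)\le \alpha^2$. This makes $n\ge 2\alpha^2+\depth{\unroll{B}}$ exactly the threshold required both by Lemma~\ref{lemme:RecoverPeridicsPart} (to recover the pattern) and by Lemma~\ref{lemma:depth_for_dif} (to guarantee that distinct lifts would produce distinct cuts). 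The same inequality then ensures the roll-up is unique given the cut, so the construction of $\forest{X}$ is well defined.

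Once $\forest{X}$ is built, verifying $\unroll{A}\forest{X}=\unroll{B}$ reduces to a cut-level check: by construction $\cut{\forest{X}}{n}=\forest{y}$, so
\[
\cut{\unroll{A}\forest{X}}{n}=\cutUn{A}\cut{\forest{X}}{n}=\cutUn{A}\forest{y}=\cutUn{B},
\]
and Lemma~\ref{lemma:equality} promotes this cut identity into an unroll identity $\unroll{A}\forest{X}=\unroll{B}$. Applying Lemma~\ref{lemma:quotient_unroll2} to this identity yields an FDDS $X$ with $\unroll{X}=\forest{X}$, and in particular $\cutUn{X}=\forest{y}$ together with $\unroll{A}\unroll{X}=\unroll{B}$, which is exactly the conclusion sought.

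The main obstacle I expect is the period analysis of trees in $\forest{y}$: one must argue that the cut operation does not destroy the periodic signature of the divisor, even though $\forest{y}$ is only given a priori as an arbitrary finite forest satisfying the cut equation. I plan to handle this by comparing $\forest{y}$ against the hypothetical roll-up tree by tree, using the compatibility of the total order $\le$ with products (as in Lemma~\ref{lemma:quotient_unroll1}) to identify each minimal tree of $\forest{y}$ and its shifts, and then peeling them off inductively as in the proof of Lemma~\ref{lemma:quotient_unroll2}. The factor $2$ in $2\alpha^2$, exactly as highlighted in Figure~\ref{fig:dif_mult_2}, will be essential to close the inductive identification of the periodic pattern.
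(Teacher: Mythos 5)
Your overall architecture coincides with the paper's: roll each tree of $\forest{y}$ up into an unroll tree using a period bounded by $\alpha^2$ (the $\lcm$ of two periods each at most $\alpha$), then invoke Lemma~\ref{lemma:quotient_unroll2} to produce the FDDS. However, there is a genuine gap at the step you yourself flag as the crux: justifying that each tree of $\forest{y}$ actually carries a periodic pattern of length at most $\alpha^2$. Your argument for this (``any unroll divisor of $\unroll{B}$ by $\unroll{A}$ has, at each tree, a period dividing the $\lcm$ of the periods in $\unroll{A}$ and $\unroll{B}$'') presupposes that $\forest{y}$ is the cut of an unroll divisor, which is precisely what the theorem must establish; and your proposed repair---identifying minimal trees and peeling off shifts as in Lemmas~\ref{lemma:quotient_unroll1} and~\ref{lemma:quotient_unroll2}---is circular for the same reason, since both of those lemmas take the unroll-level identity $\unroll{A}\forest{x}=\unroll{B}$ as a hypothesis, whereas here you only have a cut-level identity for a finite forest $\forest{y}$ about which nothing periodic is known a priori.

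The paper closes this gap by exploiting the periodicity of the trees of $\unroll{A}$ and $\unroll{B}$, which \emph{is} known. Fix $\tree{a}\in\unroll{A}$ and $\tree{b}\in\unroll{B}$ with $\cutTn{a}\tree{y}=\cutTn{b}$, and let $m=\lcm(|\Shift{\tree{a}}|,|\Shift{\tree{b}}|)\le\alpha^2$. Since $\shift{\tree{a}}{m}=\tree{a}$ and $\shift{\tree{b}}{m}=\tree{b}$, restricting the equation below the $m$-th node of the deepest branch of $\tree{y}$ and cancelling the finite factor coming from $\tree{a}$ forces the subtree of $\tree{y}$ rooted at that node to be a copy of $\cut{\tree{y}}{n-m}$; this self-similarity of $\tree{y}$, derived from the equation rather than assumed, is what licenses the roll-up and the step-by-step extension to an infinite $\tree{x}$ satisfying $\tree{a}\tree{x}=\tree{b}$ exactly (not merely at cut level, so Lemma~\ref{lemma:equality}---which is stated for unrolls of FDDS, not arbitrary forests---is not needed). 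A secondary point you omit: one must also check that the unroll trees obtained by pairing the same $\tree{y}$ with different $\tree{a}_i\in\unroll{A}$ coincide, which the paper gets from Lemma~\ref{lemma:depth_for_dif}; without this, $\forest{X}$ is not well defined. As written, your proof does not go through without supplying the self-similarity argument.
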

    
    Note that the other direction is given by Proposition~\ref{prop:divideForestFini}.
    
    \begin{proof}
    	Let $\forest{y}$ be a forest such that $\cutUn{A} \cutFn{y} = \cutUn{B}$. 
        Without loss of generality, we can assume that $\forest{y}$ is a finite forest of finite trees.
        Let $\tree{y} \in \forest{y}$. 
        Since $n$ is large enough, by Lemma~\ref{lemma:depth_for_dif}, we deduce that we can identify the subset $\forest{B}_y$ of $\unroll{B}$ such that $\cutUn{A} \tree{y} = \cut{\forest{B}_y}{n}$.  
        We prove that there exists an unroll tree $\tree{x}$ such that $\unroll{A} \tree{x} = \forest{B}_y$. 
        
        We consider $\tree{a} \in \unroll{A}$ and $\tree{b} \in \forest{B}_x$ such that $\cutTn{a} \tree{y} = \cutTn{b}$.
        From the hypothesis, we deduce that $\alpha_A \le \alpha$, with $\alpha_A$ the number of unroll tree in $\unroll{A}$. 
        However, by the definition of unroll, each tree of $\unroll{A}$ has a periodic pattern with size at most $\alpha_A$.
        Also, each tree in $\unroll{B}$ has a periodic pattern with size at most $\alpha$. 
        Thus, $\alpha^2 \ge \alpha_A \alpha \ge \lcm(|\Shift{\tree{a}'}|,|\Shift{\tree{b}'}|)$ for each $\tree{a}' \in \unroll{A}$ and $\tree{b}' \in  \unroll{B}$.
        We set $m = \lcm(|\Shift{\tree{a}}|,|\Shift{\tree{b}}|)$.
        
        Furthermore, by a similar reasoning asin Lemma~\ref{lemme:RecoverPeridicsPart},
        it follows that we can find a sequence $S = (y_0,\ldots, y_{\alpha^2})$ of node of $\tree{y}$
        such that $y_i$ has depth $i$, $y_{i+1}$ is a predecessor of $y_i$, and
        $S \subseteq \bigcap_{B \in max} B$ with $max$ the set of branch of $\tree{y}$ with maximal depth. 
        Remark that, since $\forest{y}$ is finite, $max\in\N$.
        Now, by the periodicity, $\shift{\tree{a}}{m} = \tree{a}$ and $\shift{\tree{b}}{m} = \tree{b}$.
        Therefore, $\cut{\shift{\tree{a}}{m}}{n} \tree{y} = \cut{\shift{\tree{b}}{m}}{n}$. 
        This implies that
        $\cut{\shift{\tree{a}}{m}}{n-m} \cut{\tree{y}}{n-m} =\cut{\shift{\tree{b}}{m}}{n-m}$,
        and so the tree rooted in $y_m$ is $\cut{\tree{y}}{n-m}$.
        We conclude that the tree $\tree{y}_1$ built from $\tree{y}$ where we replace the tree rooted
        in $y_{m}$ by a copy of $\tree{y}$ is such that $\cutT{a}{n+m} \tree{y}_1 \cutT{b}{n+m}$. 
        Step by step, we construct a tree $\tree{x}$ such that $\tree{a} \tree{x} = \tree{b}$.
        In addition, from the construction, $\tree{x}$ admits a periodic pattern
        $P = (T_0, \ldots, T_{\lcm(|\Shift{\tree{a}}|,|\Shift{\tree{b}}|) - 1})$
        such that $T_i$ is the tree rooted in $y_i$ where we remove the tree rooted
        in $y_{i+1}$ for each $i$. 
        
        Let $\tree{a}_i$ be a tree of $\unroll{A}$, $\tree{b}_i$ the tree of $\forest{B}_y$
        such that $\cut{\tree{a}_i}{n} \tree{y} = \cut{\tree{b}_i}{n}$, and $\tree{x}_i$
        the unroll tree built from $\tree{a}_i$ and $\tree{b}_i$ as explained previously.
        It follows that each $\tree{x}_i$ has a periodic pattern of size inferior to $\alpha^2$. 
        As a consequence, by Lemma~\ref{lemma:depth_for_dif}, we deduce that
        $\tree{x}_i = \tree{x}_j$ for all $i,j$.
        We conclude that $\unroll{A} \tree{x}_0 = \forest{B}_y$. 
        
        We can apply the same procedure, for all trees in $\forest{y}$ in order to construct
        a forest $\forest{x}$ of unroll trees such that $\unroll{A} \forest{x} = \unroll{B}$. 
        Finally, by Lemma~\ref{lemma:quotient_unroll2}, there exists $C$ an FDDS such that
        $\unroll{C} = \forest{X}$. 
    \end{proof}
    
    The constructive proof of Theorem~\ref{theorem:cara_unroll_div}
    gave a polynomial time algorithm.

    \begin{theorem}
            Let $A,B$ be two FDDS. 
            We can find in polynomial time a FDDS $X$ such that $\unroll{A} \unroll{X} = \unroll{B}$, if any exists.
    \end{theorem}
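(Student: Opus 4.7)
The plan is to extend Algorithm~\ref{algo:unrollDiv} to drop the connectedness hypothesis on $X$, using Theorem~\ref{theorem:cara_unroll_div} and its constructive proof as the core reconstruction step. Concretely, given FDDS $A$ and $B$ of size at most $m$, let $\alpha$ be the number of trees in $\unroll{B}$ and set $n = 2\alpha^{2} + \depth{\unroll{B}}$. First I would compute the cuts $\cutUn{A}$ and $\cutUn{B}$; since $\alpha \le m$ and $\depth{\unroll{B}} \le m$, we have $n = O(m^{2})$, so both cuts are forests of finite trees of total polynomial size and can be built backwards from the periodic states in polynomial time.

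Next, I would reduce finite-forest division to finite-tree division via Lemma~\ref{lemme:forest2tree}: apply $\rf{\cdot}$ to glue each cut into a single tree, invoke the tree division algorithm of \cite{gadouleau2022factorisation_dds} on $\rf{\cutUn{B}}$ by $\rf{\cutUn{A}}$, and read off the multiset $\forest{y} := \dt{\tree{x}}$ from any quotient $\tree{x}$ returned. By Lemma~\ref{lemme:forest2tree} the equation $\cutUn{A}\,\forest{y} = \cutUn{B}$ holds iff the tree division succeeded, and the whole step is polynomial-time because the underlying tree algorithm is polynomial and its inputs are of polynomial size. If no such $\forest{y}$ exists, we return that no solution exists (soundness follows from Proposition~\ref{prop:divideForestFini}: any candidate $X$ would yield such a $\forest{y}$).

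If $\forest{y}$ is found, I would then follow the construction in the proof of Theorem~\ref{theorem:cara_unroll_div} to reconstruct $X$. For every tree $\tree{y} \in \forest{y}$, identify the subset $\forest{B}_{y} \subseteq \cutUn{B}$ accounted for by $\tree{y}$ (possible by Lemma~\ref{lemma:depth_for_dif} since $n$ is large enough to distinguish trees with periods up to $\alpha$), pick one pair $\cutTn{a}\,\tree{y} = \cutTn{b}$ with $\tree{a} \in \unroll{A}$, $\tree{b} \in \forest{B}_{y}$, and read off the periodic pattern $P = (T_{0},\dots,T_{\lcm(|\Shift{\tree{a}}|,|\Shift{\tree{b}}|)-1})$ along the infinite branch as described in the proof. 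Rolling $P$ produces a connected FDDS whose unroll is the corresponding tree $\tree{x}$ in the quotient unroll forest; summing over all $\tree{y}\in\forest{y}$ yields the candidate $X$. Each of these operations is a local traversal of the finite cuts, so this reconstruction is polynomial.

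Finally, I would verify the result by computing $AX$ and testing $AX = B$ in polynomial time, using that the product of two FDDS can be computed in time linear in the output size (bounded by $|B|$) and that the isomorphism test for functional digraphs is linear, as noted in Theorem~\ref{th:compUnrollDiv}. Correctness follows from Theorem~\ref{theorem:cara_unroll_div} (completeness: if any $X$ exists, the finite division step will succeed and the reconstruction will recover it) together with the explicit verification step (soundness: we only output $X$ when $\unroll{A}\unroll{X}=\unroll{B}$ is certified). The main delicacy I anticipate is not the runtime accounting but ensuring that in the reconstruction step the trees $\tree{x}_{i}$ built from distinct pairs $(\tree{a}_{i},\tree{b}_{i})$ corresponding to the same $\tree{y}$ coincide, which is exactly the content of the last lines of the proof of Theorem~\ref{theorem:cara_unroll_div} and relies crucially on the squaring in $n = 2\alpha^{2} + \depth{\unroll{B}}$ so that $n$ exceeds $2\,\lcm(a,b)$ for every pair of periods bounded by $\alpha$; threading this bound cleanly through the runtime analysis is the only real obstacle.
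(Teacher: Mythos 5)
Your construction is essentially the paper's: cut both unrolls at depth $n = 2\alpha^{2} + \depth{\unroll{B}}$, divide the resulting finite forests using the tree-division algorithm of \cite{gadouleau2022factorisation_dds}, and then invoke the constructive proof of Theorem~\ref{theorem:cara_unroll_div} to read off a periodic pattern for each quotient tree and roll it at its smallest period. The paper's own proof does exactly this and arrives at an $\bigo{m^{15}}$ bound, so the runtime accounting you defer is not an obstacle.

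The one genuine problem is your final verification step. The theorem asks for an FDDS $X$ with $\unroll{A}\,\unroll{X} = \unroll{B}$, not $AX = B$, and these are not equivalent: distinct FDDS can have the same unroll, because the same multiset of unroll trees can be rolled into connected components in several ways. Your algorithm builds $X$ by rolling each quotient tree at its smallest period, and it can perfectly well happen that $\unroll{A}\,\unroll{X} = \unroll{B}$ holds while $AX \neq B$ (for instance when $B$ has a single component of period $2p$ whose unroll consists of two copies of a shift of period $p$, while your $X$ contributes two components of period $p$). Rejecting in that case makes the algorithm incomplete for the stated problem. The fix is simply to drop the test: Theorem~\ref{theorem:cara_unroll_div} together with Lemma~\ref{lemma:quotient_unroll2} already guarantees that the reconstructed $X$ satisfies the unroll equation whenever the finite division succeeds, so no a posteriori verification is needed, and the paper performs none here. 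Deciding $AX = B$ itself is the harder question that the paper only resolves for component-minimal and component-maximal $X$, so importing the verification step from Algorithm~\ref{algo:unrollDiv} quietly changes which problem you are solving.
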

    
    \begin{proof}
        From the Theorem~\ref{theorem:cara_unroll_div}, given two unrolls $\forest{A}$ and $\forest{B}$ for compute the unroll $\forest{X}$ such that $\forest{AX} = \forest{B}$, we can just find an forest $\forest{y}$ such that, for a certain $n$, $\cutFn{A} \forest{y} = \cutFn{B}$. 
        From this, since given $A,B$ two FDDS whose the size of $A$ is bounded by $m$ the size de $B$, we can construct $\cutUn{A}$ and $\cutUn{B}$ in $\bigo{m^5}$. 
        Indeed, the $n$ is in $\bigo{m^2}$. 
        In addition, since the size of of each tree in  $\cutUn{A}$ and $\cutUn{B}$ is in $\bigo{m^4}$, we deduce that the number of node in $\cutUn{A} + \cutUn{B}$ is in $\bigo{m^5}$ with $m$ the number of node in $B$.
        Therefore, since the division of $\cutUn{B}$ by $\cutUn{A}$ can be realiazed in cubic time, we conclude that the devision of $\cutUn{B}$ by $\cutUn{A}$ are in $\bigo{m^{15}}$. 
        Finally, since we have identified a periodic pattern of each tree in $\forest{y}$, we can just :
        \begin{enumerate}
        	\item chose a tree $\tree{y}$ in $\forest{y}$,
        	\item find $p$ the size of its smallest period pattern,
        	\item compute $Y$ the roll of $\tree{y}$ to period  $p$,
        	\item remove $\cutUn{Y}$ into $\forest{y}$. 
        \end{enumerate} 
        And since the size of each $\tree{y} \in \forest{y}$ are in $\bigo{m^4}$, it follow that compute $\cutUn{Y}$ are in $\bigo{m^5}$ and $\cutUn{Y}$ has size in $\bigo{m^5}$.
        Thus remove $\cutUn{Y}$ into $\forest{y}$ are obviously in $\bigo{m^{10}}$·
        Finally, if the size of $A$ is not bounded by the size of $B$, obviously $\unroll{A}$ does not divide $\unroll{B}$. 
        The theorem follow. 
    \end{proof}
    
    Although polynomial, the complexity of the previous method is huge ($\bigo{n^{15}}$).
    This is a consequence of the quadratic depth employed for the cut. 
    We now introduce another method, considering a cut at linear depth (in $\bigo{m}$),
    requiring an inexpensive additional check.
    We first need a purely arithmetics lemma, very intuitive
    (each period $p$ of a sequence is a multiple of a unique smallest period).

    \begin{lemma}
        If an unroll tree $\tree{x}$ admits two periodic patterns of size $p_1$ and $p_2$,
        then $\tree{x}$ admits a periodic pattern of size $\gcd(p_1,p_2)$.
    \end{lemma}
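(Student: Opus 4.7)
The plan is to reduce the statement to a classical fact about purely periodic one-sided sequences: if a sequence is periodic with periods $p_1$ and $p_2$, then it is periodic with period $\gcd(p_1,p_2)$. An unroll tree $\tree{x}$ is determined by its unique infinite branch together with the sequence $(\tree{t}_0,\tree{t}_1,\tree{t}_2,\ldots)$ of finite trees hooked on it. By the definition of a periodic pattern recalled earlier in the paper, saying that $\tree{x}$ admits a periodic pattern of size $p$ is exactly saying that $\tree{t}_i=\tree{t}_{i+p}$ for every $i\in\mathbb{N}$. So the claim becomes: if this sequence is $p_1$-periodic and $p_2$-periodic, then it is $\gcd(p_1,p_2)$-periodic.

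First I would set $d=\gcd(p_1,p_2)$ and invoke Bézout's identity to write $d=a\,p_1+b\,p_2$ for integers $a,b$. For any fixed $i\in\mathbb{N}$, I want to show $\tree{t}_{i+d}=\tree{t}_i$. If $a,b\geq 0$, then applying the two periods $a$ and $b$ times respectively gives the equality directly. Otherwise, I would first note that $p_1p_2$ is itself a common period of the sequence (a trivial consequence of either periodicity), and then replace $(a,b)$ by $(a+kp_2,b-kp_1)$ for some sufficiently large $k$ so that both coefficients become nonnegative; this does not change $a\,p_1+b\,p_2=d$. After a further shift by a multiple of $p_1p_2$, which leaves the sequence invariant, I reduce to the nonnegative case and conclude $\tree{t}_{i+d}=\tree{t}_i$ for every $i$.

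The only small subtlety is that the sequence is one-sided (indexed by $\mathbb{N}$), so one cannot naively shift backwards. This is why the argument goes via the common period $p_1p_2$ to make all coefficients nonnegative, rather than directly applying Bézout with possibly negative multiplicities. Once the sequence of finite trees hooked on the infinite branch is shown to be $d$-periodic, the first $d$ trees $(\tree{t}_0,\ldots,\tree{t}_{d-1})$ form a periodic pattern of size $d$ of $\tree{x}$, which is exactly the conclusion. I do not expect any serious obstacle; the main care is just to handle the possibly negative Bézout coefficients cleanly.
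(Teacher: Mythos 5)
Your reduction to the one-sided sequence $(\tree{t}_0,\tree{t}_1,\ldots)$ of finite trees hooked on the infinite branch is the right setting, and the Bézout route is a genuinely different (and arguably cleaner) argument than the paper's, which instead partitions $\{0,\ldots,p_2-1\}$ into the orbits $C_i=\{kp_1+i \bmod p_2 \mid k\in\mathbb{N}\}$, observes that the hooked trees are constant on each orbit, and identifies these orbits with the cosets of the subgroup generated by $\gcd(p_1,p_2)$ in $\mathbb{Z}/p_2\mathbb{Z}$. However, the specific sign-handling step you describe does not work. Writing $d=\gcd(p_1,p_2)=ap_1+bp_2$, you propose to replace $(a,b)$ by $(a+kp_2,\,b-kp_1)$ so that \emph{both} coefficients become nonnegative. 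This is impossible in every nontrivial case: if $a,b\ge 0$ are not both zero then $ap_1+bp_2\ge\min(p_1,p_2)>d$ whenever $d<\min(p_1,p_2)$. Concretely, you would need an integer $k$ with $-a/p_2\le k\le b/p_1$, an interval of length $d/(p_1p_2)<1$, which generically contains no integer; and shifting the index by a multiple of the common period $p_1p_2$ only adjusts one coefficient at a time, so it does not rescue the step.

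The repair is immediate and still respects the one-sidedness you were worried about. Choose the Bézout pair so that $d=ap_1-bp_2$ with $a,b\ge 0$ (take any pair $(a_0,b_0)$ with $a_0p_1+b_0p_2=d$ and replace it by $(a_0+kp_2,\,b_0-kp_1)$ for $k$ large, so the first coefficient is nonnegative and the second nonpositive). Then for every $i\in\mathbb{N}$ all indices below are at least $i$, and
\[
\tree{t}_i \;=\; \tree{t}_{i+ap_1} \;=\; \tree{t}_{(i+d)+bp_2} \;=\; \tree{t}_{i+d},
\]
using only forward applications of the two periodicities, with no backward shift and no appeal to $p_1p_2$ at all. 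With this correction your proof is complete and elementary; it and the paper's coset computation are of comparable length, yours making the arithmetic content (Bézout) explicit where the paper effectively re-derives it via the orbit structure.
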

    
    \begin{proof}
        We assume that $\tree{x}$ admit a periodic pattern of size $p_1$ and $p_2$. 
        So, either $p_1 = p_2$ and the lemma follow or $p_1 \neq p_2$. 
        Whitout loss of generality, we assume that $p_1 < p_2$. 
        Thus, $\shift{\tree{x}}{i} = \shift{\tree{x}}{k p_1 + i \mod p_2}$ for all $i$ between $0$ and $p_1 - 1$ and positive integer $k$. 
        For all $i$ between $0$ and $p_1 - 1$, we set:
        \[
            C_i = \{ k p_1 + i \mod p_2 \mid k \in \N \}.
        \]
        We deduce that:
        \[
          |C_i| = \cfrac{\lcm(p_1,p_2)}{p_1}.
        \] 
        Also, if $C_i \cap C_j \neq \emptyset$ then $C_i = C_j$. 
        From this, we have that $C_0 \cup \ldots \cup C_{\gcd(p_1,p_2) - 1} = \{0, \ldots, p_2 - 1\}$.
        And also, $C_i = C_{(i + \gcd(p_1,p_2)) \mod p_1}$ for all $i$ between $0$ and $p_1 - 1$.
        Besides, by the construction of the $C_i$, it follow that $\shift{\tree{x}}{j} = \shift{\tree{x}}{j'}$ for all $j,j' \in C_i$. 
        We conclude that $\tree{x}$ has a periodic pattern of size $\gcd(p_1,p_2)$.
    \end{proof}

    Recall that $|\Shift{\tree{x}}|$ is the smallest period of $\tree{x}$.
    
    \begin{corollary}\label{cor:min_per_div_all_per}
        $|\Shift{\tree{x}}|$ is a divisor of each size of periodic pattern of $\tree{x}$.
    \end{corollary}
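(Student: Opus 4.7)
The plan is to deduce this corollary directly from the preceding lemma by a short argument by contradiction on the minimality of $|\Shift{\tree{x}}|$. Let me denote $p_{\min} = |\Shift{\tree{x}}|$, which by definition is the size of the smallest periodic pattern of $\tree{x}$.

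First, I would fix an arbitrary period $p$ of $\tree{x}$ and consider the gcd $d = \gcd(p, p_{\min})$. Applying the previous lemma to the two periods $p$ and $p_{\min}$, one obtains that $\tree{x}$ admits a periodic pattern of size $d$. The key observation is then that $d \leq p_{\min}$, with equality if and only if $p_{\min}$ divides $p$.

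The main step is the contradiction: if $p_{\min}$ does not divide $p$, then $d < p_{\min}$, and we have exhibited a period strictly smaller than $p_{\min}$, contradicting the fact that $p_{\min}$ is the size of the smallest periodic pattern. Hence $p_{\min}$ must divide $p$, and since $p$ was arbitrary, the corollary follows.

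There is no real obstacle here: the entire content is already encapsulated in the previous lemma, and this corollary is the standard elementary number-theoretic consequence that the set of periods of a periodic sequence forms a semigroup generated by its minimum under divisibility. I would keep the write-up to a couple of sentences to reflect the fact that it is an immediate corollary.
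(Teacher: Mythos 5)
Your argument is correct and is exactly the intended one: the paper states this corollary without proof as an immediate consequence of the preceding gcd lemma, and your minimality-of-$\gcd(p,p_{\min})$ argument is the canonical way to fill it in. No issues.
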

    
    Because divisors are not greater than the value they divide (for integers),
    we obtain a bound on the period of the trees $\tree{x}\in\forest{X}$
    and $\tree{a}\in\unroll{A}$, in terms of the product of the minimal trees of their shifts.

    \begin{lemma}\label{lemma:period_bound}
        Let $A,B$ be two FDDS. 
        If there exists $\forest{x}$ such that $\unroll{A}\forest{X} = \unroll{B}$,
        then for all $\tree{a} \in \unroll{A}$ and $\tree{x} \in \forest{X}$
        the integers $|\Shift{\tree{a}}|$ and $|\Shift{\tree{x}}|$
        are divisors of $|\Shift{\min(\Shift{\tree{a}})\min(\Shift{\tree{x}})}|$.
    \end{lemma}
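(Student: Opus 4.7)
The plan is to denote $\tree{a}^* = \min(\Shift{\tree{a}})$ and $\tree{x}^* = \min(\Shift{\tree{x}})$, the minimal shifts of $\tree{a}$ and $\tree{x}$, and to let $q = |\Shift{\tree{a}^*\tree{x}^*}|$ be the minimal period of their product. Because $\tree{a}^*$ is itself one of the shifts of $\tree{a}$, the two trees share the same set of shifts, so $|\Shift{\tree{a}^*}| = |\Shift{\tree{a}}|$, and likewise $|\Shift{\tree{x}^*}| = |\Shift{\tree{x}}|$. The lemma therefore reduces to showing that $q$ is a common multiple of $|\Shift{\tree{a}^*}|$ and $|\Shift{\tree{x}^*}|$, and by Corollary~\ref{cor:min_per_div_all_per} this amounts to proving that $q$ is already a period of each factor $\tree{a}^*$ and $\tree{x}^*$ individually. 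The hypothesis $\unroll{A}\forest{X} = \unroll{B}$ is used only through Lemma~\ref{lemma:quotient_unroll2} to guarantee that every $\tree{x}\in\forest{X}$ is a genuine unroll tree so that $\Shift{\tree{x}}$ and its minimum are well defined; everything else is order-theoretic.

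The key observation is that the operation $\shift{\cdot}{q}$ acts level-by-level along the infinite branch and therefore commutes with the tree product of Definition~\ref{prodintrees}, yielding the identity
\[
  \shift{\tree{a}^*}{q}\,\shift{\tree{x}^*}{q} \;=\; \shift{\tree{a}^*\tree{x}^*}{q} \;=\; \tree{a}^*\tree{x}^*,
\]
where the last equality is the definition of $q$. Since $\shift{\tree{a}^*}{q} \in \Shift{\tree{a}^*}$ and $\shift{\tree{x}^*}{q} \in \Shift{\tree{x}^*}$, the minimality of $\tree{a}^*$ and $\tree{x}^*$ in their respective shift sets yields $\shift{\tree{a}^*}{q} \ge \tree{a}^*$ and $\shift{\tree{x}^*}{q} \ge \tree{x}^*$. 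I would then upgrade the compatibility of the order with the product (recalled after Definition~\ref{prodintrees}) to strict monotonicity using the cancellation theorem of Section~\ref{section:division_FDDS}: if, say, $\shift{\tree{a}^*}{q} > \tree{a}^*$, then one would have $\shift{\tree{a}^*}{q}\,\shift{\tree{x}^*}{q} > \tree{a}^*\,\shift{\tree{x}^*}{q} \ge \tree{a}^*\tree{x}^*$, since equality in the first step would, by cancellation on the right, force $\shift{\tree{a}^*}{q} = \tree{a}^*$. This contradicts the displayed identity, and symmetrically for $\tree{x}^*$, so both inequalities must in fact be equalities.

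Equality $\shift{\tree{a}^*}{q} = \tree{a}^*$ (resp.\ $\shift{\tree{x}^*}{q} = \tree{x}^*$) means exactly that $q$ is a period of $\tree{a}^*$ (resp.\ of $\tree{x}^*$), so Corollary~\ref{cor:min_per_div_all_per} applied to each factor gives that $|\Shift{\tree{a}^*}| = |\Shift{\tree{a}}|$ and $|\Shift{\tree{x}^*}| = |\Shift{\tree{x}}|$ both divide $q = |\Shift{\tree{a}^*\tree{x}^*}|$, which is the claim. The main obstacle I anticipate is the strict-monotonicity step: the paper only records a non-strict order-compatibility, and one has to combine it with cancellation separately on each side of the product to rule out the possibility that a strict inequality in one factor is absorbed by an equality in the other. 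A secondary detail is to justify carefully that $\shift{\cdot}{q}$ commutes with the tree product, but this follows directly from the fact that the infinite branch of $\tree{a}^*\tree{x}^*$ consists precisely of pairs of nodes at matching depth on the infinite branches of the two factors.
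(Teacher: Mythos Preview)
Your argument is correct and essentially identical to the paper's: both pass to the minimal shifts $\tree{a}^*,\tree{x}^*$, use that shifting by $q$ commutes with the tree product to get $\shift{\tree{a}^*}{q}\,\shift{\tree{x}^*}{q}=\tree{a}^*\tree{x}^*$, and then combine minimality with cancellation to force $\shift{\tree{a}^*}{q}=\tree{a}^*$ and $\shift{\tree{x}^*}{q}=\tree{x}^*$, after which Corollary~\ref{cor:min_per_div_all_per} finishes. The only cosmetic differences are that the paper phrases the contradiction as a two-case split (equality vs.\ strict inequality on the $\tree{a}$-side) and invokes tree-level cancellation via \cite[Lemma~24]{gadouleau2022factorisation_dds} rather than the forest-level cancellation of Section~\ref{section:division_FDDS}; you should cite the former, since the objects being cancelled here are single unroll trees.
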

    
    \begin{proof}
        We assume that there exists $\forest{x}$ such that $\unroll{A} \forest{X} = \unroll{B}$. 
        Then, by Lemma~\ref{lemma:quotient_unroll2}, there exists a FDDS $Y$ such that $\unroll{Y} = \forest{X}$. 
        Thus, $\Shift{\tree{a}} \Shift{\tree{x}} \subseteq \unroll{B}$ for each $\tree{a} \in \unroll{A}$ and $\tree{x} \in \forest{X}$. 
        Let $\tree{a} \in \unroll{A}, \tree{x} \in \forest{X}$.
        We set $i$ (resp. $j$) the positive integer such that $\shift{\tree{a}}{i} = \min(\Shift{\tree{a}})$ (resp. $\shift{\tree{x}}{j} = \min(\Shift{\tree{x}})$), $\tree{b} \in \unroll{B}$ the tree such that $\shift{\tree{a}}{i} \shift{\tree{x}}{j} = \tree{b}$ and $p_b = |\Shift{\tree{b}}|$. 
        It follow that $\shift{\tree{a}}{i + p_b}\shift{\tree{x}}{j + p_b} = \shift{\tree{b}}{ p_b} = \tree{b}$.
        From the minimality of $\shift{\tree{a}}{i}$, we have that $\shift{\tree{a}}{i + p_b} \ge \shift{\tree{a}}{i}$. 
        Thereby, two cases are possibles. 
        First, $\shift{\tree{a}}{i + p_b} = \shift{\tree{a}}{i}$. 
        So, since $\shift{\tree{a}}{i + p_b}\shift{\tree{x}}{j + p_b} =  \shift{\tree{a}}{i} \shift{\tree{x}}{j}$, by cancellability~\cite[Lemma 24]{gadouleau2022factorisation_dds}, it follow that $\shift{\tree{x}}{j + p_b} = \shift{\tree{x}}{j}$. 
        So, $\tree{a}$ and $\tree{x}$ have a periodic pattern of size $p_b$. 
        Thus, by the Corollary~\ref{cor:min_per_div_all_per}, we conclude that $|\Shift{\tree{a}}|$ and $|\Shift{\tree{x}}|$ are divisors of $p_b$.
        Finally, $\shift{\tree{a}}{i + p_b} > \shift{\tree{a}}{i}$. 
        So, since $\shift{\tree{a}}{i + p_b}\shift{\tree{x}}{j + p_b} =  \shift{\tree{a}}{i} \shift{\tree{x}}{j}$, we have that $\shift{\tree{x}}{j + p_b} < \shift{\tree{x}}{j}$. 
        Contradiction with the minimality of $\shift{\tree{x}}{j}$.
    \end{proof}
    
    \begin{algorithm}
\caption{\texttt{unroll division}}\label{algo:unroll_div}
\begin{algorithmic}[1]
\Require $A,B$ two FDDS
\Ensure A FDDS $X$ such that $\unroll{A} \unroll{X} = \unroll{B}$ if any exists, otherwise $\perp$.
\State $\alpha \gets $ numbers of periodic nodes in $B$
\State $n \gets 2 \times \alpha + \depth{B}$
\State Compute $\cutUn{A}$, $\cutUn{B}$ and $M_A := \{\cut{\min(\Shift{\tree{a}})}{n} \mid  \tree{a} \in \unroll{A}\}$.
\State Compute $P$ the table, indexed by trees of $\cutUn{A}$ and $\cutUn{B}$ such that $P[\tree{x}]$ is the smallest size of period pattern of $\tree{x}$.
\State $\forest{X} \gets \texttt{divide}(\cutUn{B},\cutUn{A})$\label{line:callDivide} 
\If{$\forest{X}$ does not exist}
    \Return $\perp$
\EndIf
\State Sort $\forest{X}$.
\State $Sol \gets \emptyset$ 
\While{$\forest{x} \neq \emptyset$}
	\State $\tree{x} \gets \forest{x}[0]$
    \ForAll{$\tree{a} \in M_A$}
        \State $\tree{b} \gets \tree{ax}$
        \If{$P[\tree{b}] \mod P[\tree{a}] \neq 0 $}
            \State \Return $\perp$
        \EndIf
    \EndFor
    \State $b \gets P[M_A[0] \tree{x}]$
    \State $PP \gets \emptyset$
    \State traverse $\tree{x}$ by associating each vertex with the depth of the tree which is rooted to it
    \ForAll{$i$ between $0$ and $b-1$}
        \State $\tree{x}' \gets $ the tree of $\tree{x}$ with maximal depth
        \State $PP \gets PP \cup \rf{\dt{\tree{x}} - \tree{x}'}$
        \State $\tree{x} \gets \tree{x}'$
    \EndFor
    \State Compute the roll $X$ of $PP$ at its smallest period
    \State $\forest{Y} \gets \cutUn{X}$
    \State Sort $\forest{Y}$
    \If{$\forest{Y} \nsubseteq \forest{X}$}
    	\State \Return $\perp$
    \EndIf
    \State $Sol \gets Sol \cup X$
    \State $\forest{X} \gets \forest{X} - \forest{Y}$
\EndWhile
\State \Return $Sol$
\end{algorithmic}
\label{algo:unroll_div2}
\end{algorithm}

    Algorithm~\ref{algo:unroll_div} implements the condition given by Lemma~\ref{lemma:period_bound},
    which costs a simple extra comparison (this is the inexpensive condition mentioned above),
    but reduces the complexity from $\bigo{m^{15}}$ to $\bigo{m^{9}}$.
    
    \begin{theorem}\label{th:algo_unroll_div_correct_an_poly}
        Algorithm~\ref{algo:unroll_div} is correct for the division of unrolls,
        and runs in $\bigo{m^9}$ time with $m$ the size of its inputs (two FDDS encoding the unrolls).
    \end{theorem}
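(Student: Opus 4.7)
The plan is to establish the theorem in two parts: correctness of Algorithm~\ref{algo:unroll_div}, and its $\bigo{m^9}$ running time. I would separate correctness into soundness (if the algorithm returns an FDDS $X$, then $\unroll{A}\,\unroll{X} = \unroll{B}$) and completeness (whenever some solution exists, the algorithm returns one rather than $\perp$).

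Soundness reduces to the explicit verification performed inside the while loop. The call to \texttt{divide} on line~\ref{line:callDivide} produces a finite forest $\forest{X}$ with $\cutUn{A}\,\forest{X} = \cutUn{B}$, and each candidate built by rolling a periodic pattern $PP$ is accepted only when $\forest{Y} = \cutUn{X}$ is contained in the current $\forest{X}$. Combined with the progressive subtraction $\forest{X} \gets \forest{X} - \forest{Y}$, we conclude at termination that the cut of $\unroll{A}$ multiplied by the cut of the returned solution equals $\cutUn{B}$, which lifts to the full unrolls by Proposition~\ref{prop:divideForestFini} and Lemma~\ref{lemma:equality}.

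Completeness is the main obstacle, and it is where Lemma~\ref{lemma:period_bound} plays the central role. The subtlety is that Algorithm~\ref{algo:unroll_div} cuts at depth $n = 2\alpha + \depth{B}$, which is \emph{linear} in $m$, whereas Theorem~\ref{theorem:cara_unroll_div} required the quadratic $2\alpha^2 + \depth{B}$ to reconstruct an unroll from the finite quotient. I would argue that the divisibility test $P[\tree{b}] \bmod P[\tree{a}] = 0$ recovers exactly what the extra depth was providing: by Lemma~\ref{lemma:period_bound}, whenever a true solution exists the smallest periods $|\Shift{\tree{a}}|$ and $|\Shift{\tree{x}}|$ both divide $|\Shift{\min(\Shift{\tree{a}})\min(\Shift{\tree{x}})}|$, which is itself a period of a tree of $\unroll{B}$ and hence at most $\alpha$, so the test cannot spuriously reject a genuine solution. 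Once it passes, the periodic pattern $PP$ extracted by peeling off the tallest subtree $b$ times coincides with the true one thanks to Lemma~\ref{lemme:RecoverPeridicsPart} applied at depth $n$, so rolling at the smallest period recovers the correct connected component of $X$.

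For the running time, with $n \in \bigo{m}$ each tree of $\cutUn{A}$ and $\cutUn{B}$ has $\bigo{m^2}$ vertices and the total sizes are $\bigo{m^3}$. The dominant cost is the call to \texttt{divide} on line~\ref{line:callDivide}, which is cubic in its input and hence runs in $\bigo{m^9}$. Sorting $\forest{X}$, precomputing the smallest-period table $P$ by one BFS per tree, the per-iteration divisibility check, the tallest-subtree extraction of $PP$, the roll of $PP$, and the subset/difference bookkeeping on $\forest{X}$ are all polynomial of strictly smaller degree and are absorbed into the $\bigo{m^9}$ bound.
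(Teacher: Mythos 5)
Your proposal is correct and follows essentially the same route as the paper: soundness via the final containment/subtraction checks lifted through Proposition~\ref{prop:divideForestFini}, completeness via Lemma~\ref{lemma:period_bound} guaranteeing the divisibility test never rejects a genuine solution, and a running time dominated by the cubic \texttt{divide} call on inputs of size $\bigo{m^3}$. Your added remark that the period-divisibility check compensates for cutting at linear rather than quadratic depth is exactly the point the paper makes in the discussion preceding the theorem.
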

    
    \begin{proof}
        Let $A,B$ be two FDDS.
        We assume that the algorithm return $Sol$. 
        Therefore $\cutUn{A}\cutUn{Sol} = \cutUn{B}$ and since $n$ is large enough, by Proposition~\ref{prop:divideForestFini}, we conclude that $\unroll{A}\unroll{Sol} = \unroll{B}$. 
        
        Now, we assume that there exists a FDDS $Y$ such that $\unroll{A} \unroll{Y} = \unroll{B}$.
        Then, $\cutUn{A} \cutUn{Y} = \cutUn{B}$.
        In addition, by Lemma~\ref{lemma:period_bound}, we deduce that $|\Shift{\tree{y}}|$ is a divisor of $|\Shift{\tree{b}}|$ for each $\tree{y} \in \unroll{Y}$ and $\tree{b} \in \{\min(\Shift{\tree{a}}) \mid \tree{a} \in \unroll{A}\} \times \min(\Shift{\tree{y}})$. 
        This implies that $\tree{y}$ admits a periodic pattern of size $|\Shift{\tree{b}}|$. 
        We conclude that $Sol$ is correct.
        
        For the complexity, since $n$ is in $\bigo{m}$, we deduce that the size
        of $\cutUn{A}$ and $\cutUn{B}$ are in $\bigo{m^3}$ and line~3 can be executed in $\bigo{m^3}$ time. 
        Thus line~5, 
        the division of $\cutUn{B}$ by $\cutUn{A}$
        using the algorithm from~\cite{gadouleau2022factorisation_dds}, requires $\bigo{m^9}$ time.
        Moreover, since the size of $\forest{X}$ is bounded by the size of $\cutUn{B}$, it follow that line~8 is $\bigo{m^9}$ time.
        Let $\forest{X} = \tree{x}_0 + \ldots + \tree{x}_k$ and $\{\tree{b}_0, \ldots, \tree{b}_k\}$ be a subset of $\cutUn{B}$ such that $\tree{a} \tree{x}_i = \tree{b}_i$ for all $i$ where $\tree{a}$ is a tree of $\{\min(\Shift{\tree{t}}) \mid \tree{t} \in \unroll{A}\}$. 
        Therefore, the number of nodes in $\tree{x}_i$ is less than or equal to the number of nodes in $\tree{b}_i$. 
        The complexity of the $i$-th iteration of the loop in lines~10--32
        is due to lines~27--33, namely $\bigo{m^4 \log m}$.
        The complexity of lines~20--26 is $\bigo{m_i}$,
        with $m_i$ the number of nodes in $\tree{b}_i$. 
        Indeed, the operations in these lines require only two breadth first searches of $\tree{x}_i$
        (the global pre-processing of line~20 and post-processing of line~26 improves the complexity of iterating line~22 to a linear total time;
        at line~23 a tree is identified with its root, hence for each operation we need a simple traversal of the neighbors).  
        The complexity of lines~11--19 is $\bigo{\alpha_A}$ with $\alpha_A$ the number of trees in $\cutUn{A}$.
        Finally, we need to add the complexity of lines~27--33 themselves.
        Since $\tree{x}_i$ is a tree of $\cutUn{X}$, we deduce that line~27 can be performed in $\bigo{m_i \sqrt{m_i}}$ time and the size of $\cutUn{X}$ is also in $\bigo{m_i\sqrt{m_i}}$. 
        This is due the face that, if $\cutTn{t}{n}$ has size in $\bigo{s}$, then the number of trees in $\cut{\Shift{\tree{t}}{n}}$ is in $\bigo{\sqrt{s}}$.  
        Thus, the complexity of line~28 is in $\bigo{m_i\sqrt{m_i} \log m_i}$ and the lines~29 and 33 are $\bigo{m^3}$ time.
        
        In summary, the complexity of the $i$-th iteration of the loop in lines~10--27 is $\bigo{\alpha_A + m_i + m_i\sqrt{m_i} \log m_i + m^3}$.
        Thus the total complexity of the loop is $\bigo{\sum_{i=0}^{k} \alpha_A + m_i + m_i\sqrt{m_i} \log m_i + m^3}$.
        However, $\sum_{i=0}^{k} \alpha_A = \alpha \le m$ and $\sum_{i=0}^{k} m_i \sqrt{m_i} \le m^3$.
        We deduce that $\bigo{\sum_{i=0}^{k} \alpha_A + m_i m_i\sqrt{m_i}\log m_i + m^3} = \bigo{m^4 \log m}$. 
        We conclude that complexity of the algorithm is $\bigo{m^9}$.
    \end{proof}

    To conclude this section, let us present two examples
    where our method works beyond the scope considered so far,
    that is, when our hypothesis on the connectedness of
    $\forest{x}$ (the result of the division) is weakened.
    
    We define that a FDDS $A$ is \emph{component-minimal} when
    the number of different shifts in $\unroll{A}$
    (\emph{i.e.}, the size of
    $\{\Shift{\tree{a}} \mid \tree{a}\in\unroll{A}\}$)
    is equal to the number of connected components of $A$. 
    Expressing the shifts of one component from its minimal unroll tree,
    a direct consequence of this definition is that, if we denote
    $A = A_0 + \ldots + A_n$ and
    $\unroll{A} = k_0 \Shift{\min(\unroll{A_0})} + \ldots + k_n \Shift{\min(\unroll{A_n})}$,
    then $\unroll{A_i} = k_i \Shift{\min(\unroll{A_i})}$ for each $i$.
    In other words, we can identify which unroll tree is rolled into which connected component of $A$.
    
    Symmetrically, we define that a FDDS $A$ is \emph{component-maximal} when
    the number of minimal tree copy of a shift in $\unroll{A}$
    (\emph{i.e.}, the size of the multiset $\{\{\tree{a}\mid \tree{a}\in\unroll{A}\text{ and }\tree{a}=\min(\Shift{\tree{a}})\}\}$)
    is equal to the number of connected components in $A$.
    In this case, for each copy of a shift, there is a connected component in $A$.
    A direct consequence of this definition is that, if we denote
    $\unroll{A} = k_0 \Shift{\tree{t}_0} + \ldots + k_n \Shift{\tree{t}_n}$,
    then we have $A = k_0 A_0 + \ldots + k_n A_n$ with $\unroll{A_i} = \Shift{\tree{t}_i}$ for each $i$.
    In other words, when we roll an unroll tree and obtain twice the same dynamics,
    then they belong to two different connected components of $A$.
    
    From Theorem~\ref{th:algo_unroll_div_correct_an_poly},
    we deduce that we can solve in polynomial time the equations of the form $A X = B$,
    when $X$ is component-minimal or component-maximal. 
    Indeed, for these two cases the rolls of $\unroll{X}$ are injective,
    and the following algorithm computes $X$.
    \begin{algo}Given two FFDS $A$ and $B$, we can compute $X$ such that $X$ is component-minimal
      or component-maximal and $AX=B$ (if any exists) by:
	    \begin{enumerate}
              \item computing $\unroll{X} = \cfrac{\unroll{A}}{\unroll{B}}$ with Algorithm~\ref{algo:unroll_div2};
	      \item finding the smallest periodic pattern of each tree in $\unroll{X}$;
	      \item rolling $\unroll{X}$ with the chosen period
	        (for component-minimal it is twice the number of copy of the shift,
	        and for component-maximal it is simply $2$);
	      \item and checking if the $A$ multiplied by the resulting FDDS is equal to $B$.
	    \end{enumerate}
    \end{algo}
    
    From the preceding exposure, the (difficult) remaining case is when, in $X$,
    a given shift appears $k$ times,
    but the number of connected components having this shift is strictly between $1$ and $k$.
    In this case, we do not know how to efficiently choose
    the period of roll for each unroll tree.

	\section{Conclusions}

In this article we have proven the cancellation property for products of unrolls
and established that the division of FDDSs is polynomial-time when searching for connected quotients only. 
Furthermore, we have proven that calculating the $k$-th root of a FDDSs is polynomial-time
if the solution is connected.
Moreover, we have shown that solving equations of the form $A X^k = B$ is polynomial if $X$ is connected.
Finally, we have demonstrated that we can compute in polynomial time a solution
to the equation $\unroll{A} \forest{x} = \unroll{B}$, and we have also presented a polynomial method
to compute the solution of $AX = B$ for two special form of $X$ (component-minamal and component-maximal). 
However, numerous questions remain unanswered.

The main direction for further investigation involves removing the connectivity condition as we have made at the end of Section~\ref{section:divison_unroll}.
The cancellation property of unrolls we proved, and the new polynomial-time algorithm for the division, suggest that the primary challenge for FDDS division lies in searching for the correct period of roll
rather than to consider the transients.
Another intriguing direction is solving general polynomial equations $P(X_1, \ldots, X_n) = B$
with a constant right-hand side $B$. While this appears to be at least as challenging as division,
some specific cases could yield more direct results.
For example, it was proved that $P(X) = B$ is polynomial time if $P$ is an injective polynomial \cite{poly_inj}.
Furthermore, the results of this work can improve the state of the art of the solution of $P(X_1, \ldots, X_n) = B$ where the polynomial $P$ is a sum of univariate monomials.
Indeed, a technique to solve (and enumerate the solutions) of this type of equations
in a finite number of systems of equations of the form $AX^k = B$ has been introduced.
Thus, our result, which is more efficient than previously known techniques,
can have a positive impact on the complexity of the proposed pipeline.
It would also be interesting to investigate whether our techniques also apply to finding nontrivial
solutions to equations of the form $XY = B$ with $X$ and $Y$ connected,
which would make it possible to increase our knowledge on the problem of irreducibility.

	\begin{credits}
        \subsubsection{\ackname}
          SR was supported by the French Agence Nationale pour la Recherche (ANR)
          in the scope of the project ``REBON'' (grant number ANR-23-CE45-0008)
          KP, AEP and MR by the EU project MSCA-SE-101131549 ``ACANCOS'',
          and KP, AEP, SR and MR by the project ANR-24-CE48-7504 ``ALARICE''.
        \end{credits}
    
	\bibliography{biblio}
	\bibliographystyle{splncs04}
	
\end{document}